\newcommand{\RN}[1]{  \textup{\uppercase\expandafter{\romannumeral#1}}}
\@date \else {\vskip3ex \centering\footnotesize\@date\par\vskip1ex}\fi
\else \@footnotetext{\@setdate}\fi}
\numberwithin{equation}{section}
\numberwithin{definition}{section}
\numberwithin{theorem}{section}
\numberwithin{proposition}{section}
\numberwithin{lemma}{section}
\numberwithin{corollary}{section}}
\newcolumntype{P}[1]{>{\centering\arraybackslash}p{#1}}
\newcolumntype{M}[1]{>{\centering\arraybackslash}m{#1}}
\newcommand\myfontsize{\fontsize{10pt}{12pt}\selectfont}
\newcommand{\E}{\mathbb{E}}
\newcommand{\bM}{\mathbf{M}}
\newcommand{\bT}{\mathbf{T}}
\newcommand{\bbT}{\mathbb{T}}
\newcommand{\R}{\mathbb{R}}
\newcommand{\I}{\mathbb{I}}
\renewcommand{\P}{\mathbb{P}}
\newcommand{\td}{\widetilde}
\newcommand{\lb}{\left(}
\newcommand{\rb}{\right)}
\newcommand{\T}{\mathcal{T}}
\newcommand{\cF}{\mathcal{F}}
\newcommand{\bfpiw}{\bm{\pi}^{w}}
\newcommand{\bfpip}{\bm{\pi}^{p}}
\newcommand{\bfpin}{\bm{\pi}^{n}}
\newcommand{\Bin}{\mathrm{Bin}}
\renewcommand{\td}{\widetilde}
\newcommand{\cW}{\mathcal{W}}
\newcommand{\one}{\mathbf{1}}
\newcommand{\w}{w}
\newtheorem{claim}{Claim}
\newtheorem{proposition}{Proposition}
\newtheorem{theorem}{Theorem}
\newtheorem{corollary}{Corollary}
\newtheorem{lemma}{Lemma}
\newtheorem{assumption}{Assumption}
\theoremstyle{remark}
\newtheorem{definition}{Definition}
\newtheorem{example}{Example}
\theoremstyle{remark}
\DeclareMathOperator*{\argmin}{argmin}
\DeclareMathOperator*{\arginf}{arginf}
\definecolor{myyellow}{HTML}{FFD500}
\definecolor{myblue}{HTML}{005BBB}
\title[The Transfer Performance of Economic Models]{The Transfer Performance of Economic Models}
\thanks{We thank Jiafeng Chen, Stefano DellaVigna, Nic Fishman, Matthew Gentzkow, Andrei Iakovlev, Michael Jordan, Brandon Kaplowitz, Paulo Natenzon, Kyohei Okumura, Pietro Ortoleva, Michael Ostrovsky, Ashesh Rambachan, Bas Sanders, Jesse Shapiro, Eisho Takatsuji, and Davide Viviano for helpful comments. We also thank  NSF grants 1851629 and 1951056 for financial support, and  the  Quest high performance computing facility at Northwestern University for computational resources and staff assistance. }
\author[]{Isaiah Andrews$^\S$}
\thanks{$^\S$Department of Economics, Harvard}
\author[]{Drew Fudenberg$^\ddag$}
\thanks{$^\ddag$Department of Economics, MIT}
\author[]{Lihua Lei$^\#$}
\thanks{$^\#$Stanford Graduate School of Business}
\author[]{Annie Liang$^\dag$}
\thanks{$^\dag$Department of Economics, Northwestern University}
\author[]{Chaofeng Wu$^*$}
\thanks{$^*$Department of Computer Science, Northwestern University}
\date{\today}
\begin{document}

\pagenumbering{gobble}

\begin{abstract}
Economists often use models estimated on data from a particular domain to make predictions in another. We provide a tractable formulation for this ``out-of-domain" prediction problem and define the \emph{transfer error} of a model  based on its performance in  a new domain. 
We derive finite-sample forecast intervals that are guaranteed to cover realized transfer errors with a user-selected probability when domains are i.i.d. We apply these intervals in an application to compare the transferability of economic models and black box algorithms, finding that black box algorithms outperform economic models for prediction within a domain, but generalize more poorly across domains.
\end{abstract}
\maketitle
\thispagestyle{empty} \vspace{-1em}

\newpage \pagenumbering{arabic} \setcounter{page}{1}

\section{Introduction}

Economists routinely  make predictions in environments where data is unavailable,  relying on evidence from related but distinct contexts. For example, an economist at a development agency may need to predict diffusion of microfinance takeup in one Indian village given data on diffusion in others. Or an economist at an insurance company may need to predict willingness-to-pay for certain insurance plans given data on willingness-to-pay for others.

Traditionally, economists address this challenge by estimating economic models on existing data and using the estimated parameters to make predictions in new domains. But the  predictive success of machine learning algorithms in several economics applications
\citep[e.g.][]{HartfordWrightLeytonBrown,Hofmanetal,banerjee2023selecting} raises the question of whether the economic model is essential in this process, or whether predictions would be improved by training and porting a flexible machine learning algorithm instead. This is an empirical question: On the one hand, machine learning methods are capable of uncovering novel patterns that existing models miss \citep{FudenbergLiang,Petersonetal,LudwigMullainathan}; on the other, many researchers believe that structured economic models capture fundamental regularities that generalize more reliably across domains \citep{CoveneyDoughertyHighfield,Athey2017,Beery_2018_ECCV,Manski2021}.  Understanding whether economic models do in fact transfer better across domains is important to understanding their  future role within economic analysis and policymaking.

Our paper provides a conceptual framework that formalizes the ``out-of-domain'' transfer problem, and allows systematic comparison between  economic modeling and machine learning algorithms. Relative to the large body of work on external validity described in Section \ref{sec:RelatedLit}, our out-of-domain test considers performance within a class of prediction problems rather than conditioning on a particular realization of the data available to the economist. For example, instead of assessing whether a structural model of information diffusion will transfer well from a specific Indian village to another, we ask whether the structural model transfers well ``in general'' across Indian villages. Formally, we consider an ex-ante perspective in which the economist is aware of a set of domains that are relevant (such as Indian villages) but does not know which specific transfer problem will realize from this class of possible transfer problems. 

Our main theoretical contribution is the construction of finite-sample forecast intervals that characterize how well a model or algorithm performs in new domains.\footnote{\label{fn:Forecast} We use the term ``forecast interval,'' rather than ``confidence interval,'' to reflect the random nature of the target, namely the \emph{realized} (rather than expected, median, etc.) transfer error, but they can also be viewed as confidence intervals for these random targets.} In the main text, we provide intervals that are valid for a benchmark setting where all domains are equally likely to realize for training and testing (corresponding to an assumption that domains are exchangeable) as well as for a setting in which the testing domain is qualitatively different from the training domains. (Appendix \ref{subsec:general_dist_shift}
generalizes our approach and results even further.) The definition of ``performance'' is left to the user, and our results apply for a large class of measures that includes our motivating example of predictive accuracy, as well as others, such as how well a qualitative finding based on parameter estimates generalizes across domains. 

Finally, we use our framework to compare the generalizability of economic models and black box machine learning methods when predicting certainty equivalents for lotteries. While economic models of risk preference have been  studied extensively from the perspectives of how well they fit economic data \citep{HarlessCamerer1994,HeyOrme1994,bruhin2010risk,bernheim2020empirical} and how their predictive performance compares to that of machine learning algorithms \citep{PeysakhovichNaecker2017,Plonskyetal,FKLM}, the existing tests have primarily been within the context of a single domain.\footnote{An important exception is \citet{Einavetal2012}, which examines how general risk preferences are across different domains of choice under uncertainty.} We offer a new perspective on how well these models transfer across domains, finding that although machine learning algorithms outperform the economic models when trained and tested on (disjoint) data generated under the same conditions, the economic models generalize better. 
Our analysis suggests the primary reason for this is not that the machine learning algorithms overfit, but rather that  economic models of risk preference are more effective at extrapolating from observed cases to new ones.

We now describe the main parts of the paper in more detail. Section \ref{sec:Setup} describes our conceptual framework, which extends the familiar notion of out-of-sample evaluation to out-of-domain evaluation. In the standard out-of-sample test, a model's free parameters are estimated on a training sample, and the predictions of the estimated model are evaluated on a test sample, where the observations in the training and test samples are disjoint but drawn from the same distribution.  We depart from this framework by supposing that the distribution of the data varies across a set of ``domains,'' e.g. different subject pools or choice frames. 

We adopt the perspective of an external \emph{analyst} who wants to assess the efficacy of a procedure that a \emph{researcher} uses to make predictions in a new domain. For example, the researcher's procedure might be to estimate an economic model on data from one domain  and use the estimated model to make predictions in another. Alternatively, the researcher's procedure might involve pooling data across  domains for estimation,  or instead training and porting a black box algorithm. The procedure's exact performance depends on the domains that are used  for estimation---the \emph{training domains}---and the domain that is realized for prediction---the \emph{target domain}. The analyst's goal is  to develop a forecast interval for the random performance of this procedure across the many possible realizations of those domains. This forecast interval can then be used to compare procedures for generalization.

Section \ref{sec:Baseline} constructs forecast intervals for the baseline setting where the distributions governing different domains are  exchangeable. This means that while there may be ex-post differences between the domains on which the model is estimated and on which it is applied, these differences are not ex-ante known to the analyst. 
We propose the following protocol: The analyst first collects data across as many domains as possible, henceforth the analyst's \emph{meta-data set}. The analyst then repeatedly splits these into training and test domains, and runs the researcher's procedure for each of these domain realizations. For example, if the researcher's procedure involves transferring an economic model, then the analyst would estimate the researcher's model on the training domains and use it to make predictions in the test domain. Intuitively, the transfer performance between  domains in the analyst's meta-data set can serve as a proxy for the transfer performance to the new (unobserved) target domain. Pooling these transfer performances across different choices of training and test domains yields an empirical distribution of transfer errors.  We show that the quantiles of this distribution can be used to compute valid finite-sample forecast intervals for the transfer error on the 
target domain. (Because in most applications, models are estimated on data from a small number of domains,   we exclusively consider finite-sample results in this paper.)

Section \ref{sec:RelaxIID}  extends our methods for the scenario where the training domains are known to be systematically different from the target domain. Specifically, we suppose that  the training domains are exchangeable, but allow the target domain to be governed by a qualitatively different distribution. We fully generalize our procedure and results when the analyst either knows or can bound  a likelihood ratio relating the target and training distributions. We further develop two partial orders for comparing how well models generalize in environments where the analyst lacks any knowledge at all about the likelihood ratio. Though this ordering is inherently incomplete, we demonstrate that it can be used to compare economic models and black box algorithms in our subsequent application. Each of these procedures (and all  other methods described in this paper) are implemented in an \texttt{R} package (\texttt{transferUQ}), available on Github.\footnote{\href{https://github.com/lihualei71/transferUQ}{https://github.com/lihualei71/transferUQ}} 

Our statistical framework and results extend the recent literature on conformal inference  
\citep[e.g. ][]{vovk2005algorithmic, tibshirani2019conformal, lei2021conformal}   and randomization inference
\citep[e.g.][]{ritzwoller2024randomization} by replacing the assumption of (weighted) exchangeable observations with that of (weighted) exchangeable domains.  Importantly, while conformal inference aims to predict a single data point from other data points, our goal is to generate forecast intervals 
on a function of multiple data points, involving both observed and unobserved data.
Our problem thus involves a multi-dimensional structure that falls outside of the standard framework and requires substantively new arguments.

Section \ref{sec:Application} applies our procedures to compare the  transferability of economic models and black box algorithms for predicting certainty equivalents for binary lotteries. In this application, we evaluate how well a model (or algorithm) predicts certainty equivalents reported by one subject pool when estimated on data from another. We evaluate two models of risk preferences (expected utility and cumulative prospect theory) and two popular black box machine learning algorithms (random forest and kernel regression). We find that although the black box algorithms outperform the economic models out-of-sample when trained and tested on data from the same domain, the economic models generalize more reliably across domains. Specifically, while the forecast intervals for the black box algorithms and economic models overlap, the forecast intervals for the black box methods are  wider, and their upper bounds are substantially higher. 

Why do the black boxes perform worse at transfer prediction? A natural explanation, based on intuition from conventional out-of-sample testing, is that black boxes are very flexible and hence learn idiosyncratic details that do not generalize across subject pools. But when we restrict the analysis to a subset of samples involving the same set of lotteries,  the economic models and black box algorithms have nearly indistinguishable forecast intervals. Alternatively, when we define the domains so that they involve disjoint lotteries, the improvement of the transfer performance of economic models over black box algorithms is even larger than in our main analysis. This tells us that black box algorithms are not universally worse at transfer, but rather their relative performance depends on specific characteristics of the transfer problem.   In particular, black boxes seem to transfer worse when the primary source of variation across samples is a shift in the marginal distribution over features (i.e., which lotteries appear in the sample), rather than a shift in the distribution of outcomes conditional on features (i.e., the distribution of certainty equivalents given fixed lotteries).

\subsection{Related Literature.} \label{sec:RelatedLit}

Our results apply for a broad class of definitions of transfer error, but our primary motivation is evaluating how well an economic model estimated on data from one domain predicts in another. \citet{Hofmanetal} gives an in-depth argument for why this important, calling for more work on the question ``how well does a predictive model fit to one data distribution generalize to another?" for social science models. This is exactly what we consider.

Predictive accuracy has long had a central role in experimental economics.\footnote{As discussed by \citet{HarlessCamerer1994}, the poor predictive performance of expected utility theory was a primary motivation for the development of  alternative models in behavioral economics. Both \citet{HarlessCamerer1994} and \citet{HeyOrme1994} provide early assessments of alternative theories on the basis of predictive performance.} While models have often been assessed based on how well they fit data from a given domain,
recent papers also examine how well predictions transfer across domains.  For example, \citet{kuzmics} estimates various game-theoretic models on $2\times 2$ normal-form games and evaluates their predictive performance on $3 \times 3$ normal-form games;  \citet{Natenzon2019} estimates discrete choice models on data for four choice menus and evaluates their predictive performance on a fifth menu; and \citet {fudenberg2024predicting} evaluates the cross-domain predictive accuracy of models of how players learn  in the infinitely-repeated prisoner's dilemma, where the domains are various sets of payoff matrices for the stage game.  This paper provides a general framework that nests these transfer problems, and develops formal statistical results for assessing transfer performance.
 
Transfer prediction is also an important consideration when comparing economic models with black box algorithms. Several papers have compared the predictive performance of machine learning algorithms with that of  more structured economic models in out-of-sample tests \citep{PeysakhovichNaecker2017,Notietal,Plonskyetal,CamererNaveSmith,FudenbergLiang,Kandori,Ke}, concluding that machine learning algorithms are more predictive. In the application we consider, black box methods turn out to be less effective at transferring predictions across domains. This result joins \citet{gechter2019evaluating}'s finding that structural models deliver better policy recommendations for conditional cash transfer policies in new contexts than black box methods do. We hope that our methods will be used to assess the transferability of economic models in other problems as well, contributing to a more comprehensive picture of how well economic models generalize.

Finally, our theoretical framework and results lie at the intersection of several literatures in economics, computer science, and statistics. These literatures consider several related but distinct objectives:  synthesizing evidence across different domains,  improving the quality of extrapolation from one domain to another, and  quantifying the extent to which insights from one domain generalize to another. 

The first objective, synthesizing results across different domains, is a particular focus of the literature on meta-analysis \citep{CardKrueger,dellavignapope,BandieraFischerPratYtsma,ImaiRutterCamerer,Vivalt}. Our goal   is instead to assess the cross-domain forecast accuracy of a model. These problems are related, and \cite{Meager2019} and \cite{Meager2022} in particular provide posterior predictive intervals for new domains in the context they consider.  Unlike our approach, the predictive intervals reported in those papers are valid only under a parametric model for the distribution of effects across domains  and a distributional assumption on the domain-specific effect estimates.

The second objective is to extrapolate results from one domain to another.  Within computer science, the literature on domain generalization (\citealt{DG1} and \citealt{DG2}) develops models that generalize well to new unseen domains \citep{DGSurvey}. Similarly, several papers within economics  (e.g., \citealt{Hotzetal2005} and \citealt{DehejiaPopElechesSamii}) use knowledge about the distribution of covariates to extrapolate out-of-domain. 
Our focus is not on developing new models or algorithms with good out-of-domain guarantees, but rather on developing forecast intervals for the out-of-domain performance of models and algorithms that are used in practice.

The third related objective in the literature is to understand the extent to which results obtained in one domain hold more generally, i.e. their external validity. Unlike  papers that assess the generalizability of insights from randomized control trials \citep[e.g.][]{Deaton,Imbens2010,dellavignalinos} or laboratory experiments \citep[e.g.][]{LevittList,AlUbaydliList},
here we consider generalizability across random domains.\footnote{Another set of papers study the generalizability of instrumental variables estimates \citep[e.g.][]{AngristFernandezVal,BertanhaImbens} and causal effects \citep[e.g.][]{PearlBareinboim2014,Tafti}.} 
Section \ref{sec:RelaxIID}  and Appendix \ref{subsec:general_dist_shift} relaxes the exchangeability assumption to allow bounded deviations from exchangeability; our results there connect to the literature on sensitivity analysis \citep[e.g.][]{rosenbaum2005sensitivity, aronow2013interval, AndrewsOster, sahoo2022learning}.

\section{ Framework}  \label{sec:Setup}

Consider a fixed procedure for extrapolating predictions across domains, e.g., estimating a structural economic model on data from one domain and using the estimated model  to make predictions in another. We adopt the perspective of an external analyst who wants to evaluate the effectiveness of  this procedure.  The analyst is not focused on extrapolation from one specific domain to another (e.g., from an American dataset to a German dataset), but would rather like to understand whether the procedure generally performs well across a class of transfer tasks (e.g., extrapolating across countries). To this end, the analyst evaluates transfer error from an ex-ante perspective without knowing which domains are used to  estimate and evaluate the model. The analyst seeks to construct forecast intervals for the procedure's error when transferring from a (random) set of training domains to a new (random) target domain. This speaks to the question of whether one procedure for extrapolation (such as transferring an estimated economic model) generally performs better than another (such as transferring a trained black box algorithm).

This section proceeds as follows: Section \ref{sec:TransferErrors} describes the extrapolation procedure that the analyst would like to evaluate, and a large class of measures for the  procedure's \emph{transfer error}. Section \ref{sec:Analyst} formalizes the analyst's problem.

\subsection{Transfer errors} \label{sec:TransferErrors}
Let $\mathcal{X}$ be a set of covariate vectors and $\mathcal{Y}$ be a set of outcomes. An \emph{observation} is a pair $(x,y) \in \mathcal{X} \times \mathcal{Y}$, and a \emph{sample} is a set of observations $S = \{(x_i,y_i)\}_{i=1}^m$. We consider samples $S_d$ indexed to \emph{domains} $d=1,2,\dots$, such as in the following  examples:

\begin{example}[Different Subject Pools]
Each sample $S_d$ corresponds to data observed for subjects from a given subject pool, where the subject pools possibly differ in their demographic characteristics. For example, $S_1$ may contain data from Caltech undergraduates, while $S_2$ contains data from a representative Prolific subject pool. \label{ex:SubjectPools}
\end{example}

\begin{example}[Different Choice Frames] Each sample $S_d$ corresponds to data collected under a particular framing of choice questions. For example, $S_1$ might contain the reported certainty equivalents for compound lotteries, and    $S_2$ the reported certainty equivalents for equivalent simple lotteries.

\end{example}

\begin{example}[Different Choice Menus] \label{Ex:ChoiceMenu} There is a finite set of goods $a_1,a_2,\dots,a_m$, and each sample $S_d$ includes observed choices from a different subset of available goods. For example, $S_1$ might contain all choices from binary menus and $S_2$ all choices from those menus that include  $a_1$.
\end{example}

\noindent For now we take these samples as given; Section \ref{sec:Analyst} lays out the underlying statistical model for how these samples are generated, which we will use to prove our results.

A researcher observes samples from some set of training domains $d \in \mathcal{T}$, and uses these samples $S_{\mathcal{T}} \equiv (S_d)_{ d\in \mathcal{T}}$ to make predictions in a new (yet unseen) target domain $d^*$. We will refer to $S_{\mathcal{T}}$ as the \emph{training samples} and to $S_{d^*}$ as the \emph{target sample}.

The number of training domains $r\equiv \vert \mathcal{T}\vert$ is a parameter of the research procedure, and should reflect what is done in practice. In economics, it is common to transfer quantitative conclusions from a single domain to another, e.g., for parameter calibration in structural models \citep{Greenwoodetal1997, McKayetal2016, Oswald2019} and extrapolation of treatment effect estimates beyond the experimental population \citep{mogstad2018identification, tipton2018review, cattaneo2021extrapolating, maeba2022extrapolation}. In this case $r=1$, and the relevant question is whether extrapolating from one sample leads to good predictions in the new domain. If instead  data is gathered from $r>1$ different domains and the observations are aggregated and used to estimate a model (as in the meta-analyses of \citealt{Meager2019,Meager2022}), the relevant question may be how well the estimated model on the aggregated data generalizes to a new domain, and $r>1$ is appropriate.

Our results apply to the following class of measures for transfer performance.

\begin{definition} A \emph{transfer error} is any quantity $e_{\mathcal{T},d^*}$ that can be written as a function of the training data $S_{\mathcal{T}}$, the target sample $S_{d^*}$, and potentially an independent noise variable.
\label{def:TransferError}
\end{definition}

Our leading examples are transfer errors that measure how well a fixed model or algorithm transfers across domains. That is, suppose the training samples $S_{\mathcal{T}}$ are used to select a \emph{prediction rule} $f_{S_{\mathcal{T}}}: \mathcal{X} \rightarrow \mathcal{Y}$, e.g., by estimating a parametric model or by fitting a black box algorithm.\footnote{That is, let $\mathcal{S}$ denote the set of all finite sets of finite samples, and let $\mathcal{Y}^{\mathcal{X}}$ be the set of all prediction rules. Then a ``model'' is a mapping $\rho: \mathcal{S} \rightarrow \Delta(\mathcal{Y}^{\mathcal{X}})$ and we write $f_{S_{\mathcal{T}}} = \rho(S_{\mathcal{T}})$ for the realized prediction rule.} The accuracy of the prediction rule is evaluated using a loss function $\ell: \mathcal{Y} \times \mathcal{Y} \rightarrow \mathbb{R}_+$, where
\[e(f,S) = \frac{1}{\# S} \sum_{(x,y) \in S} \ell(f(x),y)\]
denotes the average loss when using $f$ to predict $y$ given $x$ for observations $(x,y) \in S$. One specification of a transfer error is then 
\begin{equation} \label{eq:TransferError}
    e_{{\mathcal{T}},d^*} = e(f_{S_{\mathcal{T}}},S_{d^*})
\end{equation}
i.e., the raw error of the model when it is estimated on the training samples and used to predict outcomes in the target sample.

\begin{example}[Transferring Models of Risk Preferences] The covariates $\mathcal{X}$ describe different lotteries, i.e., each covariate vector $x$ includes a description of (say) two possible prizes and their corresponding probabilities. The outcome $y$ is the average willingness-to-pay for this lottery. A firm acquires willingness-to-pay data from consumers in Illinois for a given set of lotteries, and uses this data to estimate a model of risk preferences, e.g., estimating parameter values for an Expected Utility model with CRRA preferences. The firm then uses this estimated model to predict willingness-to-pay from consumers in California for a different set of lotteries. The measure in (\ref{eq:TransferError}) assesses the accuracy of those predictions. \label{ex:Risk}
\end{example}

Any normalization of equation (\ref{eq:TransferError}) with respect to a function of the target sample is also a transfer error. For example, we might normalize (\ref{eq:TransferError}) with respect to the in-sample error of the model when trained on the target sample, 
\begin{equation} \label{def:TransferDeterioration}
e_{\bold{T},n+1}=\frac{ e(f_{S_\bold{T}},S_{n+1})} {e(f_{S_{n+1}},S_{n+1})}.
\end{equation}
This quantity reveals how much less accurate the model is than if it had been directly trained on the target sample. 

\begin{example}[The Value of Re-Estimating Diffusion Models] The covariates $\mathcal{X}$ describes the network of relationships across households in a village, and the identity of households which are seeded with information about a microfinance program. The outcome $y$ is the average takeup rate of the program across households. A development economist observes the takeup decisions in a single village in India following an experiment in which certain households are seeded with information about the program. The economist uses this data to estimate a structural model of information diffusion, and then  predicts the average takeup rate  in a new village using the estimated model. The transfer error in (\ref{def:TransferDeterioration}) assesses how much less accurate this prediction is compared to if the economist could re-estimate the structural model on data from this new village.
\end{example}

Although we focus on the transfer errors defined in (\ref{eq:TransferError}) and (\ref{def:TransferDeterioration}), 
Definition \ref{def:TransferError} is substantially broader. Appendix \ref{app:OtherTransfer} describes several other specifications of transfer errors, including the stability of parameters and errors in counterfactual predictions.

\subsection{The analyst's problem} \label{sec:Analyst}

We now consider the perspective of an external analyst, who would like to evaluate the transfer guarantees of the procedure described above.  Rather than assessing the transfer error $e_{\mathcal{T},d^*}$ for a specific set of training domains $\mathcal{T}$ and  target domain $d^*$, the analyst considers a random version of this quantity, where the samples used for training and evaluation of the model are not yet known. In Example \ref{ex:Risk}, this corresponds to an analyst who is interested in how well the CRRA model transfers across arbitrary locations,  as opposed to from  one specific location to another.

Formally, the analyst has access to \emph{metadata} consisting of $n$ samples
\[\bold{M} = \{S_1, \dots, S_{d}, \dots  S_n\}.\]
We assume that $n>r$; that is, the analyst can collect a larger number of samples than were used by the researcher. The analyst models the researcher's set $\mathcal{T}$ of $r$ training domains as drawn uniformly over all subsets of $\{1, \dots,n\}$ of size $r$. We use $\bold{T}$ to mean the random variable whose realization is $\mathcal{T}$, so that $S_{\bold{T}}=(S_d)_{d\in \bold{T}}$ is the researcher's (random) vector of training samples. The target domain (on which predictions will be made) is a final sample $S_{n+1}$, which, unlike the metadata, is not observed by the analyst.  The quantity of interest is $e_{\bold{T},n+1}$, i.e., the random transfer error when the researcher extrapolates predictions from $S_{\bold{T}}$ to $S_{n+1}$. Figure \ref{fig:illustration} depicts this transfer error for the model transfer specification of (\ref{eq:TransferError}).

\begin{figure}
\begin{center}
    \includegraphics[scale=0.4]{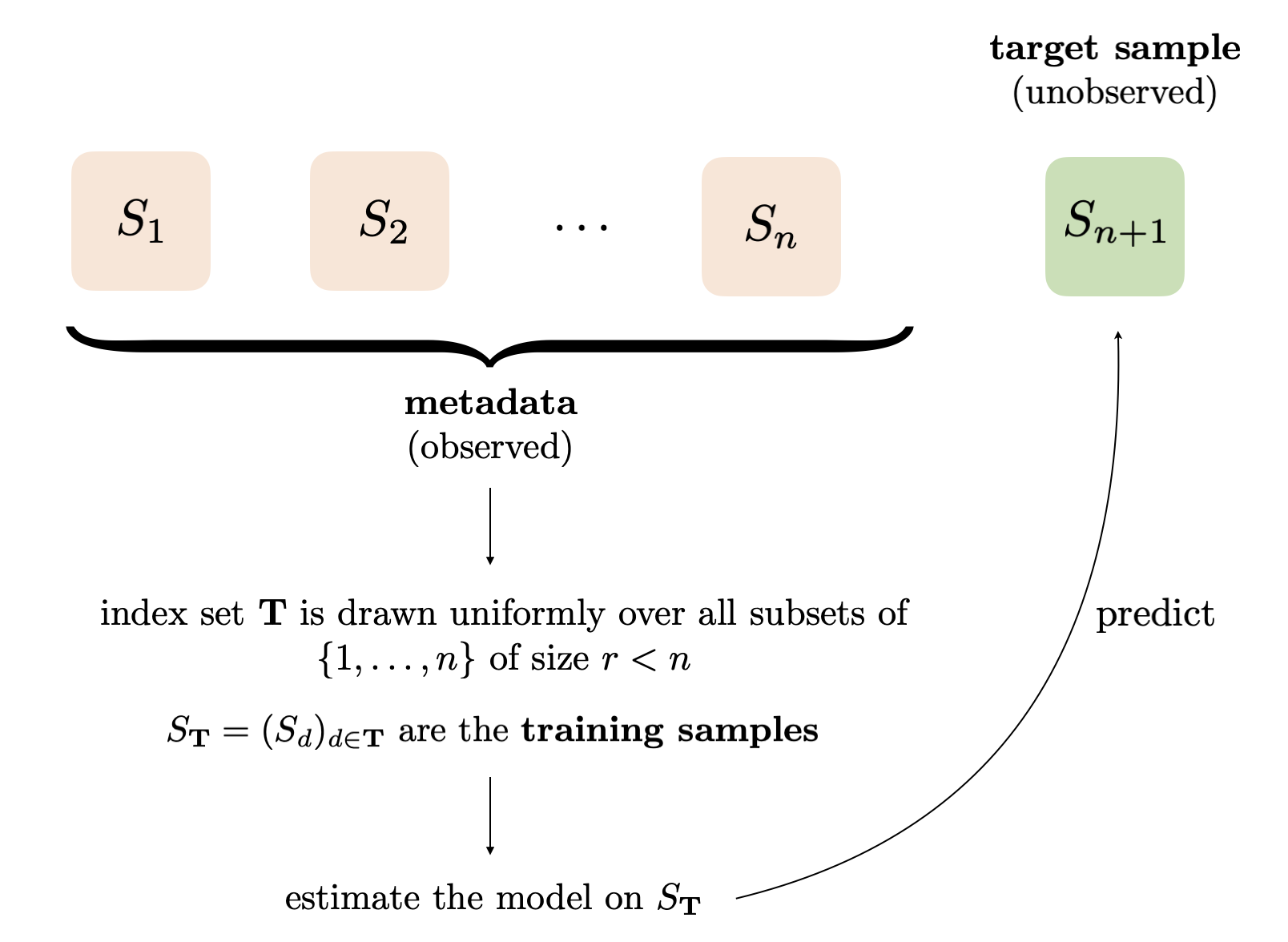}
\end{center}
\caption{This figure depicts the transfer error $e_{\bold{T},n+1}$ as defined according to (\ref{eq:TransferError}). It measures the prediction error of a model estimated on the samples $S_{\bold{T}}$ and evaluated on the sample $S_{n+1}$, where $S_{\bold{T}}$ consists of $r$ random training samples from the metadata, and $S_{n+1}$ is an unobserved sample from a new domain.} \label{fig:illustration}
\end{figure}

The analyst's goal is to develop forecast intervals for the transfer error $e_{\bold{T},n+1}$, i.e., interval-valued functions of the meta-data $\bold{M}$ which cover  $e_{\bold{T},n+1}$ with the prescribed probability, regardless of the distribution $\mu$ that governs samples across domains. Section \ref{sec:Baseline} provides these intervals for a baseline setting in which domains are exchangeable, and Section \ref{sec:RelaxIID} generalizes them to a setting in which the observed domains are systematically different from the target domain.

\section{Baseline Setting: Exchangeability
} \label{sec:Baseline}

We begin with a baseline setting in which the different domains are governed by exchangeable distributions. Specifically, we suppose that in the analyst's statistical model, the samples $S_1,S_2, \dots$ are generated in the following way: 

\begin{assumption} There is a fixed (but unknown) meta-distribution  $\mu\in\Delta(\mathcal{P} \times\mathbb{N})$ over joint distributions $\mathcal{P} \equiv \Delta(\mathcal{X} \times \mathcal{Y})$ and sample sizes $\mathbb{N}$, where each sample $S_d$ is generated by first drawing a distribution and sample size $(P_d,m_d) \sim \mu$, and then independently drawing $m_d$ observations  $(x,y)$ from $P_{d}$.\footnote{All of our results  extend unchanged if samples from the different domains are ex-ante exchangeable rather than i.i.d..} \label{assp:IID}\end{assumption}

In Examples \ref{ex:SubjectPools}-\ref{Ex:ChoiceMenu}, this assumption implies that (from the analyst's perspective) the subject pools, choice frames, or choice menus differentiating the samples are themselves drawn i.i.d. from a fixed distribution. Assumption \ref{assp:IID} is standard in conformal inference \citep{vovk2005algorithmic}, permutation testing \citet{romano1990behavior}, and randomization inference \citep{ritzwoller2024randomization}.\footnote{It can also be understood as a  Bayesian hierarchical model \citep{Meager2019, Meager2022} or a version of cluster sampling \citep{liang1986longitudinal,bugni2023inference}.If framed in this way, the analyst's goal is to do predictive inference for new clusters. When $\mu$  assigns probability $1$  to  a single distribution in $p\in\mathcal{P}$ or when $\mu$ assigns probability $1$ to $m=1$,  this reduces to i.i.d.\ sampling of observations from a fixed joint distribution, but our focus is on settings where neither of these is the case. } 
 In contrast,  the literature on external validity (see Section \ref{sec:RelatedLit})  typically assumes that the distributions governing behavior in different domains are close in some distance metric \citep{adjaho2023externally}, share a common support over $\mathcal{X}$ or $\mathcal{Y}$ \citep{sahoo2022learning,lei2023policy}, or can be estimated using background covariates \citep{tipton2018review}. Relative to these assumptions, our approach has the advantage of allowing for arbitrary and unknown relationships between the realized distributions governing domains, but 
  it  rules out ex-ante predictable patterns in how the joint distribution varies across samples (such as time trends). 

  Section \ref{sec:Procedure} presents our forecast intervals for this setting, and Section \ref{sec:ResultsIID} proves the validity and tightness of these intervals.
  
\subsection{Baseline
procedure} \label{sec:Procedure}
Thanks to Assumption \ref{assp:IID},  the observed samples in the metadata, $\{S_1, \dots, S_n\}$, can act as surrogates for the unseen target sample, $S_{n+1}$. 
  As before, let $e_{\mathcal{T},d}^\bM$ denote the (observed) transfer error from any selection of training samples $\mathcal{T} \subseteq \{1,\dots,n\}$ to any surrogate target sample $d \in \{1,\dots,n\}\backslash \mathcal{T}$ from the metadata (where we now make the dependence of this quantity on $\bM$ explicit). 
We use  $\bbT_{r+1,n}$ to denote the set of $\frac{n!}{(n-r-1)!}$ unique pairs $(\mathcal{T},d)$ that can be constructed in this way.
Then 
\begin{equation} \label{def:FM}
F_{\bM}=\frac{(n-r-1)!}{n!} \sum_{(\T,d) \in \bbT_{r+1,n}} \delta_{e_{\T,d}^{\bM}}
\end{equation}
is the empirical distribution of transfer errors in the pooled sample 
$\left\{ e^{\bM}_{\T,d} : (\T,d) \in \bbT_{r+1,n}\right\}$ as we vary which samples in the metadata are used for training and testing. (Throughout $\delta$ denotes the 
Dirac measure).  In the case where $r=1$, so that a single sample is used for training, the observed transfer errors can be represented as a matrix as depicted in Figure \ref{fig:matrix}, and $F_{\bold{M}}$ is their empirical distribution.

\begin{figure}[h]
    \centering
    \includegraphics[scale=0.23]{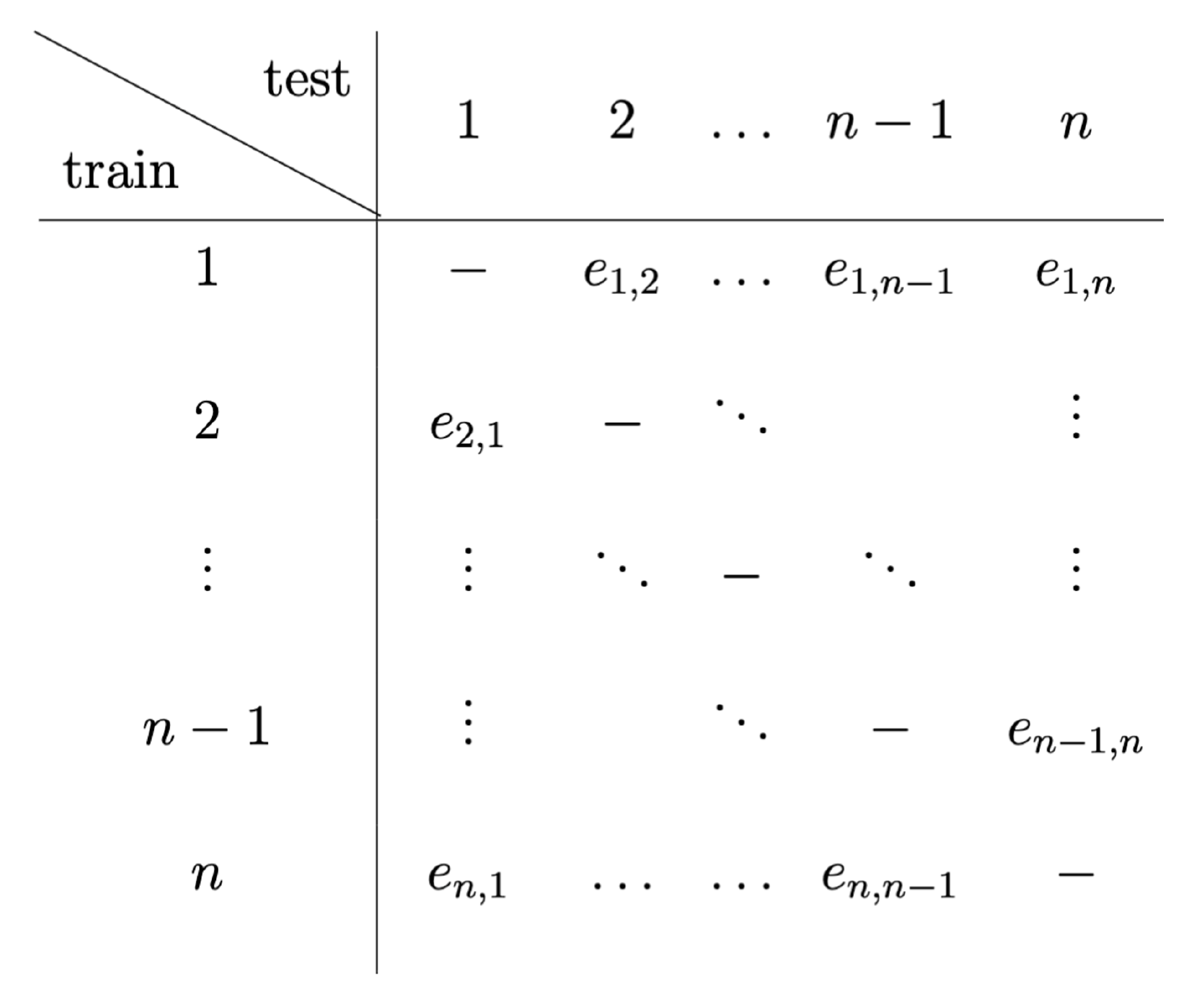}
    \caption{$e_{d,d'}$ is the transfer error from sample $S_d$ to $S_{d'}$.}
    \label{fig:matrix}
\end{figure}
\begin{definition}[Upper and Lower Quantiles] 
For any distribution $P$ let 
$\overline{Q}_{\tau}(P)=\inf\{b:P((-\infty, b]) \ge \tau \}$ 
and 
$\underline{Q}_{\tau}(P)=\sup\{b: P([b, \infty)) \ge 1-\tau\}$
denote the upper and lower $\tau$th quantiles, respectively.
\end{definition}
These  quantiles coincide for continuously distributed variables with connected support.
\begin{definition}[Quantiles of $F_\bM$] For any $\tau \in (0,1)$, let 
$\overline{e}_\tau^\bM \equiv \overline{Q}_{\tau}(F_\bM)$ and
$\underline{e}_\tau^\bM \equiv \underline{Q}_{1-\tau}(F_{\bM})$ be the $\tau$th upper quantile and $(1-\tau)$th lower quantile of the empirical distribution of transfer errors in the pooled sample.  \end{definition}
Our proposed forecast interval for the transfer error on the target sample is $[\underline{e}_\tau^\bM,\overline{e}_\tau^\bM]$.


\subsection{ Results} \label{sec:ResultsIID}

We first prove that $[\underline{e}_\tau^\bM,\overline{e}_\tau^\bM]$ is indeed a valid forecast interval.

\begin{proposition} \label{prop:pooled, percentiles}  For any $\tau \in (0,1)$,
  \begin{equation} \label{eq:OneSidedCI}
    \P\lb e_{\bT , n+1}\le \bar{e}_{\tau}^\bM\rb\ge \tau\left(\frac{n-r}{n+1}\right),
  \end{equation}
  and
  \begin{equation*}
    \P\lb e_{\bT , n+1}\in \left[\underline{e}_{\tau}^\bM, \bar{e}_{\tau}^\bM\right]\rb\ge (2\tau - 1)\left(\frac{n-r}{n+1}\right) .
  \end{equation*}  
\end{proposition}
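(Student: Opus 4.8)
The plan is to exploit the exchangeability of the samples $S_1,\dots,S_{n+1}$ under $\mu$ and a symmetrization/exchangeability argument over which index plays the role of the ``target.'' First I would augment the observed metadata $\bM=(S_1,\dots,S_n)$ with the target sample $S_{n+1}$ to form $\bM^+ = (S_1,\dots,S_{n+1})$, whose $n+1$ entries are iid draws from $\mu$, hence exchangeable. The key observation is that the collection of transfer errors $\{e_{\T,d}^{\bM^+} : (\T,d) \in \bbT_{r+1,n+1}\}$ — now ranging over \emph{all} $r+1$-subsets-with-distinguished-last-coordinate of $\{1,\dots,n+1\}$, including those that use $S_{n+1}$ either in the training role or the test role — is a symmetric function of the exchangeable tuple $\bM^+$. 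Therefore the rank of the particular error $e_{\bT,n+1}$ (where $\bT$ is the random $r$-subset of $\{1,\dots,n\}$) within this enlarged pooled sample is approximately uniform, in the standard way that the rank of one exchangeable observation among $N$ is uniform on $\{1,\dots,N\}$. This yields $\P(e_{\bT,n+1} \le \overline{Q}_\tau(F_{\bM^+})) \ge \tau$ up to the usual $\le 1/(N)$-type correction, where $F_{\bM^+}$ is the empirical distribution over the enlarged pool.

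The second step is to pass from the (unobservable) enlarged pool $F_{\bM^+}$ back to the observable pool $F_{\bM}$. The enlarged index set $\bbT_{r+1,n+1}$ decomposes into: (i) tuples $(\T,d)$ with $d = n+1$ and $\T \subseteq \{1,\dots,n\}$; (ii) tuples with $n+1 \in \T$ (so $S_{n+1}$ is used for training); and (iii) tuples entirely within $\{1,\dots,n\}$, which are exactly the pooled errors defining $F_{\bM}$. Only the type-(iii) errors are observable. The fraction of type-(iii) tuples among all of $\bbT_{r+1,n+1}$ is $\frac{(n-r)(n-r-1)\cdots}{(n+1)n\cdots} $; the relevant ratio works out to $\frac{n-r}{n+1}$ after the combinatorics (the number of type-(iii) tuples is $\frac{n!}{(n-r-1)!}$ out of $\frac{(n+1)!}{(n-r)!}$ total). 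Since $F_{\bM}$ is $F_{\bM^+}$ restricted and renormalized onto these type-(iii) atoms, an upper quantile of $F_{\bM}$ at level $\tau$ is an upper quantile of $F_{\bM^+}$ at some level no smaller than $\tau \cdot \frac{n-r}{n+1}$ (because the observable atoms carry at least a $\frac{n-r}{n+1}$ share of the full mass, in a suitably conditional sense), which converts the clean bound from step one into the stated bound $\P(e_{\bT,n+1} \le \overline{e}_\tau^\bM) \ge \tau\big(\tfrac{n-r}{n+1}\big)$. Actually I expect the cleaner route is: condition on the unordered multiset $\{S_1,\dots,S_{n+1}\}$; then $S_{n+1}$ is a uniformly random element of it, and one directly compares $e_{\bT,n+1}$ to the quantiles of the conditional pooled distribution, tracking exactly how many of the $\binom{n+1}{r+1}(r+1)$ ordered configurations land $e_{\bT,n+1}$ below $\overline{e}_\tau^\bM$ versus the total.

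For the two-sided statement, I would apply the one-sided bound in both directions: $\P(e_{\bT,n+1} > \overline{e}_\tau^\bM) \le 1 - \tau\frac{n-r}{n+1}$ and, by the symmetric argument for lower quantiles, $\P(e_{\bT,n+1} < \underline{e}_\tau^\bM) \le 1 - \tau\frac{n-r}{n+1}$, and then use the union bound:
\begin{equation*}
\P\!\left(e_{\bT,n+1} \in [\underline{e}_\tau^\bM, \overline{e}_\tau^\bM]\right) \ge 1 - 2\!\left(1 - \tau\tfrac{n-r}{n+1}\right) = 2\tau\tfrac{n-r}{n+1} - 1.
\end{equation*}
The main obstacle I anticipate is the bookkeeping in step two: carefully accounting for the fact that $e_{\bT,n+1}$ is \emph{not} exchangeable with an arbitrary element of the enlarged pool (it has a fixed test index $n+1$ and training indices restricted to $\{1,\dots,n\}$), so the naive ``rank is uniform on $\{1,\dots,N\}$'' claim must be replaced by a more careful conditional-exchangeability argument that correctly produces the factor $\frac{n-r}{n+1}$ rather than, say, $\frac{1}{n+1}$ or no correction at all; getting the combinatorial constant exactly right — and verifying that the quantile comparison survives the renormalization from $F_{\bM^+}$ to $F_{\bM}$ when masses are not evenly spread (ties) — is the delicate part, and is presumably where the upper/lower quantile distinction in the definitions earns its keep.
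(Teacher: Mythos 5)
Your proposal is correct and follows essentially the same route as the paper's proof: condition on the unordered multiset $\{S_1,\dots,S_{n+1}\}$ so that, by exchangeability, $e_{\bT,n+1}$ is a uniform draw from the enlarged pool indexed by $\bbT_{r+1,n+1}$, lower-bound the observed quantile $\bar{e}_\tau^\bM$ by the corresponding quantile of the enlarged pool via the subset/counting argument that yields the factor $\tfrac{n-r}{n+1}$, and obtain the two-sided bound by applying the one-sided bound in both directions with a union bound. The only cosmetic difference is your hedging in step one: conditional on the unordered set the uniform-draw statement is exact (no ``$1/N$-type correction'' is needed), and the non-exchangeability worry you raise is resolved precisely by the conditioning you yourself propose, which is what the paper does with its $\bar{E}_{k,\tau}\ge\bar{E}^*_\tau$ comparison.
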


\noindent Thus $\left(-\infty, \overline{e}_\tau^\bM\right]$ is a level-$\left(\frac{\tau(n-r)}{n+1}\right)$  one-sided forecast interval for  $e_{\bT,n+1}$,  
and  $\left[\underline{e}_{\tau}^\bM, \bar{e}_{\tau}^\bM\right]$ is a level-$\left((2\tau - 1)\left(\frac{n-r}{n+1}\right)\right)$ forecast interval for $e_{\bT,n+1}$.

Parameter $\tau$ influences the width of the forecast interval, where larger choices of $\tau$ lead to wider forecast intervals with higher confidence guarantees. Parameter $r$ determines how many samples in the meta-data are used for training versus testing. As discussed in Section \ref{sec:TransferErrors}, $r$ is determined by the the research procedure under evaluation.\footnote{We expect that in general, larger choices of $r$ will lead to lower but wider forecast intervals, since the model is estimated on a larger quantity of data, but there are fewer samples with which to evaluate the performance of the estimated model.}

The number of samples $n$ and the sizes of these samples $(m_d)_{d=1}^n$ enter into our result in different ways: Increasing the number of observed domains $n$, holding fixed the distribution over sample sizes within each domain,  does not change the distribution of $e_{\bT, n+1}$ but instead allows this distribution to be estimated more precisely.
In contrast,  increasing the number of observations per domain changes the distribution of $e_{\bT, n+1}$ and corresponds to the measurement of a different quantity. For example, in the limit of infinitely many  observations per sample, the error $e_{\bT,n+1}$ measures how well the best predictor from the model class in the training domains transfers across domains, while if the number of observations is small,  $e_{\bT,n+1}$ measures how well an imperfectly estimated model transfers.

 The next result shows that the guarantees in Proposition \ref{prop:pooled, percentiles} are tight to $O(1/n)$. We use $\bbT_{s,t}$ to denote the set of all vectors of length $s$ that consist of distinct elements from $\{1,\dots,t\}$.

\begin{claim}\label{thm:tight_coverage}
  Assume that 
$\left(e^\bM_{\T,d}: (\T,d) \in \bbT_{r+1, n+1}\right)$ almost surely  has no ties. Then
  \begin{equation*}
    \P\lb e_{\bT , n+1}\le \bar{e}^\bM_{\tau}\rb\le \tau \left(\frac{n-r}{n+1}\right) + \frac{r+1}{n+1} + \frac{(n-r)!}{(n+1)!}.
  \end{equation*}
  and
  \begin{equation*}
    \P\lb e_{\bT , n+1}\in \left[\underline{e}^\bM_{\tau}, \bar{e}^\bM_{\tau}\right]\rb\le (2\tau-1)\left(\frac{n-r}{n+1}\right) + \frac{r+1}{n+1} + \frac{(n-r)!}{(n+1)!}.
  \end{equation*}  
\end{claim}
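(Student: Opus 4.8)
The plan is to use the same exchangeability structure that drives the lower bound in Proposition \ref{prop:pooled, percentiles}, but now to bound the coverage probability from \emph{above}. The key object is the collection of ``augmented'' errors $\left(e^\bM_{\T,d} : (\T,d) \in \bbT_{r+1,n+1}\right)$, where now the target sample $S_{n+1}$ is allowed to play the role of either a training sample or the test sample, exactly as in the proof of the lower bound. Under the no-ties assumption, these values are almost surely distinct, so ranks are well-defined. First I would recall the representation of the event $\{e_{\bT,n+1} \le \bar e^\bM_\tau\}$ in terms of the rank of $e_{\bT,n+1}$ within the pooled sample $\{e^\bM_{\T,d} : (\T,d)\in\bbT_{r+1,n}\}$ on the $n$ original domains: $\bar e^\bM_\tau = \overline Q_\tau(F_\bM)$ is the $\lceil \tau N\rceil$-th order statistic of that pooled sample, where $N = n!/(n-r-1)!$ is its cardinality, so the coverage event is (up to the ties-vs-no-ties subtlety, which the hypothesis removes) the event that the rank of $e_{\bT,n+1}$ among these $N$ values is at most $\lceil \tau N\rceil$.

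The core step is an exchangeability argument. Because the domain-specific distributions and sample sizes $(P_1,m_1),\dots,(P_{n+1},m_{n+1})$ are iid, the random tuple of errors indexed by $\bbT_{r+1,n+1}$ is exchangeable under relabeling of the $n+1$ domains: permuting which domain is called ``$n+1$'' does not change the joint law. I would make this precise by writing the full pooled sample over \emph{all} $n+1$ domains, $\mathcal{A} = \{e^\bM_{\T,d} : (\T,d)\in\bbT_{r+1,n+1}\}$, whose size is $M = (n+1)!/(n-r)!$, and arguing that the rank of the particular entry $e_{\bT,n+1}$ (for $\bT$ uniform over size-$r$ subsets of $\{1,\dots,n\}$ and the test index equal to $n+1$) is approximately uniform on $\{1,\dots,M\}$. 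The discrepancy from exact uniformity is exactly what produces the correction terms: not every $(\T,d)\in\bbT_{r+1,n+1}$ has $d\le n$ and $\T\subseteq\{1,\dots,n\}$ — a fraction of these tuples involve the index $n+1$ in the training set $\T$ or set $d=n+1$. I would count these ``bad'' tuples: the fraction with $d = n+1$ is $1/(n+1)$, and the fraction with $n+1\in\T$ (but $d\le n$) is $r/(n+1)$, together giving the $\frac{r+1}{n+1}$ term; the term $\frac{(n-r)!}{(n+1)!}$ is the granularity correction from passing between the rank threshold $\lceil\tau N\rceil$ on $N$ values and the corresponding probability, i.e.\ the rounding-up in the ceiling divided by $M$. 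Assembling: $\P(e_{\bT,n+1}\le \bar e^\bM_\tau)$ is at most (uniform-rank probability) plus (fraction of bad tuples) plus (ceiling correction), which I would show equals $\tau\frac{n-r}{n+1} + \frac{r+1}{n+1} + \frac{(n-r)!}{(n+1)!}$.

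For the two-sided statement I would apply the one-sided bound twice: the event $e_{\bT,n+1}\in[\underline e^\bM_\tau, \bar e^\bM_\tau]$ has complement contained in $\{e_{\bT,n+1} > \bar e^\bM_\tau\} \cup \{e_{\bT,n+1} < \underline e^\bM_\tau\}$, and by the symmetric version of the argument above (using that $\underline e^\bM_\tau = \underline Q_{1-\tau}(F_\bM)$) each of these has probability at least $1 - \tau\frac{n-r}{n+1} - (\frac{r+1}{n+1} + \frac{(n-r)!}{(n+1)!})$; subtracting the union bound on the complement from $1$ gives the stated two-sided upper bound $2\tau\frac{n-r}{n+1} - 1 + 2(\frac{r+1}{n+1} + \frac{(n-r)!}{(n+1)!})$. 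The main obstacle I anticipate is the bookkeeping in the exchangeability/rank step: carefully matching the ceiling in the definition of the upper quantile $\overline Q_\tau$ against the counting of tuples in $\bbT_{r+1,n+1}$ versus $\bbT_{r+1,n}$, and making sure the no-ties hypothesis is invoked at precisely the point where ranks need to be unambiguous (so that the coverage event is exactly a rank event with no boundary slack in the wrong direction). Everything else is a union bound and elementary counting.
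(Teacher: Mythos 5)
Your proposal is correct and takes essentially the same route as the paper's proof: condition on the unordered samples, use exchangeability so that the realized train/test tuple is uniform over $\bbT_{r+1,n+1}$, note that the observed quantile $\bar{e}^{\bM}_{\tau}$ has rank at most $\lceil \tau\, n!/(n-r-1)!\rceil$ plus the number of excluded tuples involving the target index within the full pooled multiset, and finish the two-sided bound by symmetry --- this is exactly the paper's comparison of $\bar{E}_{k,\tau}$ with the corresponding order statistic of the full collection $\cC$, just phrased in rank language. One cosmetic point: under the no-ties hypothesis the rank of $e_{\bT,n+1}$ in the full pooled sample is \emph{exactly} uniform, so the correction terms arise not from a ``discrepancy from uniformity'' but, as your own counting then correctly shows, from the quantile being computed on the leave-target-out subsample (the $\tfrac{r+1}{n+1}$ term) plus the ceiling/granularity term $\tfrac{(n-r)!}{(n+1)!}$, and the two tail events in your final step add because they are disjoint rather than by a union bound.
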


To gain intuition for the intervals in Proposition \ref{prop:pooled, percentiles}, fix a realization of the unordered set  $\{S_1, \dots, S_n, S_{n+1}\}$. Because all samples are exchangeable by assumption, the realization of $e_{\bT,n+1}$ (conditional on $\{S_d\}_{d=1}^{n+1}$) is a uniform draw from
\begin{equation} \label{eq:PooledSample}
\left\{e_{\T,d}^{\bM}: (\T,d) \in \bbT_{r+1,n+1}\right\}.
\end{equation}
If we let $e^*_\tau$ denote the upper $\tau$-th quantile of this empirical distribution, then by definition 
\begin{equation} \label{eq:FullSample}
\mathbb{P}\left(e_{\bT,n+1} \leq e^*_\tau \mid \{S_d\}_{d=1}^{n+1}\right) \geq \tau.
\end{equation}
In the case $r=1$ where precisely one sample is used for training, the set of pooled errors (\ref{eq:PooledSample}) is the shaded cells in Figure \ref{fig:Matrix} (either yellow or blue), and the inequality in (\ref{eq:FullSample}) says that the probability that the value of a randomly drawn  cell falls below the $\tau$th upper quantile of  cells is at least $\tau$.

\begin{figure}[h]
    \centering
    \renewcommand{\arraystretch}{1.5}
    \begin{tabular}{c|M{10mm}M{10mm}M{10mm}M{10mm}M{10mm}M{10mm}}
        \hbox{\diagbox{\text{train}}{\text{test}}} & 1 & 2 & \dots & n-1 &  n & n+1 \\
        \hline 
        1 &  {-} & \cellcolor{myyellow}{$e_{1,2}$} & \cellcolor{myyellow}{$\dots$} & \cellcolor{myyellow}{$e_{1,n-1}$} & \cellcolor{myyellow}{$e_{1,n}$} & \cellcolor{myblue}{\textcolor{white}{$e_{1,n+1}$}} \\
        2 & \cellcolor{myyellow}{$e_{2,1}$} & {-}  & \cellcolor{myyellow}{$\ddots$} & \cellcolor{myyellow}{}&  \cellcolor{myyellow}{$\vdots$} & \cellcolor{myblue}{\textcolor{white}{$\vdots$}} \\
        $\vdots$ & \cellcolor{myyellow}{$\vdots$} & \cellcolor{myyellow}{$\ddots$} & {-}  & \cellcolor{myyellow}{$\ddots$} & \cellcolor{myyellow}{$\vdots$} & \cellcolor{myblue}{\textcolor{white}{$\vdots$}} \\
        $n-1$ & \cellcolor{myyellow}{$\vdots$} & \cellcolor{myyellow}{} & \cellcolor{myyellow}{$\ddots$} & {-} &  \cellcolor{myyellow}{$e_{n-1,n}$}  & \cellcolor{myblue}{\textcolor{white}{$\vdots$}} \\
        $n$ & \cellcolor{myyellow}{$e_{n,1}$} & \cellcolor{myyellow}{$\dots$} & \cellcolor{myyellow}{$\dots$}& \cellcolor{myyellow}{$e_{n,n-1}$} &  {-}  & \cellcolor{myblue}{\textcolor{white}{$e_{n,n+1}$}} \\
        $n+1$ & \cellcolor{myblue}{\textcolor{white}{$e_{n+1,1}$}} & \cellcolor{myblue}{\textcolor{white}{$\dots$}} & \cellcolor{myblue}{\textcolor{white}{$\dots$}} & \cellcolor{myblue}{\textcolor{white}{$\dots$}} & \cellcolor{myblue}{\textcolor{white}{\textcolor{white}{$e_{n+1,n}$}}} &  {-} \\
    \end{tabular}
    \caption{Transfer errors when training on one domain (row) and testing on another (column).}
    \label{fig:Matrix}
\end{figure}

The analyst does not observe the target sample $S_{n+1}$,  and so does not know   $e^*_\tau$. We instead use $\overline{e}_\tau^\bM$, the $\tau$th upper quantile of the pooled sample of errors when transferring across samples in $\bM$, to construct the forecast intervals. In Figure \ref{fig:Matrix}, the probability that
$e_{\bT,n+1} \leq \overline{e}_\tau^\bM$ is the probability that the value of a randomly drawn shaded cell (yellow or blue) falls below the $\tau$th quantile of the yellow cells. By a straightforward counting argument, 
\[\mathbb{P}\left(e_{\bT,n+1} \leq \overline{e}_\tau^\bM \mid \{S_i\}_{i=1}^{n+1}\right) \geq  \tau {n \choose r+1}/{n+1 \choose r+1} = \tau \left(\frac{n-r}{n+1}\right).\]
Applying the law of iterated expectations (with respect to the sample $\{S_i\}_{i=1}^{n+1}$) yields the one-sided forecast interval in (\ref{eq:OneSidedCI}). The proof for the two-sided forecast interval follows a similar logic but is more involved, see Appendix \ref{app:ProofThm1} for details.

\section{Beyond Exchangeability} \label{sec:RelaxIID}

Our results so far assume that the distributions governing the different samples $S_d$ are themselves independent and identically distributed. This assumption is not always appropriate. For example, suppose variation in domains corresponds to variation over locations, and the samples in the metadata (but not the target sample) are from experiments run at locations chosen by experimenters. If there is selection bias over where experiments are run---for example, if  the observed sites are chosen based on characteristics correlated with effect sizes (as \citet{Allcott} found in the Opower energy conservation experiments)---then it may be that the the target sample has fundamentally different properties from anything that is observed in the metadata. We thus now relax Assumption \ref{assp:IID} to allow the distribution governing the training samples and the distribution governing the target sample to be drawn from different meta-distributions.
Specifically, suppose that the analyst's metadata consists of samples $S_1, \dots, S_n \sim_{iid} \mu$ as in our main model, but $S_{n+1}$ is independently drawn from some other density $\nu$.   Let
\[\omega(S) = \frac{\nu(S)}{\mu(S)} \]
denote their likelihood ratio. We initially assume this likelihood ratio is known by the analyst (although $\nu$ and $\mu$ need not be), and subsequently consider weakenings of this assumption. As before, $e_{\bT,n+1}$ is the transfer error when training on $r$ samples drawn uniformly at random from $\{S_1,\dots,S_n\}$, and testing on $S_{n+1}$.

The following subsections consider successively weaker assumptions about what the analyst knows about the likelihood ratio $\omega$. Sections \ref{sec:KnowOmega} and \ref{sec:BoundOmega} respectively generalize our previous results when the analyst either knows $\omega$ or can bound it.  Section \ref{sec:PartialOrder} provides two ways of comparing the generalizability of models in environments where the analyst knows nothing about $\omega$. 
\subsection{The analyst knows the likelihood ratio $\omega$} \label{sec:KnowOmega}

We again construct a forecast interval for $e_{\bT,n+1}$ using the pooled sample of transfer errors across samples in the metadata, that is, $\left\{ e^{\bM}_{\T,d} : (\T,d) \in \bbT_{r+1,n}\right\}$. Different from the previous section, we no longer assign uniform weights to each $e^{\bM}_{T,d}$.  Intuitively, under our previous i.i.d.\ assumption, each sample in the metadata was equally representative of the training and target distributions, but in this relaxed model, whether a sample $S_d$ is more representative of the training or testing distribution depends on its relative likelihood under $\nu$ and $\mu$.  

A crucial quantity is the following:

\begin{definition} For every domain $d \in \{1, \dots,n\}$, define
  \begin{align}\label{eq:omegak'}
    W_d = \frac{(n-r-1)!}{(n-1)!}\frac{\omega(S_{d})}{\sum_{j=1}^{n}\omega(S_{j})}.
  \end{align}
  \end{definition}
  
  To interpret this quantity, consider an alternative data-generating process for the metadata where for some permutation $\pi:\{1, \dots, n\}\rightarrow \{1,\dots,n\}$, the samples $S_{\pi(1)},\dots,S_{\pi(n-1)} \sim_{iid} \mu$ while $S_{\pi(n)} \sim \nu$. Fix a realization of the metadata $(S_1,\dots,S_n)$, and suppose the analyst does not observe the permutation $\pi$. Let $\Pi$ denote the set of all permutations on $\{1, \dots, n\}$, and for any vector of sample indices $(t_1,\dots,t_r,d)$ let 
  \[\Pi_{(t_1,\dots,t_r,d)} = \{\pi \in \Pi : (\pi(1), \dots, \pi(r)) = (t_1,\dots, t_r) \mbox{ and } \pi(n) = d\}\]
  denote the permutations that specify  $(t_1,\dots,t_r)$ for training and $d$ as the target. Then conditional on a realization of the metadata $(S_1,\dots,S_n)$, the probability that $(S_{t_i})_{i=1}^r$ are the training samples and $S_d$ is the test sample is\footnote{This is a special case of weighted exchangeability; see \citet{tibshirani2019conformal}. The 
   results in this subsection continues to hold if the domains are not independent but satisfy the weighted exchangeability condition, which is more general but harder to interpret.}
 \begin{align*}
     \frac{\sum_{\pi \in \Pi_{(t_1,\dots,t_r,d)}} \left(\nu(S_{\pi(n)}) \cdot \prod_{j=1}^{n-1} \mu(S_{\pi(j)})\right)}{\sum_{\pi \in \Pi} \left(\nu(S_{\pi(n)}) \cdot \prod_{j=1}^{n-1} \mu(S_{\pi(j)})\right) } & = \frac{\sum_{\pi \in \Pi_{(t_1,\dots,t_r,d)}} \omega(S_{\pi(n)})}{\sum_{\pi \in \Pi}\omega(S_{\pi(n)})} \\
     & = \frac{(n-r-1)! \cdot  \omega(S_d)}{(n-1)! \cdot \sum_{j=1}^n \omega(S_j)} = W_d.
 \end{align*}
 This quantity depends only on the identity of the target sample $d$, and not on the identity of the training samples $t_1, \dots, t_r$. Finally, let
      \[F^\omega_{\bM} = \sum_{(\T,d)\in \bbT_{r+1,n}}W_d\cdot \delta_{e^{\bM}_{\T,d}}\]
      be the weighted empirical distribution of transfer errors, where each sample $d$ is weighted according to $W_d$.    
When the two meta-distributions $\mu$ and $\nu$ are identical as in our main model, then
$W_d \equiv (n-r-1)! / n!$
for every domain $d$, so the distribution $F^\omega_{\bM}$ is simply $F_{\bM}$ as defined in (\ref{def:FM}).

  \begin{definition}[Quantiles of $F_{\bM}^\omega$]
     For any likelihood ratio $\omega(\cdot)$ and quantile $\tau \in (0,1)$, define
    $\bar{e}_{\tau}^{\bM,\omega} = \overline{Q}_{\tau}\lb F^\omega_{\bM} \rb$
    and
    $\underline{e}_{\tau}^{\bM,\omega} = \underline{Q}_{1-\tau}\lb F_{\bM}^\omega \rb$ to be, respectively, the $\tau$th upper quantile and $(1-\tau)$th lower quantile of the weighted distribution of transfer errors in the pooled sample.
  \end{definition}

\begin{theorem}\label{thm:coverage_bias}
For any $\tau \in (0,1)$,
    \[\P\lb e_{\bT , n+1}\le \bar{e}_{\tau}^{\bM,\omega}\rb\ge \tau \cdot \frac{n-r}{n}\E\left[\frac{\sum_{j=1}^{n}\omega(S_{j})}{\sum_{j=1}^{n+1}\omega(S_{j})}\right],\]
    and
    \[\P\lb e_{\bT , n+1}\in \left[\underline{e}_{\tau}^{\bM,\omega}, \bar{e}_{\tau}^{\bM,\omega}\right]\rb\ge (2\tau - 1) \cdot \frac{n-r}{n}\E\left[\frac{\sum_{j=1}^{n}\omega(S_{j})}{\sum_{j=1}^{n+1}\omega(S_{j})}\right].\]
    Furthermore, if
  $\left(e^\bM_{\T,d}: (\T,d) \in \bbT_{r+1, n+1}\right)$ almost surely has no ties, then
    \[\P\lb e_{\bT , n+1}\le \bar{e}_{\tau}^{\bM,\omega}\rb\le  1 - \E\left[ \lb (1 - \tau)\frac{n-r}{n} - \frac{(n-r)!}{n!}\frac{\max_{k\le n}\omega(S_k)}{\sum_{j=1}^{n}\omega(S_j)}\rb\frac{\sum_{j=1}^{n}\omega(S_j)}{\sum_{j=1}^{n+1}\omega(S_j)}\right],\]
    and
    \[\P\lb e_{\bT , n+1}\in \left[\underline{e}_{\tau}^{\bM,\omega}, \bar{e}_{\tau}^{\bM,\omega}\right]\rb\le  1 - \E\left[\lb 2(1 - \tau)\frac{n-r}{n} - \frac{(n-r)!}{n!}\frac{\max_{k\le n}\omega(S_k)}{\sum_{j=1}^{n}\omega(S_j)}\rb\frac{\sum_{j=1}^{n}\omega(S_j)}{\sum_{j=1}^{n+1}\omega(S_j)}\right].\]
  \end{theorem}

\noindent This result strictly generalizes Proposition \ref{prop:pooled, percentiles} and Claim \ref{thm:tight_coverage}, since when $w(\cdot)$ is the identity then $\overline{e}_\tau^{\bM,\omega} = \overline{e}_\tau^\bM$ and $\underline{e}_\tau^{\bM,\omega} = \underline{e}_\tau^\bM$, and the bounds in this theorem reduce to those given in Proposition \ref{prop:pooled, percentiles}.

\subsection{The analyst does not know $\omega$ but can bound it} \label{sec:BoundOmega}
We next extend our results when the analyst does not know the likelihood ratio function $\omega$ precisely,  but---as in the literature on sensitivity analysis \citep{rosenbaum2005sensitivity}---knows that it admits an upper and lower bound.

\begin{definition}[Bounded Likelihood-Ratios]  For any $\Gamma\geq 1$, let $\mathcal{W}(\Gamma)$ be the class of density ratios that satisfy
$
    \omega(S)\in [\Gamma^{-1}, \Gamma]$ for all samples $S$.
\end{definition}

 Define the following worst case bounds for $\underline{e}_{\tau}^{\bM,\omega}$ and $\overline{e}_{\tau}^{\bM,\omega}$:
\begin{equation} \label{eq:WorstCase}
\overline{e}_{\tau}^{\bM}(\Gamma) = \sup_{\omega\in \mathcal{W}(\Gamma)} \bar{e}_{\tau}^{\bM,\omega}, \quad \underline{e}_{\tau}^{\bM}(\Gamma) = \inf_{\omega\in \mathcal{W}(\Gamma)} \bar{e}_{\tau}^{\bM,\omega}
\end{equation}
As shown in Appendix \ref{app:Algorithms}, these quantities can be computed from data in $O(n^{r+1})$ time. 
  \begin{corollary} \label{cor:coverage_bias} 
Suppose $\omega \in \mathcal{W}(\Gamma)$. Then
    \[\P\lb e_{\bT , n+1}\le \overline{e}_{\tau}^{\bM}(\Gamma)\rb\ge \tau \left(\frac{n-r}{n + \Gamma^2}\right),\]
    and
    \[\P\lb e_{\bT , n+1}\in [\underline{e}_{\tau}^{\bM}(\Gamma), \overline{e}_{\tau}^{\bM}(\Gamma)]\rb\ge (2\tau - 1)\left(\frac{n-r}{n + \Gamma^2}\right).\]
  \end{corollary}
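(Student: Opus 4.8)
The plan is to obtain the corollary as a direct weakening of Theorem \ref{thm:coverage_bias}: that theorem already produces valid one- and two-sided coverage bounds with a leading factor of the form $\tau\cdot\frac{n-r}{n}\,\E\!\left[\frac{\sum_{j=1}^{n}\omega(S_{j})}{\sum_{j=1}^{n+1}\omega(S_{j})}\right]$, so it suffices to lower bound the expectation by a quantity depending only on $n$ and $\Gamma$. I would do this with a purely deterministic (almost sure) argument, so no new probabilistic content is needed beyond invoking Theorem \ref{thm:coverage_bias} itself.

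\textbf{Bounding the expectation.} Writing $A=\sum_{j=1}^{n}\omega(S_{j})$, the random variable inside the expectation is
\[
\frac{\sum_{j=1}^{n}\omega(S_{j})}{\sum_{j=1}^{n+1}\omega(S_{j})}=\frac{A}{A+\omega(S_{n+1})},
\]
which is increasing in $A$ and decreasing in $\omega(S_{n+1})$. Hence it is minimized by making $A$ as small and $\omega(S_{n+1})$ as large as the constraints permit. Since $\omega\in\mathcal{W}(\Gamma)$, every likelihood ratio lies in $[\Gamma^{-1},\Gamma]$ with probability one; in particular the $n$ summands making up $A$ each satisfy $\omega(S_{j})\ge\Gamma^{-1}$, so $A\ge n\Gamma^{-1}$, while $\omega(S_{n+1})\le\Gamma$. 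Therefore, almost surely,
\[
\frac{A}{A+\omega(S_{n+1})}\ \ge\ \frac{n\Gamma^{-1}}{n\Gamma^{-1}+\Gamma}\ =\ \frac{n}{n+\Gamma^{2}},
\]
and taking expectations preserves the inequality, giving $\E\!\left[\frac{\sum_{j=1}^{n}\omega(S_{j})}{\sum_{j=1}^{n+1}\omega(S_{j})}\right]\ge\frac{n}{n+\Gamma^{2}}$.

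\textbf{Conclusion.} Substituting this bound into the two displays of Theorem \ref{thm:coverage_bias} and cancelling the factor $n$ yields
\[
\P\!\left(e_{\bT,n+1}\le\bar{e}_{\tau}^{\bM,\omega}\right)\ \ge\ \tau\cdot\frac{n-r}{n}\cdot\frac{n}{n+\Gamma^{2}}\ =\ \tau\left(\frac{n-r}{n+\Gamma^{2}}\right),
\]
and analogously $\P\!\left(e_{\bT,n+1}\in[\underline{e}_{\tau}^{\bM,\omega},\bar{e}_{\tau}^{\bM,\omega}]\right)\ge 2\tau\left(\frac{n-r}{n+\Gamma^{2}}\right)-1$, which is exactly the claim. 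There is no real obstacle here; the one thing to watch is the bookkeeping in the previous step — one must apply the lower bound $\Gamma^{-1}$ to the $n$ summands indexed $1,\dots,n$ (which together form $A$) and the upper bound $\Gamma$ only to the single extra term $\omega(S_{n+1})$, since swapping the roles would give a weaker constant. Because $\mathcal{W}(\Gamma)$ constrains $\omega$ on \emph{all} samples simultaneously, both bounds are available on the relevant realizations and the pointwise inequality is valid.
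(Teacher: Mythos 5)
Your proof is correct and is exactly the argument the paper intends: the corollary is an immediate consequence of Theorem \ref{thm:coverage_bias}, obtained by bounding $\E\left[\sum_{j=1}^{n}\omega(S_j)/\sum_{j=1}^{n+1}\omega(S_j)\right]$ below pointwise via $\omega(S_j)\ge\Gamma^{-1}$ for $j\le n$ and $\omega(S_{n+1})\le\Gamma$, giving the factor $n/(n+\Gamma^2)$. Your bookkeeping (which samples get the lower versus upper bound) and the resulting algebra match the paper's implicit derivation.
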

  
\subsection{The analyst knows nothing about $\omega$} \label{sec:PartialOrder}

Finally, we provide two ways for comparing the transferability of two models $i=1,2$ when the analyst cannot bound $\omega$. We do not provide formal results about these orders, but show that they have bite in our subsequent application (see Section \ref{sec:Application}).

Let $\overline{e}_{i,\tau}^{\bM}(\Gamma)$ and $\underline{e}_{i,\tau}^{\bM}(\Gamma)$ again denote the worst case bounds for model $i$, as defined in (\ref{eq:WorstCase}).
 
\begin{definition}[Worst-Case Dominance] Say that model $1$ worst-case-upper-dominates model $2$ at the $\tau$-th quantile if 
\[\overline{e}_{1,\tau}^{\bM}(\Gamma)  \leq \overline{e}_{2,\tau}^{\bM}(\Gamma)  \quad \forall \Gamma \in [1,\infty).\]
\end{definition}

\noindent That is, model 1 worst-case-upper-dominates model 2 at the $\tau$-th quantile if for every $\Gamma$, the worst-case upper bound for model 1 exceeds the  worst-case for upper bound for model 2. 

We can strengthen this  comparison by requiring the upper bound of the forecast interval for model 1 to be smaller than the upper bound of the forecast interval for model 2 pointwise for each $\omega \in \mathcal{W}(\Gamma)$, rather than simply comparing worst-case upper bounds. 

\begin{definition}[Everywhere Dominance] Say that model 1 everywhere-upper-dominates model 2 at the $\tau$-th quantile if
\[\overline{e}_{1,\tau}^{\bM, \omega} \leq \overline{e}_{2,\tau}^{\bM, \omega} \quad \forall \Gamma >1 \forall \omega \in \mathcal{W}(\Gamma).\]

\end{definition}

Many decision rules will not be comparable under either of these definitions, but we show they are empirically relevant in our application. The even stronger requirement that  $\overline{e}_{1,\tau}^{\bM,\omega} \leq \underline{e}_{2,\tau}^{\bM,\omega}$, i.e., that the upper bound of model 1's forecast interval is smaller than the lower bound of model 2's forecast interval, is likely too stringent to be useful in practice.\footnote{This stronger order has bite only when the transfer error for model 1 across ``the most dissimilar" training and testing domains is lower than the transfer error for model 2 for ``the most similar" training and testing domains.}

\section{Application} \label{sec:Application}

To illustrate our methods, we evaluate the transferability of predictions  of certainty equivalents for binary lotteries across domains that have different subject pools and also differ in other ways. We focus on this application for several reasons: First, there are many public data sources that we can use to construct our metadata.  Second, the associated economic models have been extensively examined from the perspective of predictive performance  \citep{HarlessCamerer1994,HeyOrme1994,bruhin2010risk,bernheim2020empirical}, and recent work evaluates how well these models predict relative to black box algorithms \citep{PeysakhovichNaecker2017,Plonskyetal,FKLM}. Finally, as \citet{Einavetal2012} points out, given how models of risk preferences are often used in practice, it is also important to evaluate how well they transfer across domains. We thus view this application as a natural setting to illustrate our methods.

Section \ref{sec:Data} describes our metadata, and  Section \ref{sec:DecisionRules} describes the decision rules  we consider. Section \ref{sec:WithinDomain} conducts ``within-domain" out-of-sample tests, where the training and test data are drawn from the same domain. Section \ref{sec:Transfer} compares transfer performance across domains by constructing forecast intervals for three different definitions of  transfer error. 

\subsection{Data} \label{sec:Data}

Our metadata consists of samples of certainty equivalents from 44 subject pools, which we treat as the domains. These data are drawn from 14 papers in experimental economics, with twelve papers contributing one sample each, one paper contributing two, and a final paper (a study of risk preferences across countries) contributing 30 samples. Our samples range in size from 72 observations to 8906 observations, with an average of 2752.7 observations per sample.\footnote{Online Appendix \ref{app:DataSources} describes our data sources in more detail.} Besides the difference in subject pools, these samples may differ in  other details, such as whether the lotteries were restricted to the gain domain. We convert all prizes to dollars using purchasing power parity exchange rates (from \citealt{OECD})  in the year of the paper's publication 

Within each sample, observations take the form $(z_1,z_2,p;y)$, where $z_1$ and $z_2$ denote the possible prizes of the lottery (and we adopt the convention that $\vert z_1 \vert > \vert z_2 \vert$), $p$ is the probability of $z_1$, and $y$ is the reported certainty equivalent by a given subject.  Thus our feature space is $\mathcal{X} = \mathbb{R}\times \mathbb{R} \times [0,1]$,   the outcome space is $\mathcal{Y}  =\mathbb{R}$, and a prediction rule is any mapping from binary lotteries into predictions of the reported certainty equivalent. 
 We use squared-error loss $\ell(y,y') = (y-y')^2$ to evaluate the error of the prediction, but for ease of interpretation we report results in terms of root-mean-squared error, which puts the errors in the same units as the prizes.\footnote{This transformation is possible because  none of the results in this paper change if we redefine $e(\sigma,S) = g\left(\frac{1}{\# S} \sum_{(x,y) \in S} \ell(\sigma(x),y)\right)$ for any function $g$. Root-mean-squared error corresponds to setting $g(x) = \sqrt{x}$.}  
Since different subjects report different certainty equivalents for the same lottery, the best achievable error is generally bounded away from zero.

\subsection{Models and black boxes} \label{sec:DecisionRules} 

We consider two parametric economic models of certainty equivalents and two off-the-shelf black box algorithms.

\smallskip

\emph{Economic models.} First we consider an expected utility agent with a CRRA utility function parameterized by $\eta \geq 0$ (henceforth EU). For $\eta \neq 1$, define
\[v_\eta(z) = \left\{ \begin{array}{cc}
   \frac{z^{1-\eta}-1}{1-\eta} & \mbox{ if } z \geq 0 \\
   - \frac{(-z)^{1-\eta}-1}{1-\eta} & \mbox{ if } z< 0  
    \end{array}\right.\]
and for   $\eta=1$, set $v_\eta(z) = \ln(z)$ for positive prizes and $v_\eta(z) = -\ln(-z)$ for negative prizes. For each $\eta \geq 0$, define the prediction rule $\sigma_{\eta}$ to be
\[\sigma_{\eta}(z_1,z_2,p) = v_\eta^{-1}\big( p\cdot v_\eta(z_1) + (1-p) \cdot v_\eta(z_2)\big).\]
That is, the prediction rule $\sigma_\eta$ maps each lottery into the predicted certainty equivalent for an EU agent with utility function $v_\eta$.

 Next we consider the set of prediction rules $\Sigma_{CPT}$ derived from the parametric form of Cumulative Prospect Theory (CPT)  first proposed by  \citet{goldstein1987expression} and
\citet{lattimore1992influence}. Fixing values for the model's parameters $(\alpha,\beta,\delta,\gamma)$, each lottery $(z_1,z_2,p)$ is assigned a utility
\begin{equation*} 
w(p)v(z_1) + (1-w(p)) v(z_2)
\end{equation*}
where
\begin{equation} \label{eq:utility}
v(z)= \left\{ \begin{array}{cc}
z^{\alpha} & \mbox{ if $z\geq0$} \\
-(-z)^\beta & \mbox{ if $z<0$}
\end{array} \right.
\end{equation}
is a value function for money, and
\begin{equation} \label{eq:w}
w(p)= \frac{\delta p^\gamma}{\delta p^\gamma + (1-p)^\gamma}
\end{equation}
is a probability weighting function. 

For each $\alpha,\beta,\gamma,\delta$, the prediction rule $\sigma_{(\alpha,\beta,\gamma,\delta)}$ is defined as
\[\sigma_{(\alpha,\beta,\gamma,\delta)}(z_1,z_2,p) = v^{-1}\big( w(p)v(z_1) + (1-w(p)) v(z_2)\big).\]
That is, the prediction rule maps each lottery into the predicted certainty equivalent under CPT with parameters $(\alpha,\beta,\gamma,\delta)$. Following the literature, we impose the restriction that the parameters belong to the set $\Theta = \{ (\alpha, \beta,\gamma,\delta) : \alpha,\beta,\gamma \in [0,1], \delta \geq 0\}$. 

 We also evaluate restricted specifications of CPT that have appeared elsewhere in the literature:  CPT with free parameters $\alpha$ and $\beta$ (setting $\delta=\gamma=1$) describes an expected utility decision-maker whose utility function is as given in (\ref{eq:utility}); CPT with free parameters $\alpha$, $\beta$ and $\gamma$ (setting $\delta=1$) is the specification used in \cite{karmarkar1979}; and CPT with free parameters $\delta$ and $\gamma$ (setting $\alpha=\beta=1$) describes a risk-neutral CPT agent whose utility function over money is $u(z)=z$ but who exhibits nonlinear probability weighting. Additionally, we include CPT with the single free parameter $\gamma$ (setting $\alpha=\beta=\delta=1$), which \cite{FudenbergGaoLiang} found to be an especially effective one-parameter specification.

\smallskip

\emph{Black Box Algorithms.} We consider two popular machine learning algorithms. First, we train a \emph{random forest} (RF), which is an ensemble learning method consisting of a collection of decision trees.\footnote{A decision tree recursively partitions the input space, and learns a constant prediction for each partition element. The random forest algorithm collects the output of the individual decision trees, and returns their average as the prediction. Each decision tree is trained with a sample (of equal size to our training data) drawn with replacement from the actual training data. At each decision node, the tree splits the training samples into two groups using a True/False question about the value of some feature, where the split is chosen to greedily minimize mean squared error.} Second, we train a \emph{kernelized ridge regression} model (KR), which  modifies OLS to weight observations at nearby covariate vectors more heavily, and additionally places a penalty term on the size of the coefficients. Specifically, we use the radial basis function  kernel $\kappa(x, \tilde{x}) = e^{-\gamma  \|x-\tilde{x}\|_2^2}$ to assess the similarity between covariate vectors $x$ and $\tilde{x}$. Given training data $\{(x_i,y_i)\}_{i=1}^N$, the estimated weight vector is
$\vec{w} = (\mathbb{K} + \lambda I_N)^{-1} \vec{y}$, where $\mathbb{K}$ is the $N\times N$ matrix whose $(i,j)$-th entry is $\kappa(x_i,x_j)$, $I_N$ is the $N \times N$ identity matrix, and $\vec{y}=(y_1, \dots, y_N)'$ is the vector of observed outcomes in the training data.  
The estimated prediction rule is 
$\sigma(x) = \sum_{i=1}^N w_i \kappa(x,x_i)$.

There are at least two approaches for cross-validating hyper-parameters such as the size of the trees in the random forest algorithm. First, when there are  multiple training domains one can cross-validate across them; we use this in Appendix \ref{app:Altk}. Second, one can  cross-validate  across observations within the training domains.  Since we are interested in cross-domain performance, rather than within-domain performance, it is not guaranteed that this 
will improve performance, and indeed we find that choosing the hyper-parameters via within-domain cross-validation leads to worse transfer performance than using   default values. Thus in our main analysis with a single training domain, we set all hyper-parameters  to default values.\footnote{Specifically, we set $\lambda=1$ and $\gamma = 1/(\#\mbox{covariates})=1/3$ in the kernel regression algorithm. See  \citet{scikit-learn} and Chapter 14 of \citet{Murphy} for further reference. For the random forest model, we set the maximum depth to none, so the tree is extended until outcomes are homogeneous within each leaf.}

\smallskip

\emph{Discussion.} There is no established definition of what constitutes an economic model versus a black box algorithm, but one way of distinguishing between the two approaches is whether the prediction method is tailored to a general application or a general-purpose method of prediction. EU and CPT model the risk preferences of economic agents; we would not expect these models to predict well if we changed our problem to image classification. In contrast, random forest algorithms and kernel regression have been successfully applied across a wide array of prediction problems. In this sense, EU and CPT are economic models, while RF and KR are not. Our approach and results can, however, equally be applied to evaluate prediction methods that are a hybrid of the two approaches. For example, \citet{Plonskyetal} and \citet{Ke} consider black box algorithms whose inputs are  based on prior economic theory. We leave investigation of the transfer performance of such methods to future work.

We note finally that although black box algorithms are traditionally perceived as more flexible than economic models, whether this is in fact the case is something that has to be determined case-by-case. In particular, \citet{FudenbergGaoLiang} shows that although CPT uses only four parameters, it imposes very few restrictions on mappings from binary lotteries to certainty equivalents. 

\subsection{Within-domain performance} \label{sec:WithinDomain}

We first evaluate how these models perform when trained and evaluated on data from the same subject pool.  We  compute the  tenfold cross-validated out-of-sample error  for  each decision rule in each of the 44 domains.\footnote{We split the sample into ten subsets at random, choose nine of the ten subsets for training, and evaluate the estimated model's error on the final subset. The tenfold cross-validated error is the average of the out-of-sample errors on the ten possible choices of test set.}   The two black box methods (random forest and kernel regression) each achieve lower  cross-validated  error than EU and  CPT in 38 of the 44 domains, although the improvement is not large. To obtain a simple summary statistic for the comparison between the economic models and black boxes, we normalize each economic model's  error (in each domain) by the random forest error.  Table \ref{tab:WithinDomain} averages this ratio across domains and shows that on average, the cross-validated errors of the economic models are slightly larger than the random forest error. That is, the CPT error is on average 1.06 times the random forest error, and the EU error is on average 1.21 times the random forest error.\footnote{The numbers in Table \ref{tab:WithinDomain} are very similar if we normalized by the kernel regression error instead.}

\begin{table}[H]
\centering{}%
\begin{tabular}{lcccc}
\hline 
 Model & Normalized Error\\
\hline 
\hline 

EU  & 1.21  \\[2mm]
CPT variants \\
\quad $\gamma$                       & 1.12  \\
\quad $\alpha,\beta$                & 1.22 \\
\quad $\delta,\gamma$                & 1.08  \\
\quad $\alpha,\beta,\gamma$          & 1.07 \\
\quad $\alpha,\beta,\delta,\gamma$   & 1.06 \\
 \hline 
\end{tabular}
\caption{Average ratio of out-of-sample errors relative to random forest.} \label{tab:WithinDomain}
\end{table}

These results suggest that the different prediction methods we consider are comparable for within-domain prediction, with the black boxes performing slightly better. But the results do not distinguish whether the economic models and black boxes achieve similar out-of-sample errors by selecting approximately the same prediction rules, or if the rules they select lead to substantially different predictions out-of-domain. We also cannot determine whether the slightly better within-domain performance of the black box algorithms is achieved by learning generalizable structure that the economic models miss, or if the gains of the black boxes are confined to the domains on which they were trained. We next separate these explanations by evaluating the transfer performance of the models.

\subsection{Transfer error} \label{sec:Transfer}

We use the results in Section \ref{sec:ResultsIID} to construct forecast intervals for the two specifications of transfer error defined in (\ref{eq:TransferError}) and(\ref{def:TransferDeterioration}), which we will subsequently call \emph{raw transfer error}
 and \emph{transfer shortfall} respectively. We also  consider another normalization of the raw transfer error with respect to a proxy for the best achievable error on the target sample. Let $m \in \mathcal{M}$ index a set of models that each prescribe rules $f^m$ for mapping data to prediction rules. Then  \emph{transfer shortfall} 
\begin{equation} \label{def:NormalizedError}
\frac{ e(f_{S_{\bold{T}}},S_{n+1})}{ \min_{m \in \mathcal{M}} e\left(f^m_{S_{n+1}},S_{n+1}\right)}
\end{equation}
reveals how much lower the accuracy of the transferred model $f_{S_\bold{T}}$ is compared to the best in-sample accuracy using a model from $\mathcal{M}$.\footnote{This quantity (subtracted from 1) is similar to the ``completeness" measure introduced in \cite{FKLM}, without the use of a baseline model to set a maximal reasonable error, and  adapted for the transfer setting by training and testing on samples drawn from different domains.}  One advantage of this specification relative to  (\ref{eq:TransferError})  is that the raw error is very sensitive to the  predictability of $y$  given $x$ in the target sample, which may differ across domains but is not directly related to the model's transferability.

 In our meta-data there are $n=44$ domains, and we choose $r=1$ of these to use as the training domain which corresponds to the question, ``If the researcher draws one domain at random, and then tries to generalize to another domain, how well will they do?"   Figure \ref{fig:CI_77}  displays two-sided forecast intervals for  transfer performance, transfer deterioration, and transfer shortfall (where $R$ includes all decision rules shown in the figure). These forecast intervals use $\tau=0.95$, so the upper bound of the forecast interval is the 95th percentile of the pooled transfer errors (across choices of the training and test domains), and the lower bound of the forecast interval is the 5th percentile of the pooled transfer errors. (See Table \ref{tab:CI} in Appendix \ref{app:Exact} for the exact numbers.) Applying Proposition \ref{prop:pooled, percentiles}, these are 86\% forecast intervals. Choosing larger $\tau$ results in wider forecast intervals that have higher coverage levels, and we report some of these alternative forecast intervals in Online Appendix \ref{app:AlternativeCI}, including a 96\% forecast interval.

\begin{figure}[h]\centering
\subfloat[]{\label{fig:CI_77_a}\includegraphics[width=.45\linewidth]{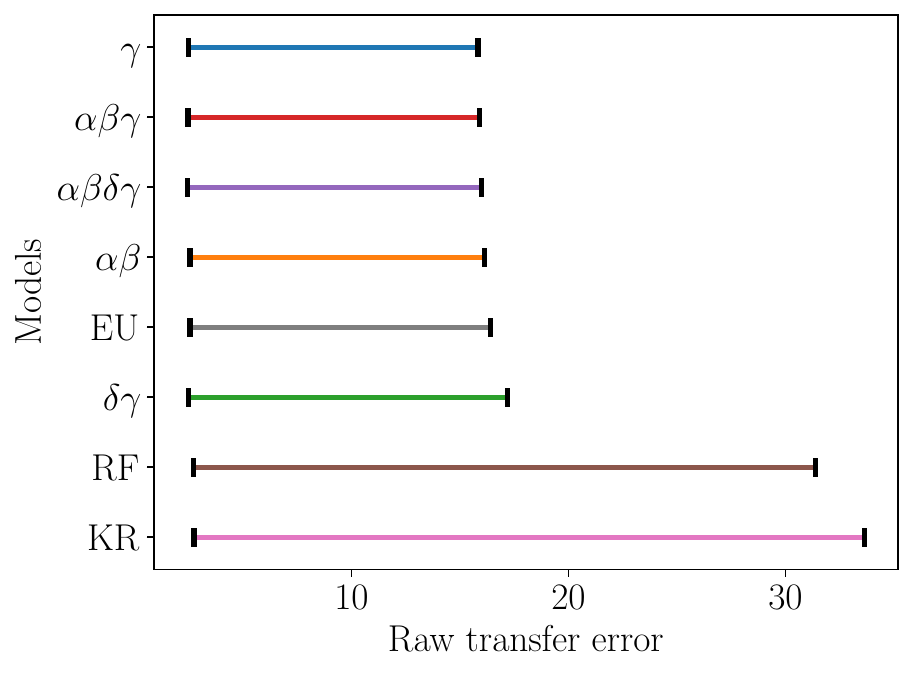}}\\
\subfloat[]{\label{fig:CI_77_b}\includegraphics[width=.45\linewidth]{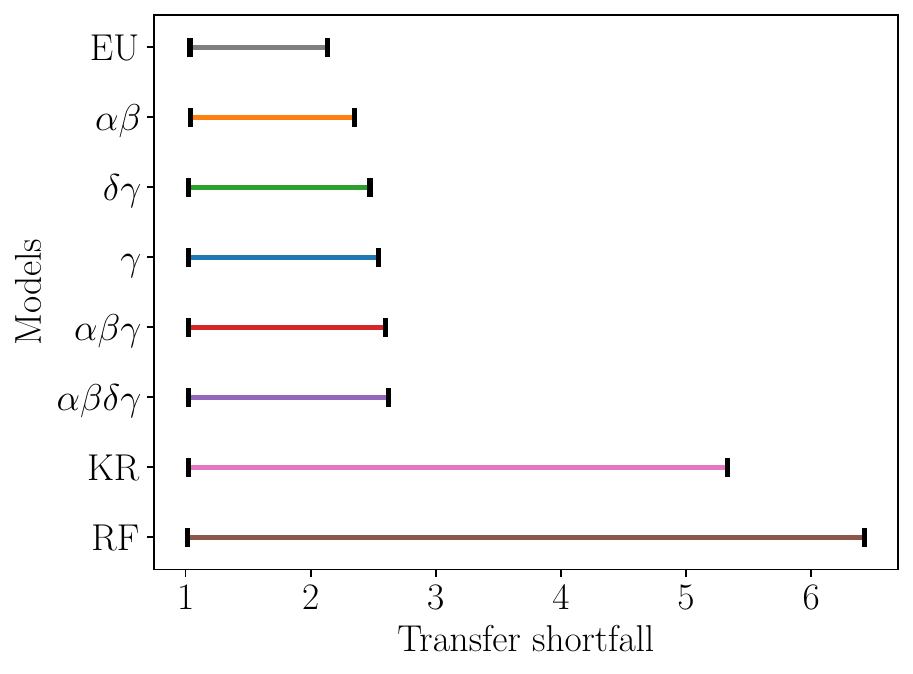}}\hfill
\subfloat[]{\label{fig:CI_77_c}\includegraphics[width=.45\linewidth]{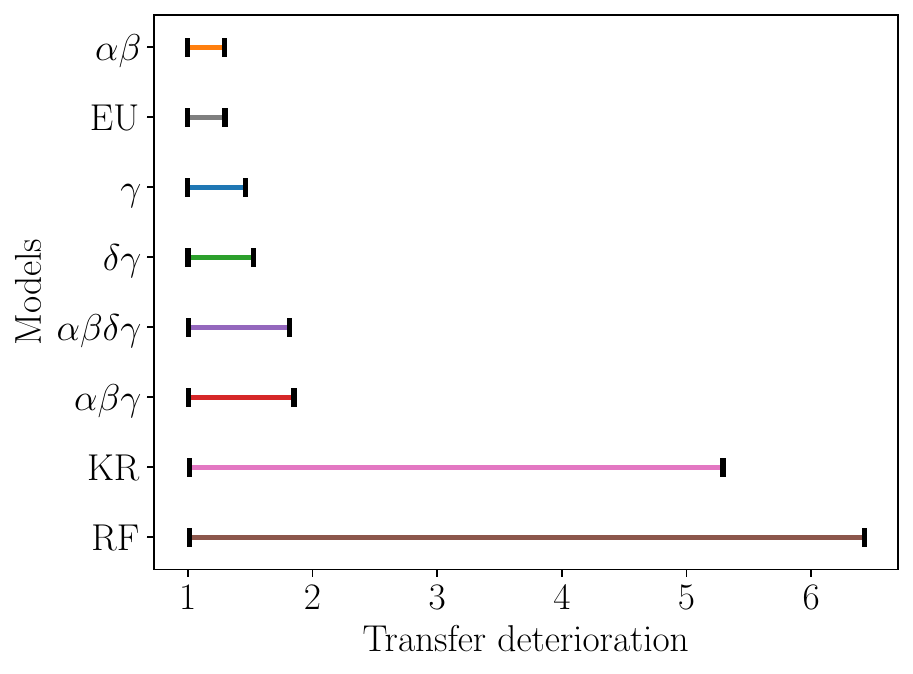}}\par 
\caption{86\% (n=44, $\tau=0.95$) forecast intervals for (a) raw transfer error, (b) transfer shortfall (with $\mathcal{R}$ consisting of the decision rules shown in the figure), and (c)  transfer deterioration.}
\label{fig:CI_77}
\end{figure}

Our main takeaway from Figure \ref{fig:CI_77} is that although the prediction methods we consider are very similar from the perspective of within-domain prediction, they have very different out-of-domain implications. Panel (a) of Figure \ref{fig:CI_77} shows that  the black box forecast intervals for raw transfer error have upper bounds that are roughly twice those of the economic models. Panel (b) shows that  the contrast between the economic models and the black boxes is even larger for transfer shortfall, which removes the common variation across models that emerges from variation in the predictability of the different target samples. Thus, although the economic models and the black box models select prediction rules that are close for the purposes of prediction in the training domain, they sometimes have very different performances in the test domain, and the prediction rules selected by the economic models generalize substantially better. Panel (c) of Figure \ref{fig:CI_77}, which reports transfer deterioration, shows that it is less important to re-estimate the economic models on new target domains than to retrain the black-box algorithms.

All of the forecast intervals overlap for each of the three measures. This is not surprising, as variation in the transfer errors due to the random selection of training and target domains cannot be eliminated even with data from many domains. We expect the black box intervals and the economic model intervals to overlap so long as the economic model errors on ``upper tail" training and target domain pairs exceed the black box errors on ``lower tail" training and target domain pairs. Section \ref{sec:FurtherResults} provides confidence intervals for different population quantities, including quantiles of the transfer error distribution and the expected transfer error, whose width we do expect to vanish as the number of domains grow large. There, we find similar conclusions with regards to the relative performance of the black box algorithms and economic models. 

The appendix provides several robustness checks and complementary analyses. Online Appendix \ref{app:AlternativeCI} plots the $\tau$-th percentile of  pooled transfer errors as $\tau$ varies, demonstrating that forecast intervals constructed using other choices of $\tau$ (besides $\tau=0.95$) would look  similar to those shown in the main text. Online Appendix \ref{sec:Compare} provides 86\% forecast intervals for the ratio of the raw CPT  transfer error to the raw random forest transfer error, and finds that the random forest error is sometimes much higher than the CPT error, but is rarely much lower. Online Appendix \ref{app:Altk} considers an alternative choice for the number of training domains, setting $r=3$ instead of $r=1$. While the results are  similar, the contrast between the economic models and black boxes is not as large, suggesting that the relative performance of the black boxes improves given a larger number of training domains.  Online Appendix \ref{sec:alt-def} provides forecast intervals when each of the 14 papers is treated as a different domain; once again the black box methods transfer worse than the economic models do.

 We next use our theoretical results from Section \ref{sec:RelaxIID} to study the consequences of relaxing the i.i.d.\  assumption in our comparison of CPT$(\alpha,\beta,\delta,\gamma)$ and RF. Since the main differences observed above concerned the upper bounds of our forecast intervals,  we limit attention to $\tau \ge 0.5,$ and compare the methods in terms of worst-case and everywhere upper-dominance with respect to all three measures of  transfer performance. These results are summarized in Table \ref{tab:dominance}.
\begin{table}[H]
\begin{tabular}{c|ccc}
\hline
Type &  raw transfer error &   transfer shortfall & transfer deterioration\\
\hline\hline
Worst-case dominance & $\tau \ge 0.5$ & $\tau \ge 0.5$ & $\tau \ge 0.5$\\
Everywhere dominance & $\tau \ge 0.954$ & $\tau \ge 0.866$ & $\tau \ge 0.647$\\
\hline
\end{tabular}
\caption{Comparison between CPT and RF in terms of worst-case and everywhere upper-dominance. Each cell gives the range of $\tau$ at which CPT dominates RF.}\label{tab:dominance}
\end{table}
Table \ref{tab:dominance} shows that  CPT worst-case-upper-dominates RF at all quantiles $\tau\ge 0.5$ and for all three transfer error measures. Hence, our finding that the upper tail of transfer errors is larger for RF than for CPT is  robust to relaxing the assumption that the training and test domains are drawn from the same distribution, provided that we are comfortable comparing the upper bound for one method to the upper bound for the other.  In Appendix \ref{app:worstcase_dominance}, we provide a more detailed view of worst-case-upper-dominance by plotting $\bar{e}_{\tau}^{\bM}(\Gamma)$ as functions of $\tau$ and $\Gamma$, respectively.

We can also consider the more demanding everywhere-upper-dominance criterion, which asks what happens if we relax our i.i.d.\ sampling assumption in a way which is as favorable to RF (and as unfavorable to CPT) as possible.  We find a substantial degree of robustness even under this highly demanding criterion: CPT everywhere-upper-dominates RF in raw transfer error for all $\tau\ge0.954,$ everywhere-dominates in transfer shortfall for $\tau\ge0.866,$ and everywhere dominates in transfer deterioration for $\tau\ge0.647.$

\subsection{Do  black boxes transfer poorly because they are too flexible?} \label{sec:Overfit}

One tempting  explanation of why the black box algorithms transfer less well is that they may overfit to idiosyncratic details of the training samples that do not generalize across subject pools. For example, suppose some subject pools tend to value lotteries depending on the specific digits they contain.\footnote{For example, \citet{Fortin} find that in neighborhoods with a higher than average percentage of Chinese residents, homes with address numbers ending in “4” are sold at a 2.2\% discount and those ending in “8” are sold at a 2.5\% premium. } This regularity could not be captured by the economic models, because they do not include parameters for individual digits, but could be learned by a random forest algorithm. If so, the random forest would have better within-domain prediction for those subject pools, but worse transfer performance (assuming this regularity does not generalize across subject pools).

While the flexibility of black box algorithms is likely an important determinant of their transfer performance, a second analysis shows that this cannot be a complete explanation of our result. One of the papers we use is based on  samples of certainty equivalents from 30 countries \citep{l2019all}. Of the 30 samples from this paper, 29 samples report certainty equivalents for the same 28 lotteries, and the remaining sample reports certainty equivalents for 24 of those lotteries. We repeat our analysis using these 30 samples as the metadata, and find that the forecast intervals for raw transfer error are indistinguishable across the prediction methods (Panel (a) of Figure \ref{fig:30_domains}). There is some separation between the forecast intervals for the remaining two measures, but in both cases the CPT and random forest forecast intervals are substantially more similar than in the original data. If overfitting were the main explanation of our previous results, we would expect the black box algorithms to overfit here as well.

\begin{figure}[h]\centering
\subfloat[]{\label{fig:30_domains_a}\includegraphics[width=.45\linewidth]{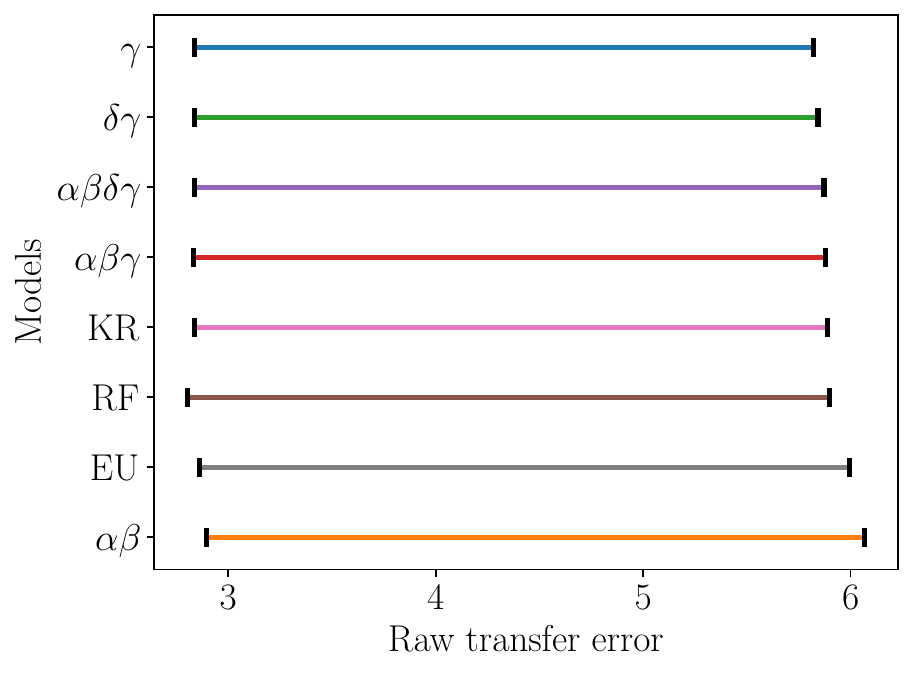}}\\
\subfloat[]{\label{fig:30_domains_b}\includegraphics[width=.45\linewidth]{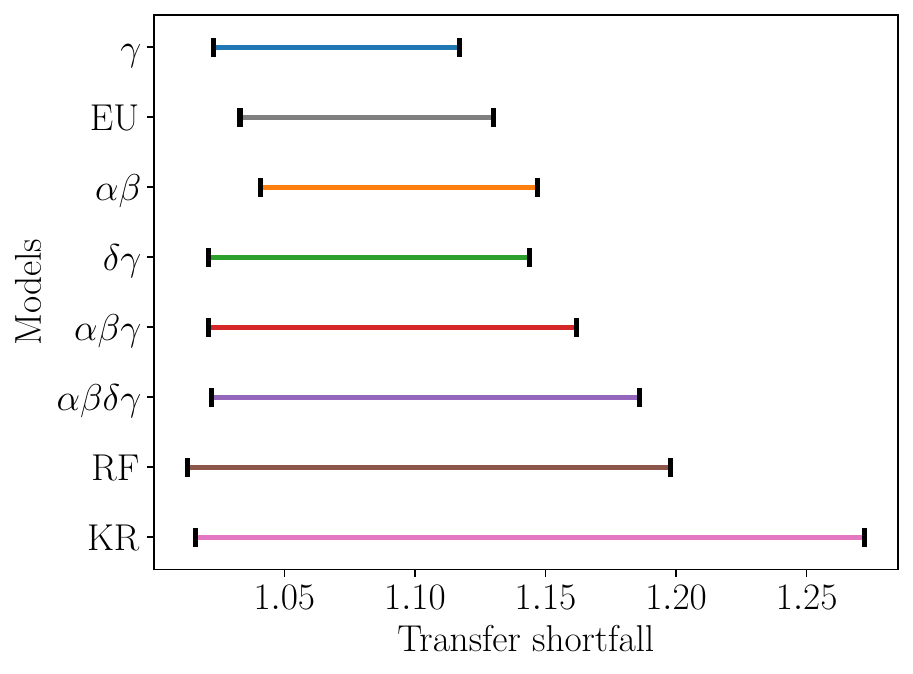}}\hfill
\subfloat[]{\label{fig:30_domains_c}\includegraphics[width=.45\linewidth]{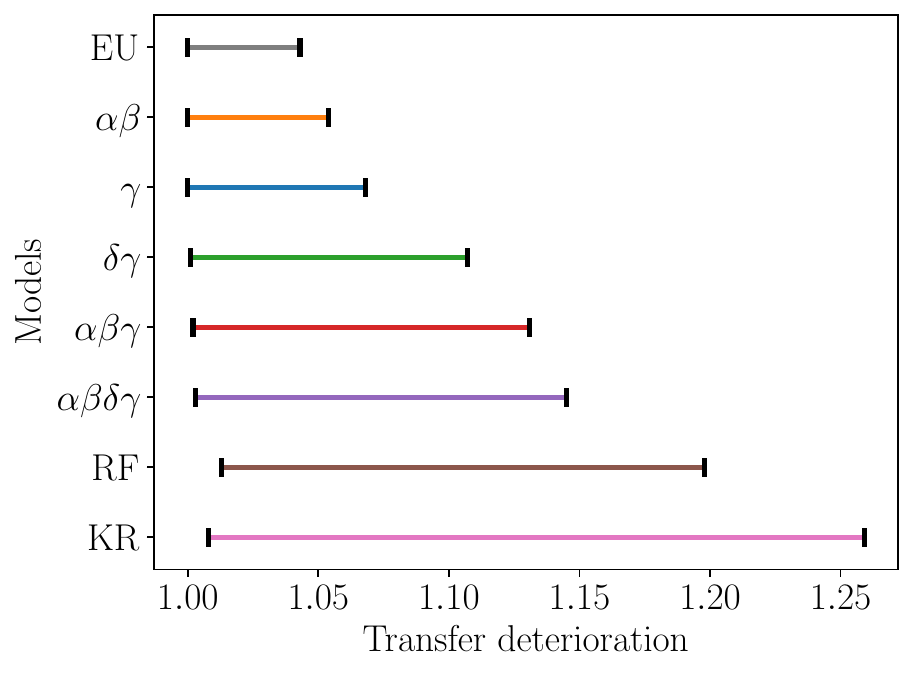}}\par 
\caption{ This figure reports 84\% (n=30, $\tau$=0.95) forecast intervals when we transfer across subject pools in the  \citet{l2019all} data.  }
\label{fig:30_domains}
\end{figure}

In contrast, we find that the economic models again outperform the black box algorithms when we consider a different definition of domains for the \citet{l2019all} data. Specifically, we aggregate all observations for the 24 lotteries that are shared in all 30 samples, and split these observations into 24 samples, where each sample includes all reported certainty equivalents (across subject pools) for a given lottery. For this new definition of domains, our transfer measures evaluate how well a model estimated on data from certain lotteries predict certainty equivalents for other lotteries. Figure \ref{fig:lottery_transfer_r1} reports 83-level confidence intervals, and we find that the economic models transfer substantially better than the black box algorithms. 
 In fact, isolating the difference across domains to be differences across lotteries exaggerates the relative value of economic models even relative to our original Figure \ref{fig:CI_77} (which uses a definition of domains that combines several sources of variation). For consistency, Figure \ref{fig:lottery_transfer_r1} reports confidence intervals for $r=1$ (corresponding to training on one lottery and predicting on another), but we show in Appendix \ref{app:r=3or5} that the qualitative features of this figure hold also for $r=3$ and $r=5$.

\begin{figure}[h]\centering
\subfloat[]{\label{fig:lottery_transfer_r1_a}\includegraphics[width=.45\linewidth]{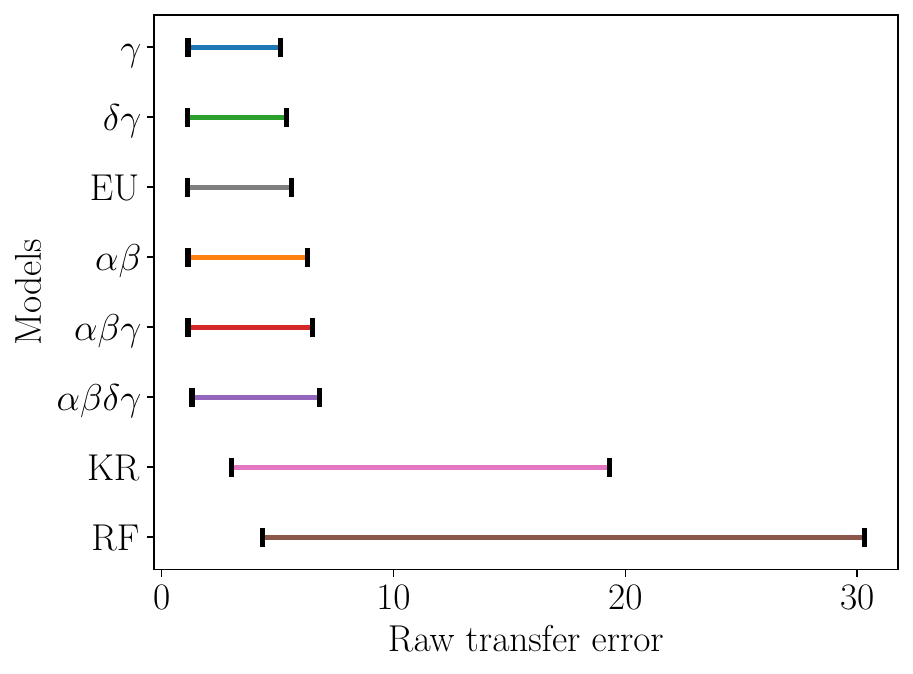}}\\
\subfloat[]{\label{fig:lottery_transfer_r1_b}\includegraphics[width=.45\linewidth]{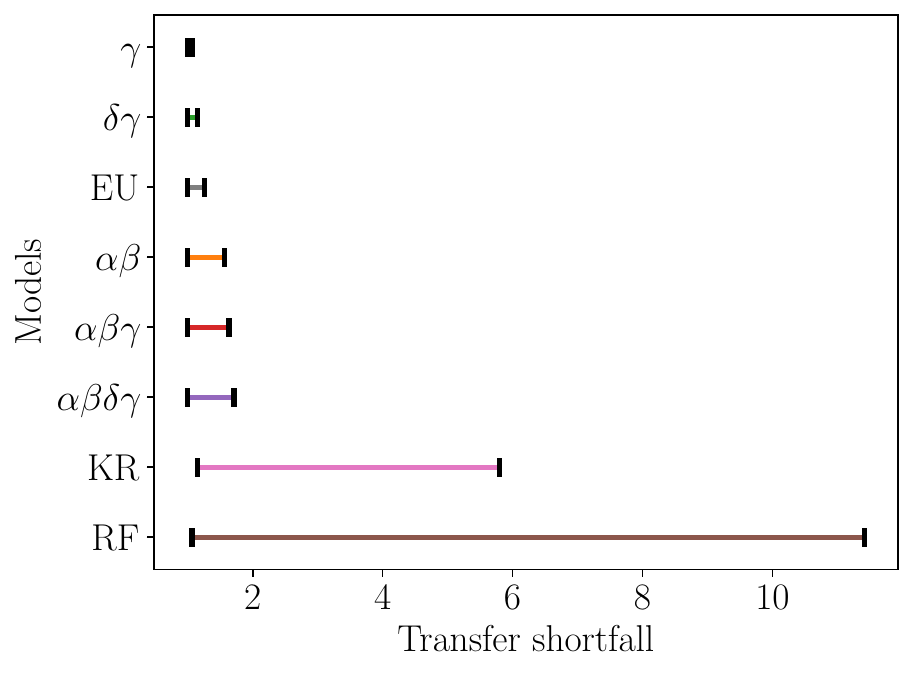}}\hfill
\subfloat[]{\label{fig:lottery_transfer_r1_c}\includegraphics[width=.45\linewidth]{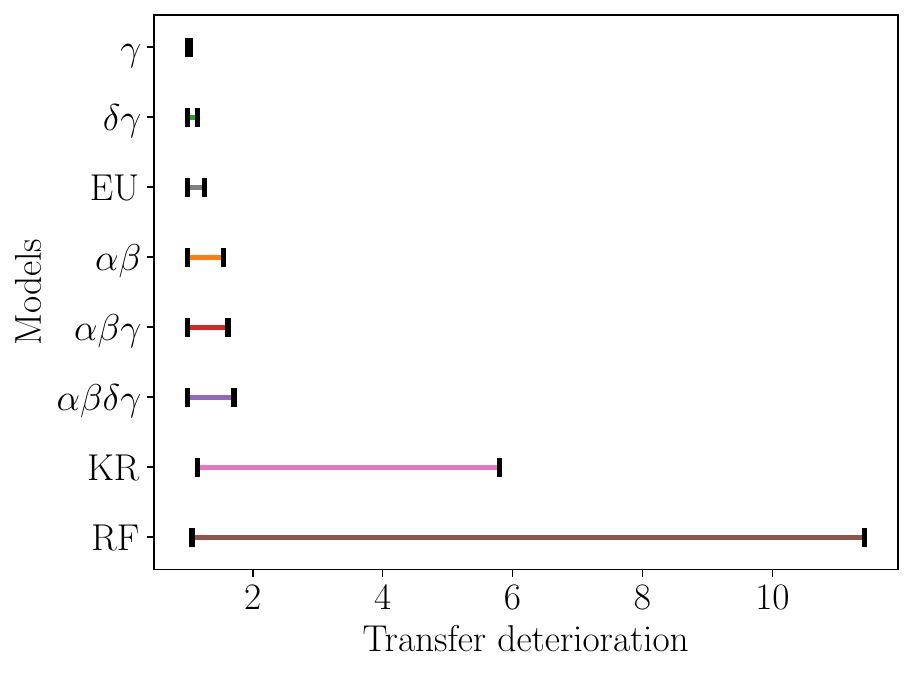}}\par 
\caption{This figure reports 83\% (n=24, $\tau$=0.95) forecast intervals when we transfer across lotteries in the \citet{l2019all} data.  }
\label{fig:lottery_transfer_r1}
\end{figure}

Taken together, our empirical results suggest that the crucial difference between economic models and black box algorithms isn't that one is more flexible and hence more inclined to overfit, but rather that economic models perform better in certain kinds of transfer tasks.\footnote{In fact, the flexibility gap between the black boxes and economic models is not large:  many conditional mean functions (for binary lotteries) can be well approximated by CPT for some choice of parameters values $\alpha,\beta,\delta,\gamma$ \citep{FudenbergGaoLiang}.} The next section discusses more formally one potential explanation for the difference in the relative performance of economic models and black box algorithms in these two transfer tasks. 

\subsection{Two kinds of transfer problems} \label{sec:CovariateShift}

 Our framework allows the distribution $P$ governing the training sample and the distribution $P'$ governing the test sample to differ.  At one extreme, $P$ and $P'$ may share a common marginal distribution on the feature space $\mathcal{X}$, but have very different  conditional distributions $P_{Y\mid X}$ and $P'_{Y \mid X}$ (known as \emph{model shift}). In our application, this would mean that the distribution over lotteries is the same, but the conditional distribution of reported certainty equivalents is different across domains. At another extreme, the conditional distributions $P_{Y \mid X}$ and $P'_{Y \mid X}$ might be the same, but  the marginal distributions over the feature space could differ across domains, e.g., if different kinds of lotteries are used in different domains (known as \emph{covariate shift}).

Our findings in Figure \ref{fig:30_domains} suggest that black boxes do as well as economic models at transfer prediction when the marginal distribution over features $P_X$ is held constant across samples.  Intuitively, when the relevant feature vectors are the same in every sample,  a black box algorithm can perform  well by simply memorizing a prediction for each of these feature vectors. In contrast, when the set of lotteries varies across samples, then  good transfer prediction  necessarily involves  extrapolation, and an algorithm that hasn't identified the right  structure for relating behavior across lotteries will fail to generalize. Since economic models of risk preferences are intended to relate an individual's preferences across lotteries that permits extrapolation of this form, our empirical results suggest that they do so effectively.

For a simple, stylized, example of this contrast, consider three domains with degenerate distributions over observations. In domain 1, the distribution  is degenerate at the lottery $(z_1,z_2,p)=(10,0,1/2)$ and certainty equivalent $y=3$. In domain 2, the distribution is degenerate at the lottery $(z_1,z_2,p)=(10,0,1/2)$ and  certainty equivalent $y=4$. In domain 3, the distribution is degenerate at a new lottery $(z_1,z_2,p)=(20,10,1/10)$ and certainty equivalent $y=11$. Suppose EU and a decision tree are both trained on a sample from domain 1. The CRRA parameter $\eta \approx 0.64$ perfectly fits the observation $(10,0,1/2;3)$, as does the trivial decision tree that predicts $y=3$ for all lotteries. The estimated EU model and decision tree are equivalent for predicting observations in domain 2: both predict $y=3$ and achieve a mean-squared error of $1$. But their errors are very different on domain 3: the EU prediction for the new lottery is  approximately 10.8 with a mean-squared error of approximately 0.05, while the decision tree's prediction is 3 with a mean-squared error of 64.

\subsection{Predicting the relative transfer performance of black boxes and economic models} \label{sec:PredictTE}

The preceding sections suggest that the relative transfer performance of black boxes and economic models is determined primarily by shifts in which lotteries are sampled, rather than shifts in behavior conditional on those lotteries. To further test this conjecture, we examine how well we can predict the ratio of the raw random forest transfer error to the raw CPT transfer error given information about the training and test lotteries but not about the distribution of certainty equivalents in either sample. If the relative performance of these methods depended importantly on behavioral shifts in the two domains---i.e., a change in the distribution of certainty equivalents for the same lotteries---then we would expect prediction of the relative performance based on lottery information alone to be poor. We find instead that lottery information has substantial predictive power for this ratio. 

For each sample $S = \{(z_{1,i},z_{2,i},p_i;y_i)\}_{i=1}^m$, we consider the following features: the mean, standard deviation, max, and min value of  $z_1$ among the  lotteries in $S$;
     the mean, standard deviation, max, and min value of  $z_2$ among the  lotteries in $S$; the mean, standard deviation, max, and min value  of  $p$ among the  lotteries in $S$;  the mean, standard deviation, max, and min value  of  $1-p$ among the  lotteries in $S$; the mean, standard deviation, max, and min of  $pz_1 + (1-p) z_2$ among the  lotteries in $S$; the size of $S$; and an indicator variable for whether $z_1,z_2 \geq 0$ for all lotteries in $S$.

We consider three possible feature sets: (a) \emph{Training Only,} which includes all features derived from the training sample $\bM_\T$; (b) \emph{Test Only,} which includes all features derived from the test sample $S_d$, (c) \emph{Both,} which includes all features derived from the training sample $\bM_\T$ and the test sample $S_d$. We evaluate two  prediction methods: OLS and a random forest algorithm. Table \ref{tab:Predict_Transfer_Error} reports tenfold cross-validated errors for each of these feature sets and prediction methods. As a benchmark, we also consider the best possible constant prediction. 

\begin{table}[H]
\centering{}%
\begin{tabular}{lcccc}
\hline 
 & Train Only & Test Only & Both \\
\hline 
\hline 
Constant & 2.57 & 2.57 & 2.57 \\
OLS & 1.00 & 2.61 & 0.94 \\
RF & 0.98 & 2.52 & 0.76 \\
 \hline 
\end{tabular}
\caption{Cross-Validated MSE} \label{tab:Predict_Transfer_Error}
\end{table}

\begin{figure}[h]
\begin{center}
\includegraphics[scale=0.45]{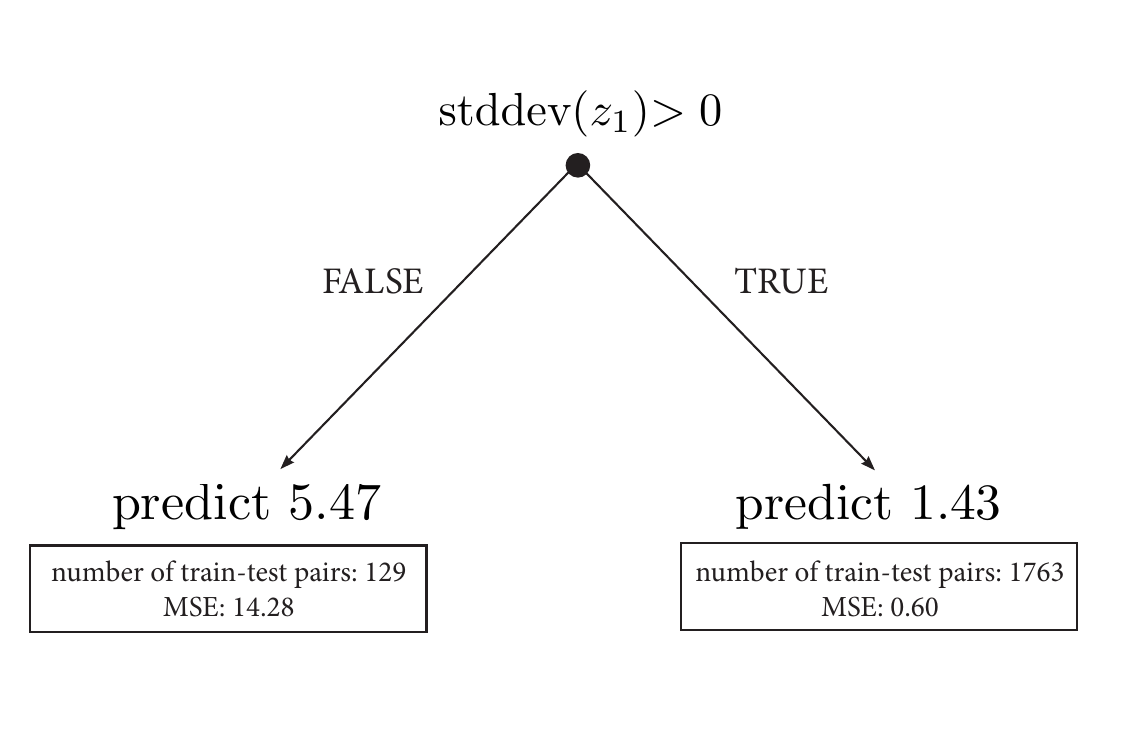}
\end{center}
\caption{Best 1-split decision tree based on training and test features.}
\label{fig:DT-1}
\end{figure}

 The best constant prediction achieves a mean-squared error of 2.57, which can be more than halved using features of the training set alone. Using features of both the training and test sets, the random forest algorithm reduces error to 30\% of the constant model. Crucially, the random forest algorithm is permitted to learn nonlinear combinations of the input features, and thus discover relationships between the training and test lotteries that are relevant to the relative performance of the black box and the economic model. 
 
 The random forest algorithm is too opaque to deliver insight into how it achieves these better predictions, but we can obtain some understanding by examining the best 1-split decision tree, shown in Figure \ref{fig:DT-1} below. This decision tree achieves a cross-validated MSE of 1.75, reducing the error of the constant model by 32\%. It partitions the set of (train,test) domain pairs into two groups depending on whether the standard deviation of $z_1$ (the larger prize) in the training set of lotteries exceeds zero. There are three domains in which the prizes $(z_1,z_2)$ are held constant across all training lotteries (although the probabilities vary). In the 129 transfer prediction tasks where one of these three domains is used for training, the decision tree predicts the ratio of the random forest error to the CPT error to be 5.47. For all other transfer prediction tasks, the decision tree predicts 1.43.

This finding reinforces our intuition that the relative performance of the black boxes and economic models is driven in part by whether the training sample covers the relevant part of the feature space. When the training observations concentrate on an unrepresentative part of the feature space (such as all lotteries that share a common pair of prize outcomes), then the black boxes transfer much more poorly than economic models.

Our  results also clarify a contrast between transfer performance and classical out-of-sample performance. In out-of-sample testing, the marginal distribution on $\mathcal{X}$ is the same for the training and test samples, so the set of training lotteries is likely to be representative of the  set of test lotteries as long as the training sample is sufficiently large. When  test and training samples are governed by  distributions with different marginals on $\mathcal{X}$, the set of training lotteries can be unrepresentative of the set of test lotteries regardless of the number of training observations. Training on observations pooled across many domains alleviates the potential unrepresentativeness of the training data, but the number of domains needed will depend on properties of the distribution: An  environment where each domain puts weight on exactly one lottery that  is itself sampled i.i.d.\ may be difficult for black-box algorithms,\footnote{In this case, the number of domains black boxes need to achieve good transfer performance is likely comparable to the number of observations they  need for good out-of-sample performance, which can be quite large.} while an environment where the marginal distribution is degenerate on the same lottery in all domains may be easier. There is no analog in out-of-sample testing for the role played by variation in the marginal distribution on $\mathcal{X}$ across domains. Moving beyond our specific application, we expect this variation to be an important determinant of the relative transfer performance of black box algorithms and economic models in general.

\section{Conclusion}
Our measures of transfer error quantify how well a model's performance on one domain extrapolates to other domains. We applied these measures to show that the predictions of expected utility theory and cumulative prospect theory outperform those of black box models on out-of-domain tests, even though the black boxes generally have lower out-of-sample prediction errors within a given domain. The relatively worse transfer performance of the black boxes seems to be because the black box algorithms have not identified structure that is commonly shared across domains, and thus cannot effectively extrapolate behavior from one set of features to another. Our finding that the economic models transfer better supports  the intuition that economic models capture regularities that are general across a variety of domains.

One may view our theoretical results, and the statistical assumptions which justify them, from two perspectives.  Taken literally, we consider a researcher who has access to data from a small set of domains, and an analyst who is interested in how well the researcher's procedure extrapolates.  To provide guarantees we restrict the ways domains relate to each other; the simplest and strongest of these is that these is that the domains are exchangeable. We also assume that the analyst has access to data from a larger set of domains than the researcher does. One might wonder why the researcher uses  a small set of domains for training, when the larger meta dataset is available to the analyst. In particular, the researcher might prefer use the larger cross-domain dataset for training, leaving no ``fresh" domains for performance evaluation.  
Although one could potentially weaken our data requirements to cover such cases by restricting the researcher's estimator,  this would rule out the black box prediction methods that are a focus of our analysis.

Alternatively, one can view our results as a tool for evaluating and comparing the performance of different methods for cross-domain prediction.  The starting point for our analysis of a given method is the matrix collecting that method's cross-domain prediction errors for a set of observed domains.  These matrices are rich objects, and it is not obvious how to compare them, but our theoretical analysis points to a natural set of summary statistics: quantiles of the observed error distribution.  Focusing on these quantiles, or equivalently on our forecast intervals, provides a theoretically-motivated way to make comparisons across methods.



\appendix

\section{~Proofs}

\subsection{Notation}

Throughout let $\mathcal{N} \equiv \{1,\dots,n\}$. The set $\bbT_{r,n}$ consists of all vectors of length $r$ with  distinct values in $\mathcal{N}$.  For any $(d_1, \ldots, d_{r+1})\in \bbT_{r+1, n+1}$, let 
$f(d_1, \ldots, d_{r+1}) = e_{(d_1,\dots,d_r),d_{r+1}}$ denote the transfer error from training samples $S_{d_1}, \ldots, S_{d_r}$ to test sample $S_{d_{r+1}}$.

\subsection{Proofs of Proposition \ref{prop:pooled, percentiles} and Claim \ref{thm:tight_coverage}}

These are special cases of Theorem \ref{thm:coverage_bias} with $\Gamma = 1$. To avoid repetition, we will only prove Theorem \ref{thm:coverage_bias}. 

\subsection{Proof of Theorem \ref{thm:coverage_bias} and Corollary \ref{cor:coverage_bias}} \label{app:ProofThm1}

We start by proving a simple lemma.

\begin{lemma}\label{lem:mixture}
  Let $H = (1 - \pi)F + \pi G$ be a mixture of two distributions $F$ and $G$. Further let $Z_{H}$ be a draw from $H$ and $Z_{F}$ a draw from $F$. Then, for any $0\le \tau_1 < \tau_2 \le 1$,
  \[(1 - \pi)(\tau_2 - \tau_1)\le \P\lb Z_{H}\in [\underline{Q}_{\tau_1}(F), \overline{Q}_{\tau_2}(F)]\rb \le (1 - \pi)\P\lb Z_{F}\in [\underline{Q}_{\tau_1}(F), \overline{Q}_{\tau_2}(F)] \rb + \pi\]
\end{lemma}
\begin{proof}
  Let $Z_G$ be a draw from $G$, and let $W$ be a binary random variable, independent of $Z_F$ and $Z_G$, with $\E[W] = \pi$. Then $Z_{H}\stackrel{d}{=} (1-W)Z_F + WZ_G$ and
  \begin{align*}
    &\P\lb Z_{H}\in [\underline{Q}_{\tau_1}(F), \overline{Q}_{\tau_2}(F)]\rb\\
    & = (1 - \pi)\P\lb Z_{F}\in [\underline{Q}_{\tau_1}(F), \overline{Q}_{\tau_2}(F)]\rb + \pi \P\lb Z_{G}\in [\underline{Q}_{\tau_1}(F), \overline{Q}_{\tau_2}(F)]\rb\\
    & \in (1 - \pi)\P\lb Z_{F}\in [\underline{Q}_{\tau_1}(F), \overline{Q}_{\tau_2}(F)]\rb + [0, \pi].
  \end{align*}
  By definition of upper and lower quantiles, $\P\lb Z_{F}\in [\underline{Q}_{\tau_1}(F), \overline{Q}_{\tau_2}(F)]\rb\ge \tau_2 - \tau_1.$
  The result then follows.
\end{proof}
\begin{proof}[Proof of Theorem \ref{thm:coverage_bias}]
Throughout the proof we condition on the unordered samples $\{S_1, \ldots, S_{n+1}\}$ and denote by $\{S_{(1)}, \ldots, S_{(n+1)}\}$ any typical realization. Let $\cF$ denote the sigma-field generated by the unordered set $\{S_{(1)}, \ldots, S_{(n+1)}\}$.  Under the assumed data-generating process, 
  \begin{equation*}
    e_{(d_1, \ldots, d_r), n+1}\mid \cF \stackrel{d}{=} f(\bfpiw(d_1), \ldots, \bfpiw(d_r), \bfpiw(n+1)), \quad \forall (d_1, \ldots, d_r)\in \bbT_{r, n}.
  \end{equation*}
where $\bfpiw$ is a random permutation on $\{1, \ldots, n+1\}$ distributed according to 
  \begin{equation}
    \label{eq:dist_piw}
    \P(\bfpiw = \pi) = \frac{1}{n!}\frac{w(S_{\pi(n+1)})}{\sum_{j=1}^{n+1}\omega(S_{j})}.
  \end{equation}
  On the other hand, $\bT  \stackrel{d}{=} (\bfpin(1), \ldots, \bfpin(r))$ where $\bfpin$ denotes a uniform random permutation on $\{1, \ldots, n\}$, so 
  $ e_{\bT , n+1}\mid \cF \stackrel{d}{=} f(\bfpiw\circ \bfpin (1), \ldots, \bfpiw\circ \bfpin(r), \bfpiw(n+1)).$
  By \eqref{eq:dist_piw}, we have
  \begin{equation}
    \label{eq:dist_f_w}
    e_{\bT , n+1}\mid \cF \stackrel{d}{=} f(\bfpiw (1), \ldots, \bfpiw(r), \bfpiw(n+1)).
  \end{equation}
  For any $(d_1, \ldots, d_{r}, k)\in \bbT_{r+1, n+1}$, let
  \[W_k'= \P((\bfpiw(1), \ldots, \bfpiw(r), \bfpiw(n+1)) = (d_1, \ldots, d_{r}, k)) = \frac{(n-r)!}{n!}\frac{w(S_{k})}{\sum_{j=1}^{n+1}\omega(S_{j})}.\]
  Thus,
  \begin{equation}\label{eq:eT_cond_dist}
    e_{\bT , n+1}\mid \cF\sim \sum_{(d_1, \ldots, d_{r}, k)\in \bbT_{r+1, n+1}}W_k'\cdot \delta_{f(d_1, \ldots, d_r, k)}.
    \end{equation}
  Because $\P(Z\le \bar{Q}_{\tau}(F))\ge \tau$, 
  \begin{equation}
    \label{eq:oracle_bound}
    \P\lb e_{\bT , n+1}\le \bar{Q}_{\tau}\lb\sum_{(d_1, \ldots, d_{r}, k)\in \bbT_{r+1, n+1}}W_k'\cdot \delta_{f(d_1, \ldots, d_r, k)}\rb\mid \cF\rb\ge \tau.
  \end{equation}
  Let 
  \begin{equation} \label{eq:Omega}
  \Omega_{n+1} = \sum_{(d_1, \ldots, d_{r}, k)\in \bbT_{r+1, n+1}\setminus \bbT_{r+1, n}}W_k'.
  \end{equation}
  \\ 
  By definition, the element $n+1$ must belong to every tuple $(d_1, \ldots, d_{r}, k)\in \bbT_{r+1, n+1}\setminus \bbT_{r+1, n}$. Thus, $W_{n+1}'$ shows up $|\bbT_{r+1, n+1}\setminus \bbT_{r+1, n}|/(r+1)$ times in the sum (\ref{eq:Omega}). By symmetry, each of the other $W_k$'s shows up $|\bbT_{r+1, n+1}\setminus \bbT_{r+1, n}|r/(r+1)n$ times. Since 
  \[|\bbT_{r+1, n+1}\setminus \bbT_{r+1, n}| = |\bbT_{r+1, n+1}| - |\bbT_{r+1, n}| = \frac{(n+1)!}{(n-r)!} - \frac{n!}{(n-r-1)!} = (r+1)\frac{n!}{(n-r)!},\]
  we obtain that
  \begin{align}
    \Omega_{n+1} & = \frac{n!}{(n-r)!}W_{n+1}' + \frac{r(n-1)!}{(n-r)!}\sum_{k=1}^{n}W_k'\nonumber\\
                 & = \frac{r}{n} + \frac{(n-1)!}{(n - r - 1)!}W_{n+1}'= \frac{r}{n} + \frac{n-r}{n}\frac{w(S_{n+1})}{\sum_{j=1}^{n+1}\omega(S_{j})}.\label{eq:sum_omega}
  \end{align}
  where the second to last equality uses $\sum_{k=1}^{n+1}W_{k}' = \frac{(n - r)!}{n!}.$ 
  By \eqref{eq:sum_omega} and \eqref{eq:omegak'}, 
  \[W_k = \frac{W_k'}{1 - \Omega_{n+1}}.\]
  Thus, the distribution in \eqref{eq:eT_cond_dist} can be written as the following mixture distribution
  \begin{align*}
    &\sum_{(d_1, \ldots, d_{r}, k)\in \bbT_{r+1, n+1}}W_k'\cdot \delta_{f(d_1, \ldots, d_r, k)} = (1-\Omega_{n+1}) F_{\bM}^{\omega} + \Omega_{n+1}\cdot G
  \end{align*}
  where $G = \sum_{(d_1, \ldots, d_{r}, k)\in \bbT_{r+1, n+1}\setminus \bbT_{r+1, n}}(W_k' / \Omega_{n+1})\cdot \delta_{f(d_1, \ldots, d_r, k)}$. By Lemma \ref{lem:mixture} with $\tau_1 = 0, \tau_2 = \tau,$ and $F = F_{\bM}^{\omega}$, we have
  \[\P\lb e_{\bT , n+1}\le \bar{Q}_{\tau}(F_{\bM}^\omega)\mid \cF\rb\ge \tau(1 - \Omega_{n+1}).\]
  Moreover, when $f(d_1, \ldots, d_r, k)$ are mutually distinct,
  \[\P\lb Z\le \bar{Q}_{\tau}(F_{\bM}^\omega)\rb\le 1 - \tau + \max_{k}W_{k},\]
  where $Z$ is the draw from the distribution inside $\bar{Q}_{\tau}$ and 
   \[\max_{k} W_k = \frac{(n-r-1)!}{(n-1)!}\frac{\max_{k\le n}\omega(S_k)}{\sum_{j=1}^{n}\omega(S_j)}.\]
  Thus, Lemma \ref{lem:mixture} implies
  \begin{align*}
    &\P\lb e_{\bT , n+1}\le \bar{Q}_{\tau}(F_{\bM}^\omega)\mid \cF\rb\\
    & \le \lb\tau + \frac{(n-r-1)!}{(n-1)!}\frac{\max_{k\le n}\omega(S_k)}{\sum_{j=1}^{n}\omega(S_j)}\rb(1 - \Omega_{n+1}) + \Omega_{n+1}\\
    & = 1 - \lb 1 - \tau - \frac{(n-r-1)!}{(n-1)!}\frac{\max_{k\le n}\omega(S_k)}{\sum_{j=1}^{n}\omega(S_j)}\rb(1 - \Omega_{n+1}).
  \end{align*}
  The result then follows by the iterated law of expectation and \eqref{eq:sum_omega}, which implies 
  \[1 - \Omega_{n+1} = \frac{n-r}{n}\frac{\sum_{j=1}^{n}\omega(S_j)}{\sum_{j=1}^{n+1}\omega(S_j)}.\]
  The two-sided guarantee can be similarly obtained by Lemma \ref{lem:mixture} with $\tau_1 = 1 - \tau$ and $\tau_2 = \tau$.

  To prove Corollary \ref{cor:coverage_bias}, we note that 
  \[\frac{\omega(S_{n+1})}{\sum_{j=1}^{n+1}\omega(S_j)}\le \frac{\Gamma}{n\Gamma^{-1} + \Gamma} = \frac{1}{n\Gamma^{-2} + 1}.\]
  Thus, 
  \[1 - \Omega_{n+1} = \frac{n-r}{n}\lb 1 - \frac{\omega(S_{n+1})}{\sum_{j=1}^{n+1}\omega(S_j)}\rb \ge \frac{n-r}{n}\frac{n\Gamma^{-2}}{n\Gamma^{-2} + 1} = \frac{n - r}{n + \Gamma^2}.\]
\end{proof}

\newpage
\setcounter{page}{1}
\setcounter{section}{15}
\setcounter{subsection}{0}

\clearpage
\noindent \begin{center}
{\large{}Online appendix to the paper}
\par\end{center}{\large \par}

\noindent \begin{center}
{\LARGE{}The Transfer Performance of Economic Models}
\par\end{center}{\LARGE \par}

\medskip{}

\begin{center}
Isaiah Andrews \quad Drew Fudenberg \quad Lihua Lei \quad Annie Liang \quad Chaofeng Wu
\par\end{center}{\large \par}

\noindent \begin{center}
\today
\par\end{center}

\section{~Other Transfer Problems} \label{app:OtherTransfer}

Although we have focused on specifications of transfer error that evaluate how well a model transfers from one domain to another, our results apply for the substantially broader class of random variables given in Definition \ref{def:TransferError}. We discuss below other interesting specifications of $e_{\mathcal{T},d^*}$ and what they might measure.

\subsection{Parameter Transfer} 

\label{sec:ParamaterTransfer} When a model has interpretable parameters, we may  be interested in whether the parameter values estimated on the training data will be a good proxy for the best-fitting parameters in the target sample.

\begin{example}[Effectiveness of a Job Training Program] An economist has estimated the effectiveness of a job training program  using a data set from one location (as in \cite{Hotzetal2005}). How similar would the estimate be if the program were implemented at another location?
\end{example}

\begin{example}[Loss Aversion] An economist observes on a data set of choice over lotteries that ``losses loom larger than gains," specifically that the loss aversion parameter in Prospect Theory has a value larger than 1. If the economist were to elicit choices over a different set of lotteries, would this qualitative conclusion continue to hold?
\end{example}

Consider any model that can be defined as a set $\mathcal{F}_\Theta = \{f_\theta\}_{\theta \in \Theta}$ of prediction rules $f_\theta: \mathcal{X} \rightarrow \mathcal{Y}$, which depend continuously on a parameter $\theta$ in a compact parameter space $\Theta$.  Given any training data $S_{\bold{T}}$, let 
$\hat{\theta}(S_{\bold{T}}) = \arginf_{\theta \in \Theta} \sum_{d \in \bold{T}} \frac{\vert S_d \vert}{\sum_{d \in \bold{T}} \vert S_d \vert} \sum_{d \in \bold{T}} e(f_\theta,S_d)$ be the parameter value that minimizes a weighted sum of the errors across the samples in the training data, and let 
$f_{\hat{\theta}(S_{\bold{T}})}$ denote the corresponding prediction rule.\footnote{If there are ties, break them arbitrarily} To assess parameter variation, first fix a distance metric $d(\theta,\theta')$ (e.g., Euclidean distance) to assess how different two parameter vectors $\theta$ and $\theta'$ are. Then the transfer error
\[e_{\bold{T},n+1} =d\left(\hat{\theta}(S_{\bold{T}}),\hat{\theta}(S_{n+1})\right)\]
 measures how far the estimated parameters on the training data are from the best-fitting parameters on the target sample.

We can also assess how well a qualitative prediction that is based on the estimated parameters will transfer to the target sample (e.g., a prediction that some coefficient is positive). Let $A$ denote any event that can be described as a function of the parameter $\theta$. Then 
\[e_{\bold{T},n+1}=\left\{ \begin{array}{cl}
1 & \mbox{ if } \mathbbm{1}\left(\hat{\theta}(S_{\bold{T}}) \in A\right) = \mathbbm{1}\left(\hat{\theta}(S_{n+1}) \in A\right) \\
0 & \mbox{ otherwise}
\end{array}\right.\]
is a transfer error which tells us whether the prediction about $A$ based on the training samples also holds in the target sample.

\subsection{Other Estimation Procedures} In the examples above, a model is trained on $r$ training samples and used to predict properties of a target sample. Our results apply also for other training procedures. To avoid introducing extensive notation, we describe these procedures informally.

\begin{example}[Transfer Learning]
In \emph{transfer learning} problems in computer science (see e.g., \citet{PanQiang}),   some observations from the target sample are available in addition to the training samples $S_{\bold{T}}$. The model or algorithm is trained on these observations jointly, with some specification of how to weight the target sample observations relative to the other training data. The performance of a model estimated in this way is another  transfer error.
\end{example}

\begin{example}[Transfer of Specific Parameters] While some economic parameters are viewed as constant across domains,  other parameters may be viewed as domain-specific. For example,   spatial models of trade often have structural parameters (e.g., the elasticity of demand substitution between goods produced in different countries) whose values are set using estimates from another paper, and ``fundamentals'' (e.g., productivity in each country), which are re-estimated on each sample \citep[see for example][]{AlfaroUrenaetal}. The performance of a model that is estimated and evaluated in this way is a  transfer error.

\end{example}

\begin{example}[Using Cross-Validation to Tune Parameters] \label{ex:DomainCV}

Our framework can also accommodate training procedures in which cross-validation is used to tune select model parameters. For example, black box algorithms often have a complexity parameter (e.g., the penalization parameter in LASSO or the depth of decision trees in a random forest algorithm). One way of choosing the size of this parameter is based on out-of-sample fit \citep{hastie2009elements, chetverikov2021cross}. In our setting, this  means holding out one of the training samples to use for testing, training the algorithm on the remaining $r-1$ training samples, and evaluating fit on the remaining test sample. The chosen complexity parameter is the one that yields the lowest average error across the $r$ possible choices of the test sample. Fixing this value for the complexity parameter, the algorithm is then fit to the entire training data. The performance of such an algorithm on the target sample is a transfer error.
\end{example}

\begin{example}[Counterfactual Predictions]  One way that economic models are  used is to form predictions for outcomes under policy changes that have yet to be implemented. For instance, \cite{McFadden1974} predicted the demand impacts of the then-new BART rapid transit system in the San Franciso Bay Area, and  \cite{PathakShi2013} predicted demand for schools under changes to the Boston school choice system.  One can generalize our framework to cover the case where each sample $S_d$ is instead a pair of two observations, $S_d = (S_d^0, S_d^1)$. The pre-intervention samples $(S_1^0, \dots, S_{n+1}^0)$ are drawn i.i.d. from one distribution, while the post-intervention samples $(S_1^1,\dots,S_{n+1}^1)$ are drawn i.i.d.\ from another. In this more general setting, a transfer error is any function of the training pairs $\{(S_d^0,S_d^1)\}_{d\in \bold{T}}$, the target pair $(S_{n+1}^0,S_{n+1}^1)$, and potentially an independent noise variable.\footnote{Our theoretical results generalize completely for transfer errors defined in this way;  the main limitation is the difficulty of obtaining sufficiently many pre- and post-intervention pairs. We mention this potential application in the case that such data does eventually become available.}
\end{example}

\section{~Extensions and further results} \label{sec:FurtherResults}

Our main results focus on forecasting realized transfer errors, which is useful when we want to know the range of plausible errors in transferring a given model to a new domain.
We now complement those results with procedures for inference focused on population quantities:    Section \ref{sec:Quantiles} provides confidence intervals for quantiles of the transfer error distribution, and Section \ref{sec:Rademacher} provides a confidence interval for the expected transfer error. Since these quantities can be perfectly recovered given data from an infinite number of domains, we expect the lengths of these intervals to vanish as the number of observed domains grows large, unlike the forecast intervals from Section \ref{sec:ResultsIID}.

\subsection{Preliminary Lemma}

We start by establishing a bound that will be useful in the subsequent construction of confidence intervals. Let
\[U= \frac{(n-k)!}{n!} \sum_{(i_1, \dots, i_k)\in \bbT_{r+1,n}} \phi(Z_{i_1}, \dots, Z_{i_k})\]
be an arbitrary U-statistic of degree $k$  with a bounded (and potentially asymmetric) kernel $\phi$ that takes values in $[0, 1]$.

\begin{definition} For every $n,k \in \mathbb{Z}_+$ and $x,y \in \mathbb{R}$, define
\[B_{n, k}(x; y) \equiv \min\Bigg\{b^1_{n,k}(x; y),b^2_{n,k}(x; y),b^3_{n,k}(x; y) \Bigg\}\]
where
\begin{align*}
    b^1_{n,k}(x;y) & \equiv  \exp\left\{-\lceil n/k \rceil \left( x \wedge y \log\left(\frac{x \wedge y}{y}\right) + (1-x \wedge y) \log \left( \frac{1 - x \wedge y}{1 - y}\right)\right)\right\} \\
    b^2_{n,k}(x;y) & \equiv e\cdot \P\left({\rm Binom}(\lceil n/k \rceil; y) \leq \lceil \lceil n/k \rceil \cdot x\rceil\right) \\
    b^3_{n,k}(x; y) & \equiv  \min_{\lambda > 0}\frac{n\lambda}{k} \lb x - \frac{\lambda }{\lambda + k G(\lambda)} y\rb
\end{align*}

\end{definition}

\begin{lemma} \label{lemm:BoundU} 
If $\phi(Z_1, \ldots, Z_k)\in [0, 1]$ almost surely, then
$P(U \leq x) \leq B_{n,k}(x;\E(U))$ for every $x \in [0,1]$.
\end{lemma}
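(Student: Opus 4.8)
The plan is to reduce the bound for the asymmetric U-statistic to a statement about a sum of independent bounded random variables, then invoke classical concentration inequalities in each of the three forms. The key device is Hoeffding's averaging trick: writing $n = qk + s$ with $q = \lfloor n/k\rfloor$, for any permutation $\pi$ of $\{1,\dots,n\}$ the quantity
\[
V_\pi \equiv \frac{1}{q}\sum_{\ell=0}^{q-1} \phi\bigl(Z_{\pi(\ell k + 1)}, \dots, Z_{\pi(\ell k + k)}\bigr)
\]
is an average of $q$ independent copies of a $[0,1]$-valued random variable with mean $\E[\phi] = \E[U]$ (independence holds because the blocks use disjoint indices). Averaging $V_\pi$ over all permutations $\pi$ recovers $U$ exactly (each ordered $k$-tuple of distinct indices appears in the same number of blocks across all permutations), so $U = \E_\pi[V_\pi]$. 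Since $U$ is a convex combination (a mixture over $\pi$) of the $V_\pi$'s, the event $\{U \le x\}$ implies that $V_\pi \le x$ for at least one $\pi$; but this is the wrong direction. Instead I would use convexity the other way: for any nondecreasing convex function, or more directly, I would argue that $\{U \le x\}$ does \emph{not} directly give $\{V_\pi \le x\}$ for a fixed $\pi$. The correct route is: fix one particular partition into blocks, say $\pi = \mathrm{id}$, but this does not equal $U$.

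So the actual argument must be more careful. The standard Hoeffding reduction for \emph{tail bounds} works as follows: any convex function $g$ satisfies $g(U) = g(\E_\pi V_\pi) \le \E_\pi g(V_\pi)$. Taking $g(u) = e^{-\lambda u}$ for $\lambda > 0$ gives $\E[e^{-\lambda U}] \le \E_\pi \E[e^{-\lambda V_\pi}] = \E[e^{-\lambda V_{\mathrm{id}}}]$, since all $V_\pi$ have the same distribution. Hence the moment generating function of $-U$ is dominated by that of $-V_{\mathrm{id}}$, which is an average of $q$ iid bounded terms. From here the three bounds follow by standard Chernoff-type optimization: $b^1$ is the relative-entropy (Chernoff) bound for a Binomial$(q,y)$ lower tail applied via $q = \lceil n/k\rceil$ — I would need to check the ceiling-vs-floor bookkeeping, noting $\lceil n/k\rceil \ge \lfloor n/k \rfloor = q$ and that more blocks only sharpens the lower-tail bound after a monotonicity argument; $b^2$ is the same Binomial lower-tail probability with the extra factor of $e$ coming from a standard "Chernoff with a constant" refinement (the factor $e$ absorbs the discretization from $\lceil \lceil n/k\rceil x\rceil$); and $b^3$ is the raw Markov/Chernoff bound $\P(U \le x) = \P(e^{-\lambda U} \ge e^{-\lambda x}) \le e^{\lambda x}\E[e^{-\lambda U}]$ optimized over $\lambda$, where $G(\lambda)$ presumably denotes the relevant log-MGF increment of a single Bernoulli-dominated term, and the $n/k$ versus $q$ in the exponent again requires the block-count bookkeeping.

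Concretely, the steps in order: (1) set $q = \lceil n/k\rceil$ (or $\lfloor n/k\rfloor$ — determine which the block decomposition genuinely delivers and whether we can pass to the larger value by monotonicity of the lower-tail bound in the number of independent summands), and record the identity $U = \E_\pi[V_\pi]$ with $V_\pi$ an average of $q$ iid $[0,1]$ random variables of mean $\E[U]$; (2) via Jensen applied to $u \mapsto e^{-\lambda u}$, deduce $\E[e^{-\lambda U}] \le \E[e^{-\lambda V_{\mathrm{id}}}] = \bigl(\E[e^{-(\lambda/q)\phi_1}]\bigr)^{q}$ where $\phi_1$ is a single block's kernel value; (3) apply the Chernoff inequality $\P(U\le x)\le e^{\lambda x}\E[e^{-\lambda U}]$ and optimize, once with the Bernoulli-majorization $\E[e^{-t\phi_1}] \le 1 - \E[U](1-e^{-t})$ (valid because $\phi_1 \in [0,1]$ and $e^{-tu}$ is convex in $u$, so it lies below the chord) to get $b^1$ and $b^3$, and once using the exact Binomial lower-tail probability plus the classical refinement giving the factor $e$ to get $b^2$; (4) take the minimum of the three.

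The main obstacle I anticipate is the integer bookkeeping: honestly justifying the $\lceil n/k\rceil$ appearing in $b^1$ and $b^2$ and the $n/k$ in $b^3$ when the clean block decomposition only produces $\lfloor n/k \rfloor$ full blocks, and handling the leftover $s = n - k\lfloor n/k\rfloor$ indices. The resolution is a monotonicity lemma — a lower-tail probability (or its Chernoff bound) for an average of $m$ iid mean-$y$ variables in $[0,1]$ is nonincreasing in $m$ at a fixed threshold $x < y$ — which lets us replace the true number of blocks by $\lceil n/k\rceil$ in the bounds that go the safe direction, while the $b^3$ form is stated with $n/k$ directly and just needs the $q \ge n/k - 1$ type estimate absorbed into the optimization. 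A secondary technical point is verifying that the factor $e$ in $b^2$ is exactly the constant delivered by the standard "Chernoff bound is within a factor $e$ (or $\sqrt{2\pi \cdot (\text{variance})}$, crudely $e$) of the true Binomial tail" estimate; I would cite the relevant lemma rather than re-derive it.
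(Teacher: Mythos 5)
Your reduction via Hoeffding's block averaging plus Jensen is exactly how the paper handles the first two bounds, and your derivation of $b^1$ (Jensen with $e^{-\lambda u}$, Bernoulli majorization of the block MGF, Chernoff optimization) is sound and essentially the paper's route, which simply cites \citet{hoeffding1963probability}. But your plan for $b^2$ has a genuine gap: after the MGF step you only control $\E[e^{-\lambda U}]$, and the claim you need --- that the optimized Chernoff bound is within a factor $e$ of the exact Binomial lower tail --- is false in general. By the local CLT the Chernoff bound exceeds the exact tail $\P(\Bin(m;y)\le j)$ by a factor of order $\sqrt{m}$, so no ``Chernoff with a constant'' refinement can deliver $e\cdot\P(\Bin(\lceil n/k\rceil;y)\le\lceil\cdot\rceil)$, which is strictly sharper than anything obtainable from MGF domination alone. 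The paper instead applies the Jensen step with the piecewise-linear convex functions $\psi(z)=(z-t)_{+}$, through which Bentkus's inequality \citep{bentkus2004hoeffding} for sums of independent $[0,1]$ variables transfers directly to $U$; the factor $e$ is intrinsic to Bentkus's theorem, not a discretization correction.

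The third bound is also not reachable by your route. The paper proves $b^3$ without the block decomposition at all: it establishes a self-bounding/leave-one-out property of $W=(n/k)U$, namely $W-W_i\le 1$ and $\sum_{i=1}^{n}(W-W_i)^2\le kW$ with $W_i=\inf_{z_i}W$, and then invokes Theorem 13 of \citet{maurer2006concentration}, which yields the log-MGF bound with $G(\lambda)=(e^{\lambda}-\lambda-1)/\lambda$ and the characteristic factor $\lambda/(\lambda+kG(\lambda))$; this is why $b^3$ involves $n/k$ rather than a block count. Your block-Chernoff sketch would instead produce a Bennett/Poisson-type exponent of the form $\inf_{s}\lfloor n/k\rfloor\left(sx-y(1-e^{-s})\right)$, a different (and differently shaped) bound, so it does not prove the lemma as stated, which requires all three of $b^1,b^2,b^3$ since $B_{n,k}$ is their minimum. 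Finally, on the bookkeeping you flag: the disjoint-block construction honestly yields only $\lfloor n/k\rfloor$ independent blocks, and the paper's own Proposition in the appendix is stated with $m=\lfloor n/k\rfloor$ (the ceiling in the definition of $b^1,b^2$ appears to be a typo); your proposed monotonicity argument goes the wrong way, since the ceiling version is a strictly stronger claim than what the block argument proves, so you should not attempt to upgrade it.
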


\subsection{Quantiles of transfer error} \label{sec:Quantiles}

Let $F$ denote the CDF of $e_{\bT,n+1}$, which we assume is continuous. This section builds a confidence interval for the
$\beta$-th quantile of $F$, denoted $q_{\beta}$.

For arbitrary $q \in \mathbb{R}$ and realized metadata $\bM = \{S_1, \dots, S_n\}$, define
\[\varphi(q,\bM) = \frac{(n - r - 1)!}{n!}\sum_{(d_1, \ldots, d_{r+1})\in \bbT_{r+1, n}}\I(e_{(d_1, \dots, d_r), d_{r+1}})\le q)\]
where  $\I(\cdot)$ is the indicator function, recalling that $e_{(d_1, \dots, d_r), d_{r+1}}$ denotes the observed transfer error from samples $(S_{d_1}, \dots, S_{d_r})$ to sample $S_{d_{r+1}}$. This is the fraction of observed transfer errors in the metadata (from $r$ training samples to one test sample) that are less than $q$. Then
$U_\beta \equiv \varphi(q_\beta,\bM)$ 
is a U-statistic where by definition,
$\E[U_\beta]= \beta$. Lemma \ref{lemm:BoundU} then implies
\begin{equation} \label{eq:UseB}
\P(U_\beta \leq x) \leq B_{n,r+1}(x, \beta) \quad \quad \P(U_\beta \geq x)=\P(1 - U_\beta \leq 1-x) \leq B_{n,r+1}(1-x, 1-\beta).
\end{equation}

\begin{definition}
For any quantile $\beta \in (0,1)$ and confidence level $1-\alpha \in (0, 1)$, let
$\hat{u}^{+}_{\beta}(\alpha) = \inf\{u: B_{n, r+1}(u; \beta)\ge \alpha\}$ and $ \hat{u}^{-}_{\beta}(\alpha) = \sup\{u: B_{n, r+1}(1 - u; 1 - \beta)\ge \alpha\}$. Further define $\hat{q}_{\beta}^{\text{L}}(\alpha) \equiv \min\left\{q: \varphi(q,\bM) \ge \hat{u}^{+}_{\beta}(\alpha)\right\}$ and $\hat{q}_{\beta}^{\text{U}}(\alpha) \equiv \max\left\{q: \varphi(q,\bM) \le \hat{u}^{-}_{r}(\alpha)\right\}.$
\end{definition}

Since $B_{n,r+1}(u;\cdot)$ is right-continuous in $u$, it follows from  (\ref{eq:UseB}) that $\P(U_\beta < \hat{u}^+_\beta(\alpha)) \leq \alpha$ and $ \P(U_\beta > \hat{u}^-_\beta(\alpha))\leq \alpha.$ Since $\varphi(q,\bM)$ is monotonically increasing in $q$, the event $\{U_\beta < \hat{u}^+_\beta(\alpha)\}$ is equivalent to
$\{q_\beta < \hat{q}_{\beta}^{\text{L}}(\alpha)\}$, while $\{U_\beta > \hat{u}^-_\beta(\alpha)\}$ is equivalent to
$\{q_\beta > \hat{q}_{\beta}^{\text{U}}(\alpha)\}.$
This yields:

\begin{proposition}\label{thm:quantile_transfer_error_CI}
For any quantile $\beta \in (0,1)$ and confidence level $1-\alpha \in (0, 1)$,
\\ 
$ P(q_{\beta}\le \hat{q}_{\beta}^{ \text{U}}(\alpha))\ge 1 - \alpha$ and
  $\P\lb q_{\beta}\in \left[\hat{q}_{ \beta}^{\text{L}}(\alpha/2), \hat{q}_{ \beta}^{\text{U}}(\alpha/2)\right]\rb\ge 1- \alpha.$
\end{proposition}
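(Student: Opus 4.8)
\textbf{Proof proposal for Proposition~\ref{thm:quantile_transfer_error_CI}.}
The plan is to reduce both statements to the two tail bounds on the U-statistic $U_\beta = \varphi(q_\beta,\bM)$ recorded in~\eqref{eq:UseB}, using the fact that $q\mapsto\varphi(q,\bM)$ is a (weakly) monotone step function. First, I would spell out carefully the equivalence between events about $U_\beta$ and events about $q_\beta$. Because $\varphi(\cdot,\bM)$ is nondecreasing, for any threshold $u$ the set $\{q: \varphi(q,\bM)\ge u\}$ is an up-set, so its infimum $\hat q_\beta^{\mathrm L}(\alpha)$ (for $u=\hat u_\beta^+(\alpha)$) satisfies: $\varphi(q_\beta,\bM) < \hat u_\beta^+(\alpha)$ iff $q_\beta < \hat q_\beta^{\mathrm L}(\alpha)$ --- here I must be slightly careful about whether the inequality is strict and whether the infimum is attained, which is exactly why the definition uses $\min$ and the text invokes the right-continuity of $B_{n,r+1}(u;\cdot)$ in $u$. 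Similarly $\{q:\varphi(q,\bM)\le u\}$ is a down-set, giving $\varphi(q_\beta,\bM) > \hat u_\beta^-(\alpha)$ iff $q_\beta > \hat q_\beta^{\mathrm U}(\alpha)$ (with $u = \hat u_\beta^-(\alpha)$). These two equivalences are the structural heart of the argument.

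Next I would assemble the one-sided claim. By construction $\E[U_\beta]=\beta$ since $F$ is the CDF of $e_{\bT,n+1}$ and $q_\beta$ is its $\beta$-quantile (continuity of $F$ ensures $F(q_\beta)=\beta$ exactly). Applying Lemma~\ref{lemm:BoundU} with $k=r+1$ gives $\P(U_\beta \le x)\le B_{n,r+1}(x;\beta)$, hence $\P(U_\beta > \hat u_\beta^-(\alpha)) \le \alpha$ after using the definition of $\hat u_\beta^-(\alpha)$ as a supremum over the set where $B_{n,r+1}(1-u;1-\beta)\ge\alpha$ together with the second half of~\eqref{eq:UseB} and right-continuity to guarantee the bound holds at the supremum. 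Combining with the monotonicity equivalence, $\P(q_\beta > \hat q_\beta^{\mathrm U}(\alpha)) = \P(U_\beta > \hat u_\beta^-(\alpha)) \le \alpha$, which is the stated $\P(q_\beta \le \hat q_\beta^{\mathrm U}(\alpha)) \ge 1-\alpha$.

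For the two-sided interval I would run the symmetric argument on the lower end: Lemma~\ref{lemm:BoundU} applied to $1-U_\beta$ (whose expectation is $1-\beta$) yields $\P(U_\beta < \hat u_\beta^+(\alpha/2)) \le \alpha/2$, and the up-set equivalence converts this to $\P(q_\beta < \hat q_\beta^{\mathrm L}(\alpha/2)) \le \alpha/2$. Running the first part at level $\alpha/2$ gives $\P(q_\beta > \hat q_\beta^{\mathrm U}(\alpha/2)) \le \alpha/2$. A union bound over these two exceptional events then gives $\P\big(q_\beta \in [\hat q_\beta^{\mathrm L}(\alpha/2),\hat q_\beta^{\mathrm U}(\alpha/2)]\big) \ge 1-\alpha$, as claimed.

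The main obstacle --- really the only delicate point --- is handling the boundary/attainment issues cleanly: making sure that the passage from the strict inequalities produced by the monotone-inversion ($\varphi(q_\beta,\bM) < \hat u^+$, etc.) to the probabilities controlled by Lemma~\ref{lemm:BoundU} (which bounds $\P(U \le x)$ with a weak inequality) is done with the correct direction of continuity of $B_{n,r+1}$ in its first argument, and that $\hat u_\beta^\pm$ being defined via $\inf$/$\sup$ still deliver the desired tail bounds \emph{at} those extremal values. Since the excerpt already notes that $B_{n,r+1}(u;\cdot)$ is right-continuous in $u$, this is a matter of bookkeeping rather than a genuine difficulty; everything else is a direct application of Lemma~\ref{lemm:BoundU} plus monotonicity of $\varphi$.
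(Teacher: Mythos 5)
Your proposal is correct and follows essentially the same route as the paper: the tail bounds \eqref{eq:UseB} from Lemma~\ref{lemm:BoundU} (with $\E[U_\beta]=\beta$ by continuity of $F$), right-continuity to get $\P(U_\beta<\hat u^+_\beta(\alpha))\le\alpha$ and $\P(U_\beta>\hat u^-_\beta(\alpha))\le\alpha$, the monotone-inversion equivalences $\{U_\beta<\hat u^+_\beta\}=\{q_\beta<\hat q^{\mathrm L}_\beta\}$ and $\{U_\beta>\hat u^-_\beta\}=\{q_\beta>\hat q^{\mathrm U}_\beta\}$, and a union bound at level $\alpha/2$ per side. One cosmetic slip: in your last step the bound $\P(U_\beta<\hat u^+_\beta(\alpha/2))\le\alpha/2$ comes from the first half of \eqref{eq:UseB} (the lemma applied to $U_\beta$ itself), while it is the $\hat u^-_\beta$/$\hat q^{\mathrm U}_\beta$ side that uses the lemma applied to $1-U_\beta$; since both halves are already in hand, this does not affect the argument.
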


Figure \ref{fig:median_bounds} applies Proposition \ref{thm:quantile_transfer_error_CI} to construct two-sided 81\% confidence interval for the median raw transfer error, median  transfer shortfall, and median transfer deterioration. As  in Figure \ref{fig:CI_77}, these confidence intervals are substantially wider for the black box algorithms, and have higher upper bounds.

\begin{figure}[h]\centering
\subfloat[]{\includegraphics[width=.45\linewidth]{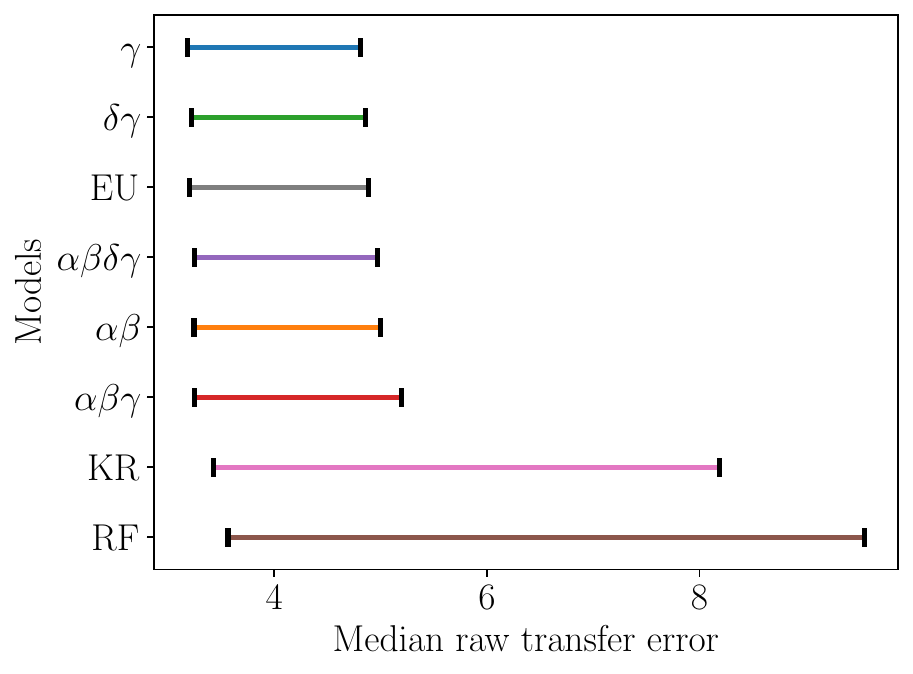}}\\
\subfloat[]{\includegraphics[width=.45\linewidth]{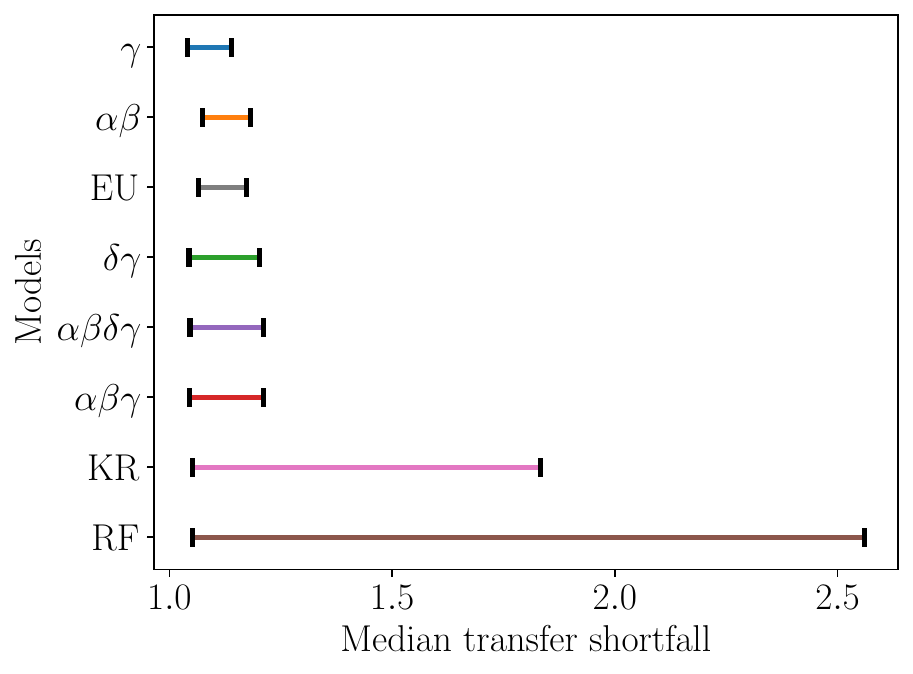}}\hfill
\subfloat[]{\includegraphics[width=.45\linewidth]{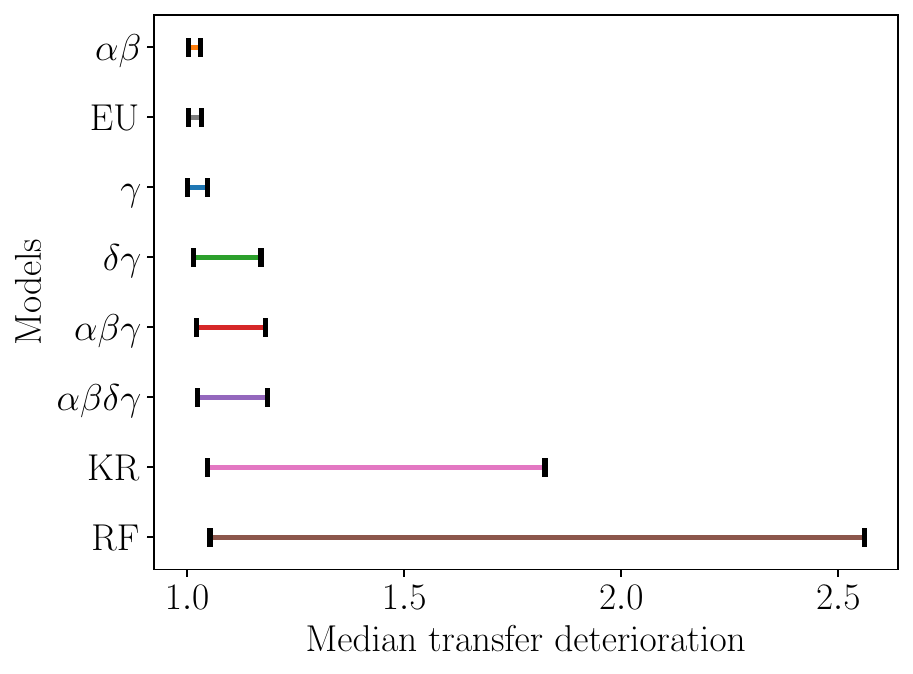}}\par 
\caption{81\% confidence intervals for the median of (a) raw transfer error, (b) transfer shortfall, and (c) transfer deterioration.}
\label{fig:median_bounds}
\end{figure}

\subsection{Expected transfer error} \label{sec:Rademacher}

This section constructs confidence intervals for the expected transfer error, $\mu \equiv \mathbb{E}(e_{\bT, n+1})$, under the assumption that transfer errors are uniformly bounded (in which case it is without loss to set $e_{\bT, n+1}\in [0, 1]$). Define the U-statistic
\begin{equation*}
  U = \frac{(n-r-1)!}{n!}\sum_{(d_1, \ldots, d_{r+1})\in \bbT_{r+1, n}}e_{(d_1, \dots, d_r), d_{r+1}}.
\end{equation*}
Because $\mathbb{E}[U] = \mu$, Lemma \ref{lemm:BoundU} implies that $\P(U \leq x ) \leq B_{n,r+1}(x,\mu)$ and $\P(U \geq x ) \leq B_{n,r+1}(1-x,1-\mu)$ for all $x \in \mathbb{R}$.

\begin{definition}
For any confidence guarantee $1-\alpha \in (0, 1)$, let
$\hat{\mu}^{+}(\alpha) = \sup\{\mu: B_{n, r+1}(U; \mu)\ge \alpha\}$ and $ \hat{\mu}^{-}(\alpha) = \inf\{\mu: B_{n, r+1}(1 - U; 1 - \mu)\ge \alpha\}$.
\end{definition}

It follows from   (\ref{eq:UseB}) that $\P(U < \hat{u}^+(\alpha)) \leq \alpha$ and $ \P(U > \hat{u}^-(\alpha))\leq \alpha$, which implies:

\begin{proposition} \label{prop:ExpectedError}
 If $e_{\T, d}\in [0, 1]$ almost surely, then
  $\P\lb\mu \le \hat{\mu}^{+}(\alpha)\rb\ge 1- \alpha$
  and
  \\
 $\P\lb\mu \in [\hat{\mu}^{-}(\alpha/2), \hat{\mu}^{+}(\alpha/2)]\rb\ge 1- \alpha.$
\end{proposition}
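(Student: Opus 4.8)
\emph{Proof strategy.} The plan is to produce the two-sided interval by inverting the one-sided tail bounds of Lemma~\ref{lemm:BoundU}, in exactly the way that Proposition~\ref{thm:quantile_transfer_error_CI} was obtained. The two ingredients are: (i) the tail inequalities $\P(U\le x)\le B_{n,r+1}(x;\mu)$ and $\P(1-U\le x)\le B_{n,r+1}(x;1-\mu)$, valid for every $x$, which come from applying Lemma~\ref{lemm:BoundU} to the degree-$(r+1)$ U-statistic $U$ (kernel $e_{(d_1,\dots,d_r),d_{r+1}}\in[0,1]$, with $\E U=\mu$) and to its complement $1-U$ (kernel $1-e_{(d_1,\dots,d_r),d_{r+1}}\in[0,1]$, with $\E(1-U)=1-\mu$); and (ii) the monotonicity of $(x,y)\mapsto B_{n,r+1}(x;y)$, namely that it is non-decreasing in $x$ and non-increasing in $y$.

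First I would establish this monotonicity componentwise. The term $b^2$ is non-decreasing in $x$ because $\lceil\lceil n/k\rceil x\rceil$ is, and non-increasing in $y$ because ${\rm Binom}(m;y)$ is stochastically increasing in $y$; $b^1$ equals $e^{-\lceil n/k\rceil\, d(x\|y)}$ for $x<y$ (with $d$ the Bernoulli relative entropy, which decreases in $x$ and increases in $y$) and equals $1$ for $x\ge y$, so it too is non-decreasing in $x$ and non-increasing in $y$; and $b^3$, for each fixed $\lambda>0$, is affine in $(x,y)$ with positive slope in $x$ and negative slope in $y$. Since a pointwise minimum of functions sharing a monotonicity direction inherits it, $B_{n,r+1}$ has both claimed monotonicities. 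Next comes the one-sided bound: because $m\mapsto B_{n,r+1}(U;m)$ is non-increasing, the set $\{m:B_{n,r+1}(U;m)\ge\alpha\}$ is a down-set with supremum $\hat{\mu}^{+}(\alpha)$ (and it contains $U$ itself, since $B_{n,r+1}(U;U)\ge b^1(U;U)=1$, so the bound $\hat{\mu}^{+}(\alpha)$ is never trivial), whence $\{\mu>\hat{\mu}^{+}(\alpha)\}\subseteq\{B_{n,r+1}(U;\mu)<\alpha\}$; and because $x\mapsto B_{n,r+1}(x;\mu)$ is non-decreasing, $\{x:B_{n,r+1}(x;\mu)<\alpha\}$ is a down-set, so combining $\P(U\le x)\le B_{n,r+1}(x;\mu)$ with a monotone-limit step at the right endpoint of that down-set yields $\P(B_{n,r+1}(U;\mu)<\alpha)\le\alpha$, hence $\P(\mu\le\hat{\mu}^{+}(\alpha))\ge 1-\alpha$. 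Running the identical argument on $1-U$ with the bound $\P(1-U\le x)\le B_{n,r+1}(x;1-\mu)$ gives $\P(\mu\ge\hat{\mu}^{-}(\alpha))\ge1-\alpha$, and a union bound over the two level-$(\alpha/2)$ events gives $\P(\mu\in[\hat{\mu}^{-}(\alpha/2),\hat{\mu}^{+}(\alpha/2)])\ge1-\alpha$.

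I expect the main obstacle to be purely bookkeeping: verifying the monotonicity of $B_{n,r+1}$ carefully through the $x\wedge y$ kink in $b^1$ and the infimum over $\lambda$ in $b^3$, and handling the strict-versus-weak inequalities at the endpoints $\hat{\mu}^{\pm}(\alpha)$ and at the right endpoint of $\{x:B_{n,r+1}(x;\mu)<\alpha\}$, where right-continuity of $B_{n,r+1}$ (equivalently, the monotone-limit step) is what turns the supremum/infimum definitions into exact coverage. The argument is otherwise a textbook test inversion, structurally identical to the proof of Proposition~\ref{thm:quantile_transfer_error_CI}.
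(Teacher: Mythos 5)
Your proposal is correct and follows essentially the same route as the paper: apply Lemma \ref{lemm:BoundU} to the degree-$(r+1)$ U-statistic $U$ and to $1-U$ (both kernels in $[0,1]$, with means $\mu$ and $1-\mu$), then invert the resulting tail bounds as in Proposition \ref{thm:quantile_transfer_error_CI} and finish with a union bound; your explicit monotonicity/right-limit bookkeeping for $B_{n,r+1}$ just fills in details the paper leaves implicit. (The only slip is the inessential aside claiming $B_{n,r+1}(U;U)\ge b^1(U;U)$ — $B$ is a minimum, so the inequality runs the other way — but nontriviality of $\hat{\mu}^{+}(\alpha)$ is not needed for the coverage statement.)
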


Figure \ref{fig:exp_bounds} applies this result to construct   two-sided 81\% confidence intervals for the transfer errors we considered in Section \ref{sec:Transfer}. Since transfer shortfall and transfer deterioration are not bounded, we report instead confidence intervals for the expectation of their inverses 
$\frac{ \min_{m \in \mathcal{M}} e\left(f^m_{S_{n+1}},S_{n+1}\right)}{ e(f_{S_{\bold{T}}},S_{n+1})}$ and $
\frac{e(f_{S_{n+1}},S_{n+1})}{ e(f_{S_\bold{T}},S_{n+1})};$
lower values for these measures correspond to worse transfer performance. We again  find that the confidence intervals for the black box algorithms are qualitatively worse than those for the economic models.

\begin{figure}[h]\centering
\subfloat[]
{\includegraphics[width=.45\linewidth]{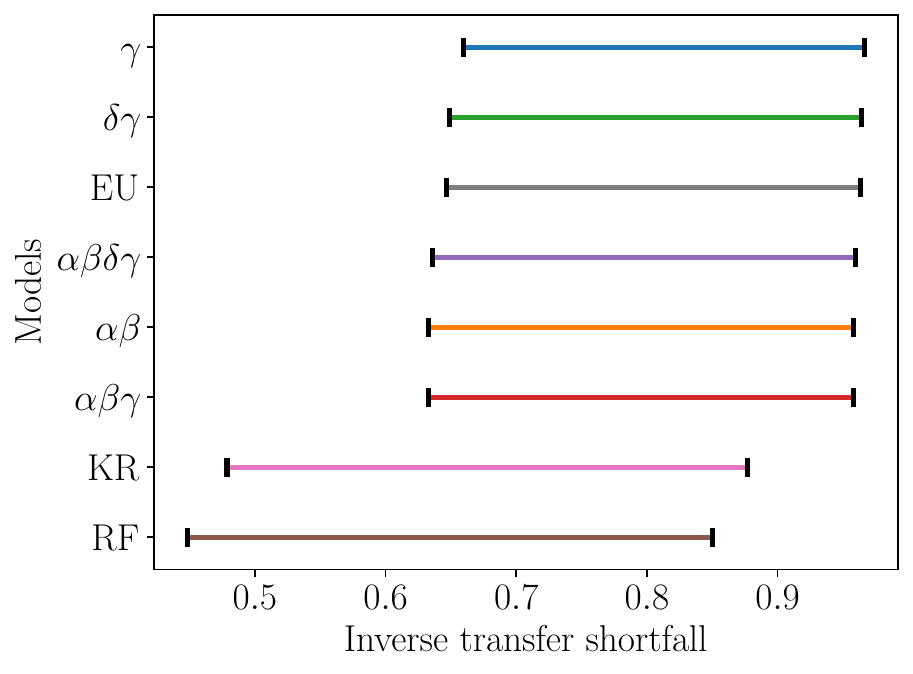}}\hfill
\subfloat[]{\includegraphics[width=.45\linewidth]{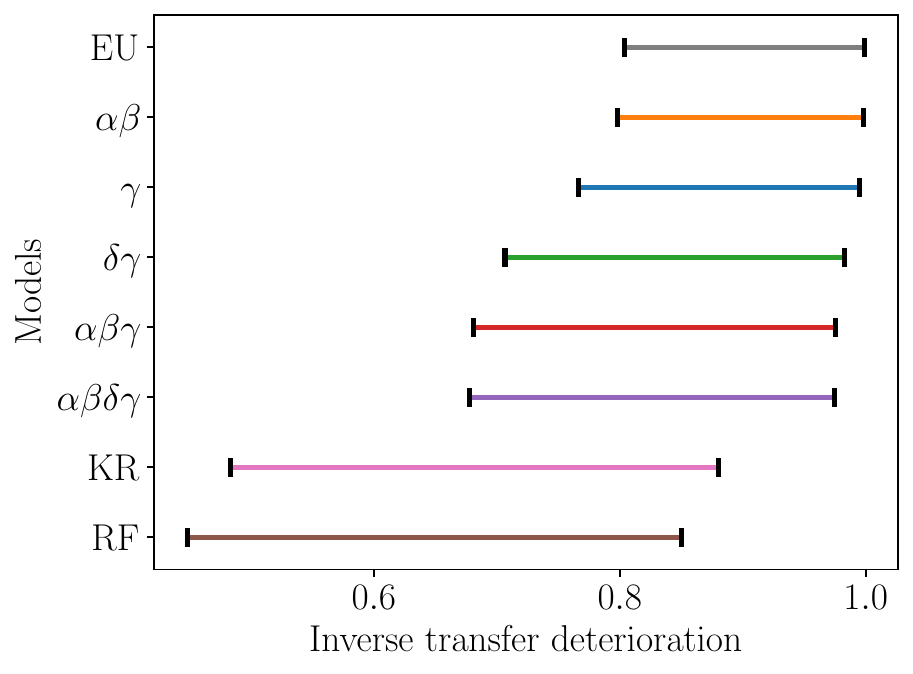}}
\caption{81\% forecast intervals for (a) expected inverse transfer shortfall, (b) expected inverse transfer deterioration.}
\label{fig:exp_bounds}
\end{figure}

\subsection{Proof of Lemma \ref{lemm:BoundU}}\label{app:QuantileExpected}
 \cite{hoeffding1963probability} shows that  $P(U \leq x) \leq b^1_{n,k}(x,\E(U))$, and \cite{bates2021distribution}  shows 
 that $P(U \leq x) \leq b^2_{n,k}(x,\E(U))$. We now show that if $x \in [0,1]$ then $P(U \leq x) \leq b^3_{n,k}(x,\E(U))$. To do this, we use a series of intermediate results to extend a result of \cite{bates2021distribution} on U-statistics of degree $2$ with bounded kernels to U-statistics with bounded kernels for any order $k\ge 2$. 
 
 Let $Z_1, \ldots, Z_n$ be i.i.d.\ random variables and $\phi: \R^{k}\rightarrow [0, 1]$ be a bounded function. Then a U-statistic of degree $k$ is defined as
 \begin{equation}
   \label{eq:Ustat}
   U = \frac{(n-k)!}{n!}\sum_{i_1, \ldots, i_k}\phi(Z_{i_1}, \ldots, Z_{i_k}),
 \end{equation}
 where $\sum_{i_1, \ldots, i_k}$ denotes the sum over all $k$-tuples in $\mathcal{N}$ with mutually distinct elements. The average of $Z_i$ is a special case of \eqref{eq:Ustat} with $k = 1$ and $\phi(z) = z$.
 
 Let $m = \lfloor n / k\rfloor$ and $\bfpin: \mathcal{N}\mapsto \mathcal{N}$ be a uniformly random permutation. For each permutation $\pi$, define 
 \[W_{\pi} = \frac{1}{m}\sum_{j=1}^{m}\phi\lb Z_{\pi((j-1)k + 1)}, \ldots, Z_{\pi(jk)}\rb.\]
 Note that the summands in $W_{\pi}$ are independent given $\pi$. Then $U = \E_{\bfpin}[W_{\bfpin}],$ 
where $\E_{\bfpin}$ denotes the expectation with respect to $\bfpin$ when  conditioning on $Z_1, \ldots, Z_n$. By Jensen's inequality, for any convex function $\psi$,
$\E[\psi(U)] = \E[\psi(\E_{\bfpin}[W_{\bfpin}])]\le \E[\E_{\bfpin}\psi(W_{\bfpin})] = \E_{\bfpin}[\E \psi(W_{\bfpin})].$ Since $W_{\pi}$ has identical distributions for all $\pi$,
\begin{equation}
  \label{eq:hoeffding_representation}
  \E[\psi(U)] \le \E[\psi(W_{\mathbf{id}})]
\end{equation}
where $\mathbf{id}$ is the permutation that maps each element to itself.

Recalling that Hoeffding's inequality is derived from the moment-generating function $\psi(z) = e^{\lambda z}$ \citep{hoeffding1963probability}, and the Bentkus inequality is derived from the piecewise linear function $\psi(z) = (z - t)_{+}$ \citep{bentkus2004hoeffding}, the following tail inequalities for U-statistics are a direct consequence of \eqref{eq:hoeffding_representation}. 
 \begin{proposition}\label{prop:Ustat_hoeffding_bentkus}
Let $U$ be a U-statistic of order $k$ with a bounded kernel $\phi\in [0, 1]$ in the form of \eqref{eq:Ustat} and $m = \lfloor n / k\rfloor$. Then
   \begin{enumerate}     
   \item (Hoeffding inequality for U-statistics, Section 5 of \citealt{hoeffding1963probability})
     \[\P(U \le x)\le \exp\left\{-m h_1\lb x\wedge \E[U]; \E[U]\rb\right\},\]
where
     \[ h_1(y; \mu) = y\log\lb\frac{y}{\mu}\rb + (1 - y)\log\lb\frac{1 - y}{1 - \mu}\rb.\]
   \item (Bentkus inequality for U-statistics, modified from \citealt{bentkus2004hoeffding})
     \[\P\lb U\le x\rb\le e\P\lb\Bin\lb m; \E[U]\rb \le \lceil mx\rceil\rb.\]
   \end{enumerate}
 \end{proposition}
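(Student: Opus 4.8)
The plan is to derive both bounds from the test-function representation \eqref{eq:hoeffding_representation}, which gives $\E[\psi(U)]\le\E[\psi(W_{\mathbf{id}})]$ for every convex $\psi$, where $W_{\mathbf{id}}=\frac{1}{m}\sum_{j=1}^{m}\phi(Z_{(j-1)k+1},\dots,Z_{jk})$ is an average of $m=\lfloor n/k\rfloor$ i.i.d.\ $[0,1]$-valued blocks, each distributed as $\phi(Z_1,\dots,Z_k)$ and hence with common mean $\mu\equiv\E[U]$. The key observation I would exploit is that the sharp Hoeffding and Bentkus inequalities for such an i.i.d.\ average are \emph{themselves} proved by the Cram\'er--Chernoff method: one chooses a convex $\psi$ with $\I(u\le x)\le\psi(u)$ for all $u$ and then bounds $\E[\psi(W_{\mathbf{id}})]$. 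Since \eqref{eq:hoeffding_representation} propagates any convex $\psi$ from $W_{\mathbf{id}}$ to $U$, the same choice of $\psi$ immediately yields the corresponding bound for $U$.

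\textbf{Part (1).} For the Hoeffding-type bound I would take $\psi(u)=e^{-\lambda u}$ ($\lambda>0$), which is convex, and use $\I(u\le x)\le e^{\lambda(x-u)}$. Then \eqref{eq:hoeffding_representation} gives
\[
\P(U\le x)\;\le\; e^{\lambda x}\E[e^{-\lambda U}]\;\le\; e^{\lambda x}\E[e^{-\lambda W_{\mathbf{id}}}]\;=\; e^{\lambda x}\bigl(\E[e^{-(\lambda/m)\phi}]\bigr)^{m},
\]
the last equality because the $m$ blocks of $W_{\mathbf{id}}$ are i.i.d.\ copies of $\phi(Z_1,\dots,Z_k)$. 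Bounding the inner expectation by $1-\mu+\mu e^{-\lambda/m}$ (convexity of $t\mapsto e^{-(\lambda/m)t}$ on $[0,1]$ together with $\E[\phi]=\mu$) reduces the problem to minimizing the single-variable function $e^{sx}(1-\mu+\mu e^{-s})$ over $s=\lambda/m>0$. That is the classical Chernoff computation: its value is $e^{-h_1(x;\mu)}$ for $x<\mu$ and (trivially) at most $1=e^{-h_1(\mu;\mu)}$ for $x\ge\mu$, yielding $\P(U\le x)\le\exp\{-m\,h_1(x\wedge\mu;\mu)\}$.

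\textbf{Part (2).} For the Bentkus-type bound I would recall that Bentkus's proof \citep{bentkus2004hoeffding} of the corresponding inequality for an i.i.d.\ average $\bar W$ of $m$ $[0,1]$-valued variables with mean $\mu$ proceeds by choosing a convex, piecewise-linear test function $\psi^{*}$ of hinge type, $u\mapsto (t^{*}-u)_{+}/(t^{*}-x)$ with an optimally chosen $t^{*}>x$, which satisfies $\I(u\le x)\le\psi^{*}(u)$ for all $u$ and $\E[\psi^{*}(\bar W)]\le e\,\P(\Bin(m;\mu)\le\lceil mx\rceil)$. Plugging this $\psi^{*}$ into \eqref{eq:hoeffding_representation},
\[
\P(U\le x)\;\le\;\E[\psi^{*}(U)]\;\le\;\E[\psi^{*}(W_{\mathbf{id}})]\;\le\; e\,\P\bigl(\Bin(m;\mu)\le\lceil mx\rceil\bigr),
\]
which is the stated bound.

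\textbf{Anticipated difficulty.} The step that needs the most care is Part (2): one has to verify that the extremal function in Bentkus's argument is genuinely convex and can be normalized to dominate $\I(\cdot\le x)$ uniformly, so that it qualifies as an admissible $\psi$ in \eqref{eq:hoeffding_representation}; in Part (1) this is automatic because the exponential moment method is manifestly built from convex exponentials. The remaining points—identifying the Chernoff optimum with the entropy expression $h_1$, and handling $m=\lfloor n/k\rfloor$ when $k\nmid n$ (only the leftover block is discarded, which is already encoded in $W_{\mathbf{id}}$)—are routine.
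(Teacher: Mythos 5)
Your proposal is correct and follows essentially the same route as the paper: both derive the two bounds from the convex-ordering representation \eqref{eq:hoeffding_representation}, using the exponential (Chernoff/Hoeffding) test function for part (1) and Bentkus's convex piecewise-linear hinge function for part (2), so that the classical i.i.d.-average bounds transfer directly to the U-statistic. Your write-up merely spells out the Chernoff optimization and the admissibility of the hinge function, which the paper leaves as citations to \citet{hoeffding1963probability} and \citet{bentkus2004hoeffding}.
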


 Other  concentration inequalities can be derived from  the leave-one-out property. Write $U(Z_1, \ldots, Z_n)$ for $U$ and let
 $U_{i} = \inf_{z_i}U(Z_1, \ldots, Z_{i-1}, z_{i}, Z_{i+1}, \ldots, Z_{n}).$ Note that $U_{i}$ is independent of $Z_{i}$. Since $\phi(\cdot)\ge 0$, we have
$0\le U - U_{i} \le \frac{(n-k)!}{n!}\sum_{j=1}^{k}\sum_{i_1, \ldots, i_k, i_j = i}\phi(Z_{i_1}, \ldots, Z_{i_k})$ so 
$\frac{n}{k}(U - U_{i})\le 1$ and
\begin{align*}
  \sum_{i=1}^{n}(U - U_{i})^2
  &\le \frac{((n - k)!)^2}{(n!)^2}\sum_{i=1}^{n}\lb\sum_{j=1}^{k}\sum_{i_1, \ldots, i_k, i_j = i}\phi(Z_{i_1}, \ldots, Z_{i_k})\rb^2\\
  &\stackrel{(i)}{\le} \frac{k(n - k)!}{n\cdot n!}\sum_{j=1}^{k}\sum_{i=1}^{n}\sum_{i_1, \ldots, i_k, i_j = i}\phi(Z_{i_1}, \ldots, Z_{i_k})^2\\
  &\stackrel{(ii)}{\le} \frac{k(n - k)!}{n\cdot n!}\sum_{j=1}^{k}\sum_{i=1}^{n}\sum_{i_1, \ldots, i_k, i_j = i}\phi(Z_{i_1}, \ldots, Z_{i_k})\\  
  & = \frac{k^2}{n}U,
\end{align*}
where (i) applies the Cauchy-Schwarz inequality and (ii) uses the fact that $\phi(\cdot)\le 1$. If we let $W = (n / k)U$ and $W_i = (n / k)U_i$, then $
  W - W_{i}\le 1, \quad \sum_{i=1}^{n}(W - W_{i})^2\le kW.$ This implies that $W$ as a function of $Z_{1}, \ldots, Z_{n}$ satisfies the assumptions for the claim (34) in Theorem 13 of \cite{maurer2006concentration} with constant $a = k$.\footnote{Theorem 13 of \cite{maurer2006concentration} states a weaker result that $\log \E[e^{\lambda(\E[W] - W)}]\le \frac{k\E[W]}{2}\lambda^2$. The stronger version stated here can be found in the second last display in the proof of Theorem 13 of \cite{maurer2006concentration}.} 

\begin{proposition}[Theorem 13, \citealt{maurer2006concentration}]\label{prop:Ustat_maurer}
  Let $G(\lambda) = (e^{\lambda} - \lambda - 1)/\lambda$. Then for any $\lambda > 0$,
  \[\log \E[e^{\lambda(\E[W] - W)}]\le \frac{k\lambda G(\lambda)}{\lambda + k G(\lambda)} \E [W].\]
  This further implies that for any $x \in (0, \E[U])$,
  \[\P\lb U \le x\rb\le \exp\left\{\min_{\lambda > 0}\frac{n\lambda}{k} \lb x - \frac{\lambda }{\lambda + k G(\lambda)} \E [U]\rb\right\}.\]
\end{proposition}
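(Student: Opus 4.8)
The plan is to treat Part 1 (the moment-generating-function bound) as an application of Maurer's self-bounding machinery to the rescaled statistic $W = (n/k)U$, and then to derive Part 2 (the lower-tail inequality for $U$) from it by a routine Cram\'er--Chernoff argument together with the change of variables $\E[W] = (n/k)\E[U]$.

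For Part 1, the preceding text has already verified the two structural facts $0 \le W - W_i \le 1$ and $\sum_{i=1}^n (W - W_i)^2 \le kW$, which are exactly the hypotheses needed to invoke Theorem 13 of \cite{maurer2006concentration} with constant $a = k$. The published statement of that theorem only yields the sub-Gaussian form $\log \E[e^{\lambda(\E[W] - W)}] \le \frac{k\E[W]}{2}\lambda^2$, so I would instead extract the sharper intermediate bound $\log \E[e^{\lambda(\E[W] - W)}] \le \frac{k\lambda G(\lambda)}{\lambda + kG(\lambda)}\E[W]$ that appears in the penultimate display of Maurer's proof. That bound is obtained by the entropy method: one applies the tensorization (subadditivity) of entropy to $e^{\lambda(\E[W]-W)}$, bounds each per-coordinate entropy term using $W - W_i \in [0,1]$ together with the self-bounding inequality $\sum_i (W-W_i)^2 \le kW$, and integrates the resulting differential inequality for the log-MGF via the Herbst argument. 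I would simply cite this step rather than reproduce it.

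Given Part 1, Part 2 is mechanical. For $\lambda > 0$, Markov's inequality applied to the nonnegative variable $e^{\lambda(\E[W] - W)}$ gives, for any threshold $t$,
\[\P(W \le t) \le e^{-\lambda(\E[W] - t)}\,\E\big[e^{\lambda(\E[W] - W)}\big] \le \exp\left\{-\lambda(\E[W] - t) + \frac{k\lambda G(\lambda)}{\lambda + kG(\lambda)}\E[W]\right\}.\]
Since $W = (n/k)U$, the event $\{U \le x\}$ coincides with $\{W \le (n/k)x\}$, so I set $t = (n/k)x$ and substitute $\E[W] = (n/k)\E[U]$. The exponent then simplifies by the cancellation
\[-\lambda\E[W] + \frac{k\lambda G(\lambda)}{\lambda + kG(\lambda)}\E[W] = -\frac{\lambda^2}{\lambda + kG(\lambda)}\E[W],\]
so that the full exponent becomes $\frac{n\lambda}{k}\big(x - \frac{\lambda}{\lambda + kG(\lambda)}\E[U]\big)$. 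Taking logarithms and minimizing over $\lambda > 0$ yields exactly the stated tail bound; the restriction $x \in (0, \E[U])$ is what guarantees the minimized exponent is negative, so the bound is nontrivial.

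The main obstacle is Part 1, and specifically the need to go beyond the quoted form of Maurer's theorem: the clean dependence on $G(\lambda) = (e^\lambda - \lambda - 1)/\lambda$ in the final tail bound comes from the sharper log-MGF bound, not the sub-Gaussian one, so care is required to cite the correct intermediate estimate (with the correct constant $a = k$) from inside Maurer's proof. The Chernoff step and the algebraic simplification in Part 2 are routine once that bound is in hand.
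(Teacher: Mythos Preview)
Your proposal is correct and matches the paper's approach exactly: the paper likewise cites Maurer's Theorem 13 (noting in a footnote that the sharper log-MGF bound from the second-to-last display of Maurer's proof is what is needed), and leaves the Cram\'er--Chernoff derivation of the tail bound implicit. Your write-up actually supplies more detail on Part~2 than the paper does, and the algebra you give is correct.
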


Putting Proposition \ref{prop:Ustat_hoeffding_bentkus} and Proposition \ref{prop:Ustat_maurer} together yields Lemma \ref{lemm:BoundU}.

\section{~Supplementary Material to Section \ref{sec:RelaxIID}}\label{app:Algorithms}

  \subsection{A more general distribution shift model}\label{subsec:general_dist_shift}
In this subsection we discuss a more general distribution shift model that allows $S_1, \ldots, S_{n}$ to have non-identical distributions. Let $p$ denote the joint density of $S_1, \ldots, S_{n+1}$ (with respect to a dominating measure). Let $\bfpip$ be a random permutation on $\{1, \ldots, n+1\}$ such that, for any realization $\{s_1, \ldots, s_{n+1}\}$ of $\{S_1, \ldots, S_{n+1}\}$,
\begin{align*}
  &\P\lb \bfpip(1) = d_1, \ldots, \bfpip(n+1) = d_{n+1} \mid \{S_{(1)}, \ldots, S_{(n+1)}\} = \{s_1, \ldots, s_{n+1}\}\rb\\
  & = \frac{p(s_{d_1}, \ldots, s_{d_{n+1}})}{\sum_{(d'_1, \ldots, d'_{n+1})\in \bbT_{n+1, n+1}} p(s_{d'_1}, \ldots, s_{d'_{n+1}})}.
\end{align*}
Then, 
\begin{align*}
  &(s_{\bfpip(1)}, \ldots, s_{\bfpip(n)})\mid \{S_{(1)}, \ldots, S_{(n+1)}\} = \{s_1, \ldots, s_{n+1}\}\\
  & \stackrel{d}{=}  (S_1, \ldots, S_{n+1}) \mid \{S_{(1)}, \ldots, S_{(n+1)}\} = \{s_1, \ldots, s_{n+1}\}.
\end{align*}
Again let $\cF$ denote the sigma-field generated by the unordered set $\{S_{(1)}, \ldots, S_{(n+1)}\}$.

\begin{definition}
  For any $\Gamma \ge 1$, let $\mathcal{P}(\Gamma; r)$ be the class of distributions on $(S_1, \ldots, S_{n+1})$ with
  \[\frac{(n+1)!}{(n-r)!}\P\lb\bfpip(1) = d_1, \ldots, \bfpip(r) = d_r, \bfpip(n+1) = k\mid \cF\rb \in \left[\Gamma^{-1}, \Gamma\right] \,\, \text{almost surely},\]
  for any $(d_1, \ldots, d_r, k)\in \bbT_{n+1}$.
\end{definition}
Above, $(n-r)!/(n+1)!$ is the probability under a uniform permutation and thus the LHS can be interpreted as the density ratio between $\bfpip$ and a uniform permutation, which measures the deviation from exchangeability.

By \eqref{eq:dist_piw}, when $\nu\in \mathcal{W}(\Gamma)$, the joint density $p$ satisfies 
\begin{align*}
  &\frac{(n+1)!}{(n-r)!}\P(\bfpip(1) = d_1, \ldots, \bfpip(r) = d_r, \bfpip(n+1) = k)\\
  & = \frac{(n+1)\omega(S_k)}{\sum_{j=1}^{n+1}\omega(S_j)} \le \frac{(n+1)\Gamma}{n\Gamma^{-1} + \Gamma} = \frac{(n+1)\Gamma^2}{n + \Gamma^2}.
\end{align*}
Thus,
\[p \in \mathcal{P}\lb \frac{(n+1)\Gamma^2}{n + \Gamma^2}; r\rb\subset \mathcal{P}\lb\Gamma^2; r\rb. \]

Next, we derive forecast intervals akin to Corollary \ref{cor:coverage_bias}.

\begin{theorem}
  Suppose the joint distribution of $(S_1, \ldots, S_{n+1})$ lies in $\mathcal{P}(\Gamma; r)$. Then
  \[\P\lb e_{\bT , n+1}\le \bar{e}_{1 - (1 - \tau)/\Gamma}^{\bM}\rb\ge \tau\lb 1 - \frac{(r+1)\Gamma}{n+1}\rb,\]
  and
  \[\P\lb e_{\bT , n+1}\in [\bar{e}_{(1 - \tau)/\Gamma}^{\bM}, \bar{e}_{1 - (1 - \tau)/\Gamma}^{\bM}]\rb \ge (2\tau - 1)\lb 1 - \frac{(r+1)\Gamma}{n+1}\rb.\]
\end{theorem}
\begin{proof}
  For notational convenience, for any $(d_1, \ldots, d_r, k)\in \bbT_{r+1, n+1}$, let
  \[A_{d_1, \ldots, d_r, k} = \P\lb\bfpip(1) = d_1, \ldots, \bfpip(r) = d_r, \bfpip(n+1) = k\mid \cF\rb.\]
  Again, we condition on the unordered samples $S_1, \ldots, S_{n+1}$ and denote by $S_{(1)},\ldots, S_{(n+1)}$ a typical realization. By similar arguments used to show \eqref{eq:eT_cond_dist}, we have
  \begin{equation}\label{eq:general_dist_shift}
    e_{\bT , n+1}\mid \cF\sim \sum_{(d_1, \ldots, d_{r}, k)\in \bbT_{r+1, n+1}}A_{d_1, \ldots, d_r, k}\cdot \delta_{f(d_1, \ldots, d_r, k)}.
    \end{equation}
  Thus,
  \[\P\lb e_{\bT , n+1}\le \bar{Q}_\tau \lb \sum_{(d_1, \ldots, d_{r}, k)\in \bbT_{r+1, n+1}}A_{d_1, \ldots, d_r, k}\cdot \delta_{f(d_1, \ldots, d_r, k)}\rb \mid \cF\rb\ge \tau.\]
  Let
  \[\Omega_{n+1} = \sum_{(d_1, \ldots, d_{r}, k)\in \bbT_{r+1, n+1}\setminus \bbT_{r+1, n}}A_{d_1, \ldots, d_r, k}.\]
  Then we can rewrite the distribution in \eqref{eq:general_dist_shift} as a mixture distribution
  \[\sum_{(d_1, \ldots, d_{r}, k)\in \bbT_{r+1, n+1}}A_{d_1, \ldots, d_r, k}\cdot \delta_{f(d_1, \ldots, d_r, k)} = (1- \Omega_{n+1})\cdot F + \Omega_{n+1}\cdot G,\]
  where
  \[F = \sum_{(d_1, \ldots, d_{r}, k)\in \bbT_{r+1, n}}\frac{A_{d_1, \ldots, d_r, k}}{1-\Omega_{n+1}}\cdot \delta_{f(d_1, \ldots, d_r, k)}, \quad G = \sum_{(d_1, \ldots, d_{r}, k)\in \bbT_{r+1, n+1}\setminus \bbT_{r+1, n}}\frac{A_{d_1, \ldots, d_r, k}}{\Omega_{n+1}}\cdot \delta_{f(d_1, \ldots, d_r, k)}.\]
  By Lemma \ref{lem:mixture} with $\tau_1 = 0, \tau_2 = \tau$, we have that
  \[\P\lb e_{\bT , n+1}\le \bar{Q}_\tau \lb \sum_{(d_1, \ldots, d_{r}, k)\in \bbT_{r+1, n}}A_{d_1, \ldots, d_r, k}\cdot \delta_{f(d_1, \ldots, d_r, k)}\rb \mid \cF\rb\ge \tau(1 - \Omega_{n+1}).\]
  By definition,
  \[A_{d_1, \ldots, d_r, k}\in \frac{(n-r)!}{(n+1)!}\cdot \left[\Gamma^{-1}, \Gamma\right].\]
  The largest possible value for $\bar{Q}_{\tau}\lb\sum_{(d_1, \ldots, d_{r}, k)\in \bbT_{r+1, n}}A_{d_1, \ldots, d_r, k}\cdot \delta_{f(d_1, \ldots, d_r, k)}\rb$ is achieved when $A_{d_1, \ldots, d_r, k} = \Gamma (n-r)!/(n+1)!$ for the largest values of $f(d_1, \ldots, d_r, k)$. Thus,
  \[\bar{Q}_{\tau}\lb\sum_{(d_1, \ldots, d_{r}, k)\in \bbT_{r+1, n}}A_{d_1, \ldots, d_r, k}\cdot \delta_{f(d_1, \ldots, d_r, k)}\rb \le \bar{Q}_{\tau'}\lb \sum_{(d_1, \ldots, d_{r}, k)\in \bbT_{r+1, n}}\delta_{f(d_1, \ldots, d_r, k)}\rb = \overline{e}_{\tau'}^{\bM},\]
  where $\Gamma(1 - \tau') = 1 - \tau$. Clearly, $\tau' = 1 - (1 - \tau)/\Gamma$. Thus,
  \[\P\lb e_{\bT , n+1}\le \overline{e}_{\tau'}^{\bM} \mid \cF\rb\ge \tau(1 - \Omega_{n+1}).\]
Moreover,
\[\Omega_{n+1} \le \Gamma\frac{(n-r)!}{(n+1)!}\cdot \lb|\bbT_{r+1,n+1}| - |\bbT_{r+1,n}|\rb = \frac{(r+1)\Gamma}{n+1}.\]
Thus, the result for the one-sided interval is proved. The result for the two-sided interval can be proved similarly by Lemma \ref{lem:mixture} with $\tau_1 = 1 - \tau, \tau_2 = \tau$ and by noting that the smallest possible value for $\underline{Q}_{\tau}\lb\sum_{(d_1, \ldots, d_{r}, k)\in \bbT_{r+1, n}}A_{d_1, \ldots, d_r, k}\cdot \delta_{f(d_1, \ldots, d_r, k)}\rb$ is achieved when $A_{d_1, \ldots, d_r, k} = \Gamma (n-r)!/(n+1)!$ for the smallest values of $f(d_1, \ldots, d_r, k)$, and hence
\[\underline{Q}_{\tau}\lb\sum_{(d_1, \ldots, d_{r}, k)\in \bbT_{r+1, n}}A_{d_1, \ldots, d_r, k}\cdot \delta_{f(d_1, \ldots, d_r, k)}\rb \ge \underline{Q}_{1-\tau'}\lb \sum_{(d_1, \ldots, d_{r}, k)\in \bbT_{r+1, n}}\delta_{f(d_1, \ldots, d_r, k)}\rb = \underline{e}_{\tau'}^{\bM}.\]
\end{proof}

\subsection{Algorithm for evaluating worst-case-upper-dominance}

We provide an algorithm that computes $\bar{e}_{\tau}(\Gamma)$ with a single $\tau$ in $O(rn^{r+1} \log n)$ time and computes $\bar{e}_{\tau}(\Gamma)$ for all $\tau\in (0, 1)$ in $O(rn^{r+1} \log n + n^{r+2})$ time. 
 First,  sort the elements in $\{f(d_1, \ldots, d_{r+1}): (d_1, \ldots, d_{r+1})\in \bbT_{r+1, n}\}$ as
  \[f_{(1)}\le f_{(2)}\le \ldots \le f_{(|\bbT_{r+1, n}|)},\]
  where
  \[f_{(j)} = f(d^{(j)}), \quad d^{(j)} = (d_1^{(j)}, \ldots, d_{r+1}^{(j)})\in \bbT_{r+1, n}.\]
  Let $\psi^{(j)}\in \{0, 1\}^{n}$ with
  \[\psi_{i}^{(j)} = I\lb d_{r+1}^{(j)} = i\rb.\]
  Further define the cumulative sum of $\psi^{(j)}$ as
  \[\Psi^{(j)} = \sum_{\ell = 1}^{j}\psi^{(\ell)}.\]
  Let $\w = (\omega(S_1), \ldots, \omega(S_n))^{T}$ and $\one_{n} = (1, 1, \ldots, 1)^{T}$. By \eqref{eq:omegak'}, for each $j$,
  \[f_{(j)} \ge \bar{e}_{\tau}^{\bM, \omega}\Longleftrightarrow \frac{(n-r-1)!}{(n-1)!}\frac{\w^{T}\Psi^{(j)}}{\w^{T}\one_{n}}\ge \tau.\]
  Therefore,
  \[\bar{e}_{\tau}^{\bM, \omega} = f_{J^{\omega}}, \quad \text{where }J_{\tau}^{\omega} =  \min\left\{j: \frac{\w^{T}\Psi^{(j)}}{\w^{T}\one_{n}}\ge \tau \frac{(n - 1)!}{(n - r - 1)!}\right\}.\]
  By definition, the set of $\w$ generated by all $\omega\in \cW(\Gamma)$ is 
  $[\Gamma^{-1}, \Gamma]^{n}$. Thus,
  \begin{equation}
    \label{eq:etau_Gamma_def}
    \bar{e}_{\tau}^{\bM}(\Gamma) = f_{J_{\tau}(\Gamma)}, \quad \text{where } J_{\tau}(\Gamma) = \min\left\{j: \min_{\w\in [\Gamma^{-1}, \Gamma]^{n}}\frac{\w^{T}\Psi^{(j)}}{\w^{T}\one_{n}}\ge \tau \frac{(n - 1)!}{(n - r - 1)!}\right\}.
  \end{equation}
  Via some algebra, we can further simplify the expression of $\bar{e}_{\tau}^{\bM}(\Gamma)$.
  \begin{theorem}\label{thm:ebar_Gamma_fixtau}
    Let $\bar{\Psi}_k^{(j)}$ be the average of the $k$-smallest coordinates of $\Psi^{(j)}$ and 
    \[Q_j(\Gamma) = \frac{j}{n} + \min_{k\in \mathcal{N}}\frac{\bar{\Psi}_{k}^{(j)} - \frac{j}{n}}{1 + \frac{n}{k (\Gamma^2 - 1)}}.\]
    Then $Q_{j}(\Gamma)$ is strictly increasing in both $j$ and $\Gamma$. Moreover, $\bar{e}_{\tau}^{\bM}(\Gamma) = f_{(J_{\tau}(\Gamma))}$, where
    \[J_{\tau}(\Gamma)= \min\left\{j\ge \tau \frac{n!}{(n - r - 1)!}: Q_j(\Gamma)\ge \tau\frac{(n-1)!}{(n-r-1)!}\right\}.\]
  \end{theorem}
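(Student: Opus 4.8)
The plan is to reduce the claim to an explicit solution of the inner minimization in \eqref{eq:etau_Gamma_def}: I will show that for every $j$,
\[\min_{\w\in[\Gamma^{-1},\Gamma]^{n}}\frac{\w^{T}\Psi^{(j)}}{\w^{T}\one_{n}}=Q_{j}(\Gamma),\]
and then read the two monotonicity statements and the formula for $J_{\tau}(\Gamma)$ off this identity. The objective is linear-fractional in $\w$, and on the box $[\Gamma^{-1},\Gamma]^{n}$ its denominator is bounded below by $n/\Gamma>0$, so its minimum over the box is attained at a vertex — either by the Charnes--Cooper reduction to a linear program \citep{charnes1962programming}, or directly, since each sublevel set $\{\w:\w^{T}\Psi^{(j)}\le t\,\w^{T}\one_{n}\}\cap[\Gamma^{-1},\Gamma]^{n}$ is a polytope. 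At a vertex every coordinate of $\w$ is $\Gamma$ or $\Gamma^{-1}$, and a one-step exchange argument identifies the optimal one: if $w_{i}=\Gamma$, $w_{i'}=\Gamma^{-1}$ and $\Psi^{(j)}_{i}>\Psi^{(j)}_{i'}$, swapping these weights fixes $\w^{T}\one_{n}$ and changes the numerator by $(\Gamma-\Gamma^{-1})(\Psi^{(j)}_{i'}-\Psi^{(j)}_{i})\le 0$. Hence an optimal vertex puts weight $\Gamma$ on the $k$ coordinates carrying the $k$ smallest entries of $\Psi^{(j)}$ and weight $\Gamma^{-1}$ on the remaining $n-k$; writing $S_{k}=k\,\bar{\Psi}_{k}^{(j)}$ and using $\one_{n}^{T}\Psi^{(j)}=j$, the value at this vertex equals $\big(j+(\Gamma^{2}-1)S_{k}\big)\big/\big(n+k(\Gamma^{2}-1)\big)$, which routine algebra rewrites as $\tfrac{j}{n}+\big(\bar{\Psi}_{k}^{(j)}-\tfrac{j}{n}\big)\big/\big(1+\tfrac{n}{k(\Gamma^{2}-1)}\big)$. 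Minimizing over $k\in\{1,\dots,n\}$ yields $Q_{j}(\Gamma)$ (the value at $k=0$ coincides with that at $k=n$, so nothing is lost), interpreting the added term as $0$ when $\Gamma=1$.

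Next I would prove the monotonicities from the variational form $Q_{j}(\Gamma)=\min_{\w\in[\Gamma^{-1},\Gamma]^{n}}\w^{T}\Psi^{(j)}/\w^{T}\one_{n}$. Monotonicity in $j$ is easy: $\Psi^{(j+1)}=\Psi^{(j)}+\mathbf{e}_{i}$ for some $i$, and since every feasible $\w$ has $w_{i}\ge\Gamma^{-1}>0$, we get $\w^{T}\Psi^{(j+1)}/\w^{T}\one_{n}>\w^{T}\Psi^{(j)}/\w^{T}\one_{n}$ for all feasible $\w$; evaluating the left side at the minimizer for index $j+1$ and bounding the right side below by the minimum at index $j$ gives $Q_{j+1}(\Gamma)>Q_{j}(\Gamma)$. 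For the dependence on $\Gamma$, the feasible box $[\Gamma^{-1},\Gamma]^{n}$ is nested in $\Gamma$, so the inner minimum moves monotonically; checking term by term, each summand $\big(\bar{\Psi}_{k}^{(j)}-\tfrac{j}{n}\big)\big/\big(1+\tfrac{n}{k(\Gamma^{2}-1)}\big)$ has nonpositive numerator and denominator strictly decreasing in $\Gamma$ on $(1,\infty)$, so each term — hence $Q_{j}(\Gamma)$ — is monotone in $\Gamma$, strictly so unless all coordinates of $\Psi^{(j)}$ coincide; I would fix the precise direction and the non-degeneracy hypothesis in the write-up.

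Finally, substituting $\min_{\w}\w^{T}\Psi^{(j)}/\w^{T}\one_{n}=Q_{j}(\Gamma)$ into \eqref{eq:etau_Gamma_def} gives $\bar{e}_{\tau}^{\bM}(\Gamma)=f_{(J_{\tau}(\Gamma))}$ with $J_{\tau}(\Gamma)=\min\{j:Q_{j}(\Gamma)\ge\tau(n-1)!/(n-r-1)!\}$; the extra constraint $j\ge\tau n!/(n-r-1)!$ is redundant but convenient for the algorithm, since $Q_{j}(\Gamma)\le j/n$ forces any admissible $j$ to obey $j/n\ge\tau(n-1)!/(n-r-1)!$. The step I expect to be the main obstacle is the first one — arguing cleanly that the linear-fractional minimum is attained at a threshold vertex, and handling the $\Gamma=1$ boundary together with the strictness assertions — since everything afterward is bookkeeping.
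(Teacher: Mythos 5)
Your proposal is correct and follows essentially the same route as the paper's proof: reduce the inner minimization to vertices of the box $[\Gamma^{-1},\Gamma]^{n}$ (the paper does this by a hands-on perturbation argument, you by quasilinearity/Charnes--Cooper, which is fine), identify the optimal vertex by a swap argument (the paper's invocation of Fr\'echet--Hoeffding is the same rearrangement step), carry out the identical algebra to get $Q_j(\Gamma)$ with the $k=0$ and $k=n$ values coinciding, and note that the constraint $j\ge \tau n!/(n-r-1)!$ is implied by $Q_j(\Gamma)\le j/n$. On the monotonicity claims, which the paper's own proof never actually establishes, your variational argument for strict increase in $j$ is valid, and your term-by-term observation about $\Gamma$ is pointing at a real issue: since $\bar{\Psi}_k^{(j)}-j/n\le 0$ and the denominator $1+n/(k(\Gamma^2-1))$ decreases in $\Gamma$, the quantity $Q_j(\Gamma)$ is nonincreasing in $\Gamma$ (strictly decreasing unless all coordinates of $\Psi^{(j)}$ coincide), so the theorem's ``strictly increasing in $\Gamma$'' is evidently a misstatement --- what increases in $\Gamma$ is $J_\tau(\Gamma)$ and hence $\bar{e}_{\tau}^{\bM}(\Gamma)$ --- and your hedge about fixing the direction costs you nothing relative to the paper.
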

  \begin{proof}
    First, we prove that
    \begin{equation}
      \label{eq:boundary_condition}
      \min_{\w\in [\Gamma^{-1}, \Gamma]^{n}}\frac{\w^{T}\Psi^{(j)}}{\w^{T}\one_{n}} = \min_{\w\in \{\Gamma^{-1}, \Gamma\}^{n}}\frac{\w^{T}\Psi^{(j)}}{\w^{T}\one_{n}}.
    \end{equation}
    Let $g_j(\w) = \w^T \Psi^{(j)} / \w^T \one_{n}$. Then $g_{j}$ is continuous and bounded on the closed set $[\Gamma^{-1}, \Gamma]^{n}$ and thus the minimum can be achieved. Let
    \[\w^{(j)}(\Gamma) = \argmin_{\w:g_j(\w) = \min_{\w\in [\Gamma^{-1}, \Gamma]^{n}} g_j(\w)}\sum_{i=1}^{n}\min\left\{|\w_i - \Gamma|, |\w_i - \Gamma^{-1}|\right\}.\]
    Suppose there exists $i\in \mathcal{N}$ such that $\w_{i}^{(j)}(\Gamma) \in (\Gamma^{-1}, \Gamma)$. Then
    \[g_j(\w_i, \w_{-i}) = \frac{\Psi_i^{(j)} \w_i + \Psi_{-i}^{(j)T} \w_{-i}}{\w_i + \one_{n-1}^{T}\w_{-i}} = \Psi_i^{(j)} + \frac{\Psi_{-i}^{(j)T} \w_{-i} - \Psi_i^{(j)}\cdot \one_{n-1}^{T}\w_{-i}}{\w_i + \one_{n-1}^{T}\w_{-i}},\]
    where $\Psi_{-i}^{(j)}$ and $\w_{-i}$ are the leave-$i$-th-entry subvectors of $\Psi^{(j)}$ and $\w$. Clearly, $g_j$ is a monotone function of $\w_i$ for any given $\w_{-i}$. Since $\w^{(j)}(\Gamma)$ is a minimizer and $\w_i^{(j)}(\Gamma) \in (\Gamma^{-1}, \Gamma)$, we must have $\Psi_{-i}^{(j)T} \w_{-i} - \Psi_i^{(j)}\cdot \one_{n-1}^{T}\w_{-i}= 0$. Define $\td{\w}^{(j)}(\Gamma)$ with
    \[\td{\w}_i^{(j)}(\Gamma) = \Gamma, \quad \td{\w}_{-i}^{(j)}(\Gamma) = \w_{-i}^{(j)}(\Gamma).\]
    Then
    \[g_j(\td{\w}^{(j)}(\Gamma)) = g_j(\w^{(j)}(\Gamma)) = \min_{\w\in [\Gamma^{-1}, \Gamma]^{n}}g_j(\w),\]
    while 
    \[\sum_{i=1}^{n}\min\left\{|\td{\w}_i^{(j)}(\Gamma) - \Gamma|, |\td{\w}_i^{(j)}(\Gamma) - \Gamma^{-1}|\right\} < \sum_{i=1}^{n}\min\left\{|\w_i^{(j)}(\Gamma) - \Gamma|, |\w_i^{(j)}(\Gamma) - \Gamma^{-1}|\right\}.\]
    This contradicts the definition of $\w^{(j)}(\Gamma)$, so $\w^{(j)}(\Gamma) \in \{\Gamma^{-1}, \Gamma\}^{n},$ which completes the proof of \eqref{eq:boundary_condition}.

    For any $\w \in \{\Gamma^{-1}, \Gamma\}^{n}$ with $|\{i: \w_i = \Gamma\}| = k$, the Fr\'{e}chet-Hoeffding inequality implies that $\Gamma$'s are allocated to the $k$ smallest entries of $\Psi^{(j)}$. Thus,
    \[\min_{\w\in \{\Gamma^{-1}, \Gamma\}^{n}}\frac{\w^{T}\Psi^{(j)}}{\w^{T}\one_{n}} = \min_{k\in \mathcal{N}\cup \{0\}}\frac{\Gamma k \bar{\Psi}_k^{(j)} + \Gamma^{-1}(\one_{n}^{T}\Psi_i^{(j)} - k \bar{\Psi}_k^{(j)})}{\Gamma k + \Gamma^{-1}(n - k)}.\]
    By definition, $\one_{n}^{T}\Psi_i^{(j)} = j$. Then for each $k$, the above expression can be simplified as
    \begin{align*}
      &\frac{\Gamma k \bar{\Psi}_k^{(j)} + \Gamma^{-1}(\one_{n}^{T}\Psi_i^{(j)} - k \bar{\Psi}_k^{(j)})}{\Gamma k + \Gamma^{-1}(n - k)} = \frac{\Gamma k \bar{\Psi}_k^{(j)} + \Gamma^{-1}(j - k \bar{\Psi}_k^{(j)})}{\Gamma k + \Gamma^{-1}(n - k)}\\
      & = \frac{(\Gamma - \Gamma^{-1}) k \bar{\Psi}_k^{(j)} + \Gamma^{-1}j}{(\Gamma - \Gamma^{-1}) k + \Gamma^{-1}n} = \frac{j}{n} + \frac{(\Gamma - \Gamma^{-1}) k \lb\bar{\Psi}_k^{(j)} - \frac{j}{n}\rb}{(\Gamma - \Gamma^{-1}) k + \Gamma^{-1}n}\\
      & = \frac{j}{n} + \frac{\bar{\Psi}_k^{(j)} - \frac{j}{n}}{1 + \frac{n}{k(\Gamma^{2} - 1)}}.
    \end{align*}
    The above expression is $j / n$ for both $k = n$ and $k = 0$, so we can remove $0$ from the minimum, and thus 
    \[\min_{\w\in [\Gamma^{-1}, \Gamma]^{n}}\frac{\w^{T}\Psi^{(j)}}{\w^{T}\one_{n}} = Q_j(\Gamma).\]
    By \eqref{eq:etau_Gamma_def},
    \[\bar{e}_{\tau}^{\bM}(\Gamma) = \min\left\{j: Q_{j}(\Gamma)\ge \tau \frac{(n-1)!}{(n-r-1)!}\right\}.\]
    Finally, we can restrict to $j\ge \tau n! / (n - r - 1)!$ because
    $Q_{j}(\Gamma) \le \frac{j}{n}$ by taking $k = n$.
  \end{proof}

  Since $Q_{j}(\Gamma)$ is increasing in $j$, $J_{\tau}(\Gamma)$ can be found via binary search with iteration complexity $O(\log n^{r+1}) = O(r\log n)$. Each iteration costs at most $O(n)$ operations to sort the entries of $\Psi^{(j)}$ based on the ordered version of $\Psi^{(j-1)}$, since there is only entry updated, and $O(n)$ additional operations to compute $Q_{j}(\Gamma)$. Thus, the overall computational overhead after obtaining $(f_{(1)}, \ldots, f_{(|\bbT_{r+1, n}|)})$ is just $O(rn \log n)$, which is much smaller than the cost of sorting $f$-values $O(n^{r+1}\log n^{r+1}) = O(rn^{r+1}\log n)$.

  In some cases, we want to compute $\bar{e}_{\tau}^{\bM}(\Gamma)$ for all $\tau \in [0, 1]$ at once. The following result  links $\bar{e}_{\tau}^{\bM}(\Gamma)$ to an induced distribution on the $f$'s.
  \begin{corollary}\label{cor:ebar_Gamma_alltau}
    For any $\Gamma\ge 1$, let $\mu_{\Gamma}$ be a weighted measure with
    \[\mu_{\Gamma} = \sum_{j=1}^{|\bbT_{r+1, n}|}\frac{(n - r - 1)!}{(n - 1)!}(Q_{j}(\Gamma) - Q_{j-1}(\Gamma))\cdot \delta_{f_{(j)}},\]
    where $Q_{0}(\Gamma) = 0$. Then $\bar{e}_{\tau}^{\bM}(\Gamma)$ is the $\tau$-th quantile of $\mu_{\Gamma}$.
  \end{corollary}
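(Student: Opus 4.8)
The plan is to derive Corollary \ref{cor:ebar_Gamma_alltau} directly from Theorem \ref{thm:ebar_Gamma_fixtau} by writing down the distribution function of $\mu_{\Gamma}$ and reading off its $\tau$-th upper quantile $\overline{Q}_{\tau}(\mu_{\Gamma})$.

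First I would check that $\mu_{\Gamma}$ is a probability measure. Its weights telescope, so the total mass is $\frac{(n-r-1)!}{(n-1)!}\bigl(Q_{|\bbT_{r+1,n}|}(\Gamma)-Q_{0}(\Gamma)\bigr)$. The vector $\Psi^{(|\bbT_{r+1,n}|)}$ records, coordinatewise, the number of tuples in $\bbT_{r+1,n}$ that use a given index as the test domain, so every coordinate equals $(n-1)!/(n-r-1)!$; hence $\bar{\Psi}^{(|\bbT_{r+1,n}|)}_{k}=|\bbT_{r+1,n}|/n$ for all $k$ and $Q_{|\bbT_{r+1,n}|}(\Gamma)=|\bbT_{r+1,n}|/n=(n-1)!/(n-r-1)!$, which together with $Q_{0}(\Gamma)=0$ gives total mass $1$. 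The same telescoping shows that for every index $m$, $\mu_{\Gamma}\bigl((-\infty,f_{(m)}]\bigr)=\frac{(n-r-1)!}{(n-1)!}Q_{m}(\Gamma)$; when several $f_{(j)}$ coincide, the contributions of all copies collapse to this expression because $j\mapsto Q_{j}(\Gamma)$ is monotone.

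Next I would identify the quantile. Since $Q_{j}(\Gamma)$ is strictly increasing in $j$ (Theorem \ref{thm:ebar_Gamma_fixtau}), every $f_{(j)}$ is an atom of positive weight, and at the largest index $m$ realizing a given support value the distribution function equals $\frac{(n-r-1)!}{(n-1)!}Q_{m}(\Gamma)$. Therefore $\overline{Q}_{\tau}(\mu_{\Gamma})=f_{(j^{*})}$ where $j^{*}=\min\{j:Q_{j}(\Gamma)\ge \tau (n-1)!/(n-r-1)!\}$ --- no smaller support point qualifies, since every copy of a value strictly below $f_{(j^{*})}$ has index at most $j^{*}-1$, where $Q_{j}(\Gamma)$ still lies below the threshold. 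It remains to match $j^{*}$ with the index $J_{\tau}(\Gamma)$ of Theorem \ref{thm:ebar_Gamma_fixtau}. The only difference is the extra requirement $j\ge \tau n!/(n-r-1)!$ in $J_{\tau}(\Gamma)$, and this is automatic: taking $k=n$ in the definition of $Q_{j}$ gives $Q_{j}(\Gamma)\le j/n$, so $Q_{j^{*}}(\Gamma)\ge \tau(n-1)!/(n-r-1)!$ forces $j^{*}\ge \tau n!/(n-r-1)!$. Hence $J_{\tau}(\Gamma)=j^{*}$ and $\bar{e}_{\tau}^{\bM}(\Gamma)=f_{(J_{\tau}(\Gamma))}=f_{(j^{*})}=\overline{Q}_{\tau}(\mu_{\Gamma})$.

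The argument is essentially bookkeeping once Theorem \ref{thm:ebar_Gamma_fixtau} is available; the only subtlety, and the step I expect to need the most care, is the treatment of ties among the $f_{(j)}$ when reading the quantile off the distribution function, which is handled cleanly by the strict monotonicity of $j\mapsto Q_{j}(\Gamma)$.
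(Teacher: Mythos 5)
Your argument is correct and is essentially the same telescoping/quantile-reading argument that the paper leaves implicit after Theorem \ref{thm:ebar_Gamma_fixtau}: the cumulative weight of $\mu_{\Gamma}$ up to $f_{(m)}$ is $\frac{(n-r-1)!}{(n-1)!}Q_{m}(\Gamma)$, so its upper $\tau$-quantile is $f_{(J_{\tau}(\Gamma))}$. Your added checks---that the total mass is $1$ via $\bar{\Psi}^{(|\bbT_{r+1,n}|)}_{k}=|\bbT_{r+1,n}|/n$, that strict monotonicity of $j\mapsto Q_{j}(\Gamma)$ handles ties, and that the side constraint $j\ge \tau n!/(n-r-1)!$ in $J_{\tau}(\Gamma)$ is redundant because $Q_{j}(\Gamma)\le j/n$---are all accurate and simply make explicit what the paper takes for granted.
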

  Since the ordering takes $O(rn^{r+1}\log n)$ time and computing each $Q_{j}(\Gamma)$ takes $O(n)$ time, the total computational cost to compute $\bar{e}_{\tau}^{\bM}(\Gamma)$ for all $\tau\in [0, 1]$ is $O(rn^{r+1} \log n + n^{r+2})$.
  
  \subsection{Algorithm for evaluating everywhere dominance}

  Let $f_{(j), 1}$ and $f_{(j), 2}$ be the $j$-th largest transfer errors for method 1 and 2, respectively. Similarly, the count vectors for two methods are denoted by $\Psi^{(j), 1}$ and $\Psi^{(j), 2}$. Then method 1 does NOT everywhere-upper-dominate method 2 at the $\tau$-th quantile if and only if there exists $j_1, j_2\in \{1, \ldots, |\bbT_{r+1, n}|\}$ and $W\in [0, \infty)^{n}$ such that
  \begin{equation}
    \label{eq:crude_everywhere_dominance}
    f_{(j_1), 1} > f_{(j_2), 2}, \quad \frac{(n - r - 1)!}{(n - 1)!}\frac{\w^T \Psi^{(j_1 - 1), 1}}{\w^{T}\one_{n}} < \tau \le \frac{(n - r - 1)!}{(n - 1)!}\frac{\w^T \Psi^{(j_2), 2}}{\w^{T}\one_{n}}.
  \end{equation}
  Above $\Psi^{(0), 1} = (0, 0, \ldots, 0)^{T}$.

  To avoid pairwise comparisons, which incur $O(n^{2(r+1)})$ computation, we can check \eqref{eq:crude_everywhere_dominance} by only focusing on $j_1 = m(j), j_2 = j$ where
  \[m(j) = \min\{j': f_{(j'), 1} > f_{(j), 2}\}.\]
  It is easy to see that \eqref{eq:crude_everywhere_dominance} holds for some pair $(j_1, j_2)\in \{1, \ldots, |\bbT_{r+1, n}|\}^2$ if and only if it holds for $(m(j), j)$ for some $j\in \{1, \ldots, |\bbT_{r+1, n}|\}$. For any given $j$, \eqref{eq:crude_everywhere_dominance} reduces to
  \begin{equation*}
    \label{eq:simplified_everywhere_dominance}
\frac{(n - r - 1)!}{(n - 1)!}\frac{\w^T \Psi^{(m(j) - 1), 1}}{\w^{T}\one_{n}} < \tau \le \frac{(n - r - 1)!}{(n - 1)!}\frac{\w^T \Psi^{(j), 2}}{\w^{T}\one_{n}}, \quad \w\in [0, \infty)^{n}.
\end{equation*}
This is equivalent to solving the following linear fractional programming problem and then checking if the objective is below $\tau$:
\[\min \frac{\w^T a^{(j)}}{\w^{T}\one_{n}}, \quad \text{s.t., }\frac{\w^T b^{(j)}}{\w^{T}\one_{n}}\ge \tau, \quad \w\in [0, \infty)^{n},\]
where
\[a^{(j)} = \Psi^{(m(j)), 1}\cdot \frac{(n - r - 1)! }{n - 1)!}, \quad b^{(j)} = \Psi^{(j), 2}\cdot \frac{(n - r - 1)! }{ (n - 1)!}.\]
We can apply the Charnes-Cooper transformation \citep{charnes1962programming} by introducing $v = \w / \w^{T}\one_{n}$ to transform it into a linear programming problem:
\begin{equation}
  \label{eq:charnes_cooper}
  \min v^{T}a^{(j)}, \quad \text{s.t.,}\,\,v^{T}b^{(j)} \ge \tau, v^{T}\one_{n} = 1, v\in [0, \infty)^{n}.
\end{equation}
Solving these $O(n^{r+1})$ LP problems can be  accelerated by the following two observations:
\begin{enumerate}
\item Using the same argument as in the last step of the proof of Theorem \ref{thm:ebar_Gamma_fixtau}, we can restrict
  \[j\ge \tau \frac{n!}{(n - r - 1)!}.\]
\item When $a_i^{(j)}\ge b_i^{(j)}$ for every $i\in \mathcal{N}$, then the objective of \eqref{eq:charnes_cooper} can never be below $\tau$.
\end{enumerate}

\section{~Supplementary material for Section \ref{sec:Application}}

\subsection{Description of data} \label{app:DataSources}

We briefly describe the individual samples in our meta-data. 
{\myfontsize
\begin{longtable}[H]{lcccccc}
\caption{}\\ \hline

 Source of Data & \# Obs & \# Subj & \# Lottery & Country & Gains Only \\  
\hline 
\hline
 \citet{abdellaoui2015experiments} & 801 & 89 & 3 & France & Y \\

 \citet{fan2019decisions} & 4750 & 125 & 19 & US & Y \\
    \citet{bouchouicha2017accommodating} & 3162 & 94 & 66 & UK & N \\
    \citet{sutter2013impatience} & 661 & 661 & 4 & Austria & Y \\
      \citet{etchart2011monetary} & 3036 & 46 & 20 & France & N \\
       \citet{fehr2010rationality} & 8560 & 153 & 56 & China & N \\
        \citet{lefebvre2010incentive} & 72 & 72 & 2 & France & Y \\
        \citet{halevy2007ellsberg} & 366 & 122 & 2 & Canada & Y \\
        \citet{anderhub2001interaction} & 183 & 61 & 1 & Israel & Y \\
        \citet{murad2016risk} & 2131 & 86 & 25 & UK & Y \\
        \citet{dean2019empirical} & 1032 & 179 & 3 & US & Y \\
         \citet{bernheim2020empirical} & 1071 & 153 & 7 & US & Y \\
          \citet{bruhin2010risk} & 8906 & 179 & 50 & Switzerland & N \\
               \citet{bruhin2010risk} & 4669 & 118 & 40 & Switzerland & N \\
          \citet{l2019all} & 1708 & 61 & 28 &	Australia & N \\
          \citet{l2019all} & 2548 & 95 & 28 & Belgium & N \\
           \citet{l2019all} & 2350 & 84 & 28 & Brazil & N \\
            \citet{l2019all} & 2240 & 80 & 28 & Cambodia & N \\
             \citet{l2019all} & 2687 & 96 & 28 & Chile & N \\
              \citet{l2019all} & 5711 & 204 & 28 & China & N \\
               \citet{l2019all} & 3072 & 128 & 24 & Colombia & N \\
                \citet{l2019all} & 2968 & 106 & 28 & Costa Rica & N \\
                 \citet{l2019all} & 2770 & 99 & 28 & Czech Republic & N \\
                  \citet{l2019all} & 3906 & 140 & 28 & Ethiopia & N \\
                   \citet{l2019all} & 2604 & 93 & 28 & France & N \\
                    \citet{l2019all} & 3639 & 130 & 28 & Germany & N \\
                     \citet{l2019all} & 2352 & 84 & 28 & Guatemala & N \\
             \citet{l2019all} & 2492 & 89 & 28 & India & N \\
             \citet{l2019all} & 2352 & 84 & 28 & Japan & N \\
              \citet{l2019all} & 2716 & 97 & 28 & Kyrgyzstan & N \\
 \citet{l2019all} & 1791 & 64 & 28 & Malaysia & N \\
  \citet{l2019all} & 3360 & 120 & 28 & Nicaragua & N \\
   \citet{l2019all} & 5638 & 202 & 28 & Nigeria & N \\
    \citet{l2019all} & 2660 & 95 & 28 & Peru & N \\
     \citet{l2019all} & 2491 & 89 & 28 & Poland & N \\
 \citet{l2019all} & 1959 & 70 & 28 & Russia & N \\
  \citet{l2019all} & 1819 & 65 & 28 & Saudi Arabia & N \\
   \citet{l2019all} & 1988 & 71 & 28 & South Africa & N \\
    \citet{l2019all} & 2240 & 80 & 28 & Spain & N \\
     \citet{l2019all} & 2212 & 79 & 28 & Thailand & N \\
      \citet{l2019all} & 2070 & 74 & 28 & Tunisia & N \\
       \citet{l2019all} & 2240 & 80 & 28 & UK & N \\
        \citet{l2019all} & 2701 & 97 & 28 & US & N \\
         \citet{l2019all} & 2436 & 87 & 28 & Vietnam & N \\
   \hline  
\end{longtable}
}

\subsection{Papers as domains}\label{sec:alt-def}

We now consider an alternative definition of domains, with each of the 14 papers representing a different domain. This changes the content of the i.i.d.\ assumption imposed in Section \ref{sec:ResultsIID}, where we now assume that samples are i.i.d.\ across papers, but may be dependent across subject pools within the same paper. We repeat our main analysis and report 78\% two-sided forecast intervals in Figure \ref{fig:PapersasDomains}. These intervals are qualitatively similar to those reported in Figure \ref{fig:CI_77}.

\begin{figure}
\centering
\subfloat[Raw transfer error]{\includegraphics[width=.45\linewidth]{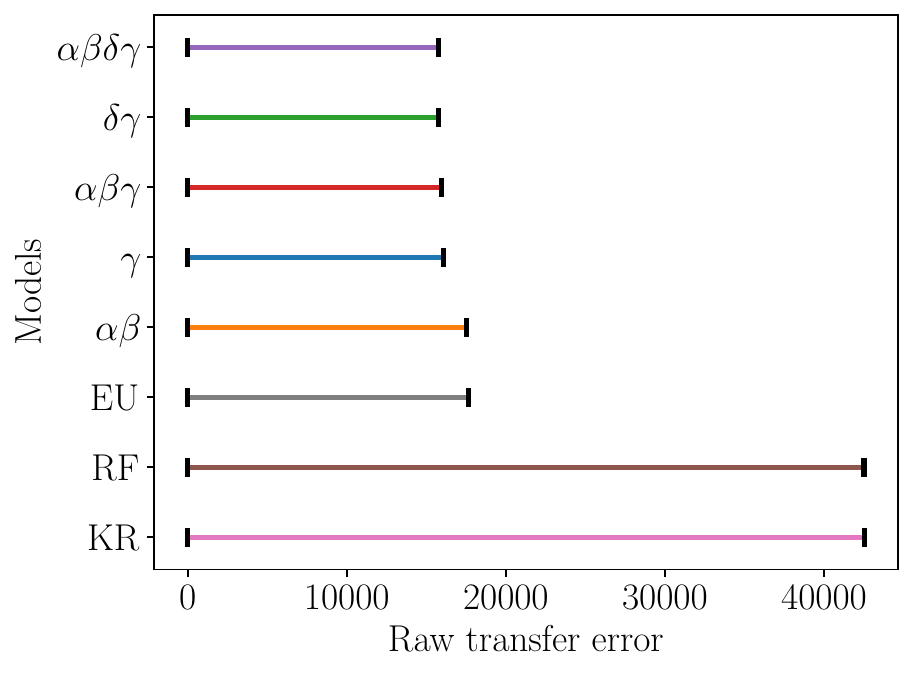}}\\
\subfloat[Transfer shortfall]{\includegraphics[width=.45\linewidth]{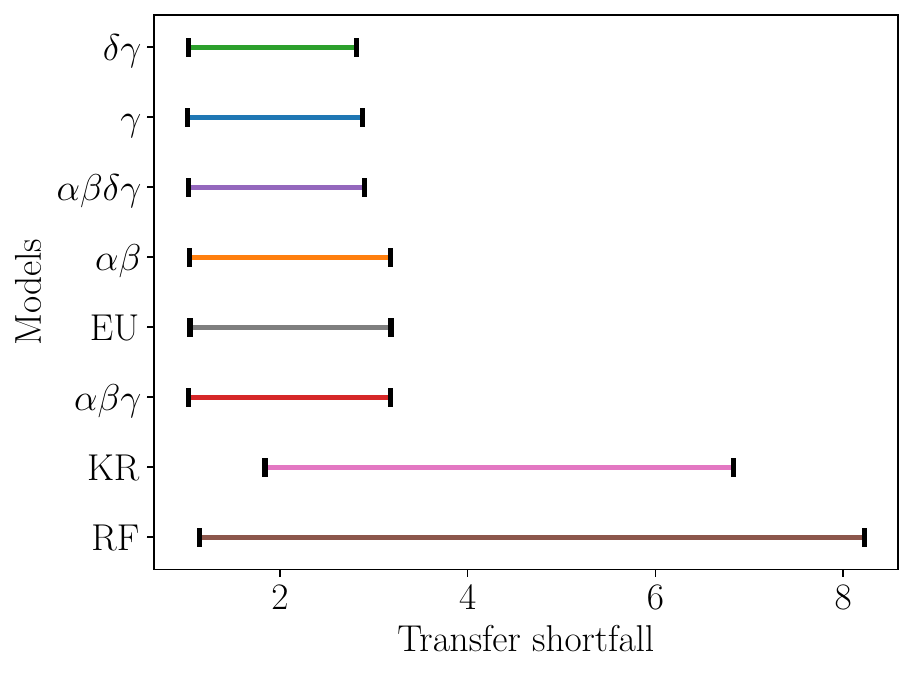}}\hfill
\subfloat[Transfer deterioration]{\includegraphics[width=.45\linewidth]{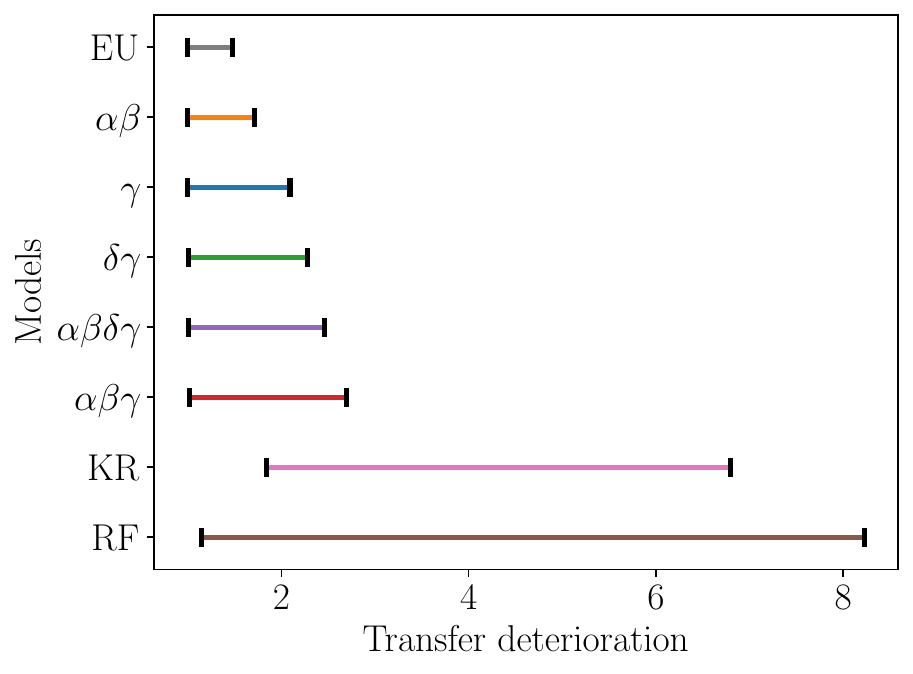}}\par 
\caption{78\% (n=14, $\tau=0.95$) forecast intervals for each of the three measures, treating each paper as a separate domain.}
\label{fig:PapersasDomains}
\end{figure}

\subsection{Supplementary tables and figures for main analysis} \label{app:Exact}

Table \ref{tab:CI} reports the forecast intervals that are depicted in Figure \ref{fig:CI_77}.

\begin{table}[H]
\centering{}%
\begin{tabular}{lccc}
\hline 
 Model & Raw Transfer Error & Transfer Shortfall & Transfer Deterioration \\
\hline 
\hline 
CPT variants \\
\quad $\gamma$                      & [2.50,15.83] & [1.03,2.54] & [1.00,1.47]\\
\quad $\alpha,\beta$                & [2.56,16.13] & [1.04,2.35] & [1.00,1.30] \\
\quad $\delta,\gamma$               & [2.48,17.19] & [1.02,2.47] & [1.00,1.53] \\
\quad $\alpha,\beta,\gamma$         & [2.47,15.91] & [1.02,2.60] & [1.00,1.85] \\
\quad $\alpha,\beta,\delta,\gamma$  & [2.46,15.99] & [1.02,2.62] & [1.00,1.82]\\[2mm]
EU models \\
\quad EU                        & [2.56,16.41] & [1.04,2.14] & [1.00,1.30] \\[2mm]
ML algorithms \\
\quad Random Forest                 & [2.71,31.39] & [1.02,6.42] & [1.02,6.42]\\
\quad Kernel Regression                   & [2.75,33.62] & [1.02,5.33] & [1.01,5.29] \\
\hline
\end{tabular}
\caption{86\% (n=44, $\tau=0.95$) forecast intervals} \label{tab:CI}
\end{table}

\subsection{Alternative forecast intervals} \label{app:AlternativeCI}

In this section, we report  alternative forecast intervals for our three measures. Table \ref{tab:CI_95} constructs 96\% two-sided forecast intervals (setting $\tau=1$),\footnote{The lower bounds of these intervals are the minimum transfer error (among the pooled transfer errors) and the upper bounds are the maximum transfer error.}  and Table \ref{tab:CI_91} reports  91\% one-sided forecast intervals (setting $\tau=0.95$). All of the forecast intervals are qualitatively similar to the 86\% two-sided  forecast intervals reported in the main text.

\begin{table}[h]
\centering{}%
\begin{tabular}{lrrrrrr}
\hline 
 Model & Raw Transfer Error & Transfer Shortfall & Transfer Deterioration \\
\hline 
\hline 
CPT main variants \\
\quad $\gamma$                      & [0.81,23104.96] & [1.01,7.31] & [1.00,7.22] \\
\quad $\alpha,\beta$                & [0.71,19999.41] & [1.00,5.28] & [1.00,5.27] \\
\quad $\delta,\gamma$               & [0.71,23052.76] & [1.00,7.25] & [1.00,7.18] \\
\quad $\alpha,\beta,\gamma$         & [0.71,28122.26] & [1.00,5.65] & [1.00,5.60] \\
\quad $\alpha,\beta,\delta,\gamma$  & [0.71,27959.10] & [1.00,6.01] & [1.00,5.95] \\[2mm]
EU models \\

\quad EU                        & [0.72,22787.99] & [1.00,4.44] & [1.00,1.75] \\[2mm]
ML algorithms \\
\quad Random Forest                 & [0.96,42520.49] & [1.01,33.17] & [1.01,33.17] \\

\quad Kernel Regression                 & [1.01,42519.23] & [1.01,6.835] & [1.00,6.79] \\
 \hline 
\end{tabular}
\caption{96\% (n=44, $\tau=1$) two-sided forecast intervals} \label{tab:CI_95}
\end{table}

\begin{table}[h]
\centering{}%
\begin{tabular}{lrrrrrr}
\hline 
 Model & Raw Transfer Error & Transfer Shortfall & Transfer Deterioration \\
\hline 
\hline 
CPT main variants \\
\quad $\gamma$                      & [0,15.83] & [1,2.54] & [1,1.47]\\
\quad $\alpha,\beta$                & [0,16.13] & [1,2.35] & [1,1.30] \\
\quad $\delta,\gamma$               & [0,17.19] & [1,2.47] & [1,1.53]\\
\quad $\alpha,\beta,\gamma$         & [0,15.91] & [1,2.60] & [1,1.85] \\
\quad $\alpha,\beta,\delta,\gamma$  & [0,15.99] & [1,2.62] & [1,1.82]\\[2mm]
EU models \\
\quad EU                        & [0,16.41] & [1,2.14] & [1,1.30] \\[2mm]
ML algorithms \\
\quad Random Forest                 & [0,31.39] & [1,6.42] & [1,6.42]\\
\quad Kernel Regression                   & [0,33.62] & [1,5.33] & [1,5.29] \\
 \hline 
\end{tabular}
\caption{91\% (n=44, $\tau=0.95$) one-sided forecast intervals} \label{tab:CI_91}
\end{table}

Finally, Figure \ref{fig:Full} plots  the $\tau$-th percentile of the pooled transfer errors as $\tau$ varies. The figure shows that the qualitative conclusions we have drawn about the relative performance of black boxes and economic models are not specific to any choice of $\tau$.\footnote{To improve readability, we remove extreme numbers by truncating $\tau \in [5,95]$, and show results only for the $\alpha\beta\gamma\delta$ specification of the CPT model.} In fact, in Panels (a) and (c), the black box curves lie everywhere above the CPT and EU curves, so both the lower and upper bounds of the black boxes' forecast intervals are higher than those of the economic models for every choice of $\tau$. 

\begin{figure}[h]\centering
\subfloat[Raw transfer error]{\label{fig:Full_a}\includegraphics[width=.45\linewidth]{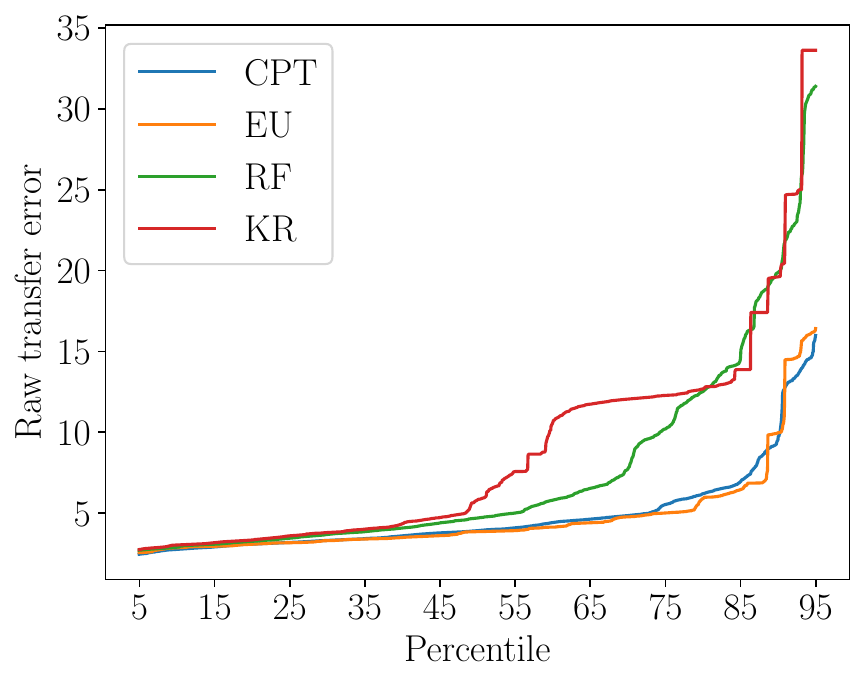}}\\
\subfloat[Transfer shortfall]{\label{fig:Full_b}\includegraphics[width=.45\linewidth]{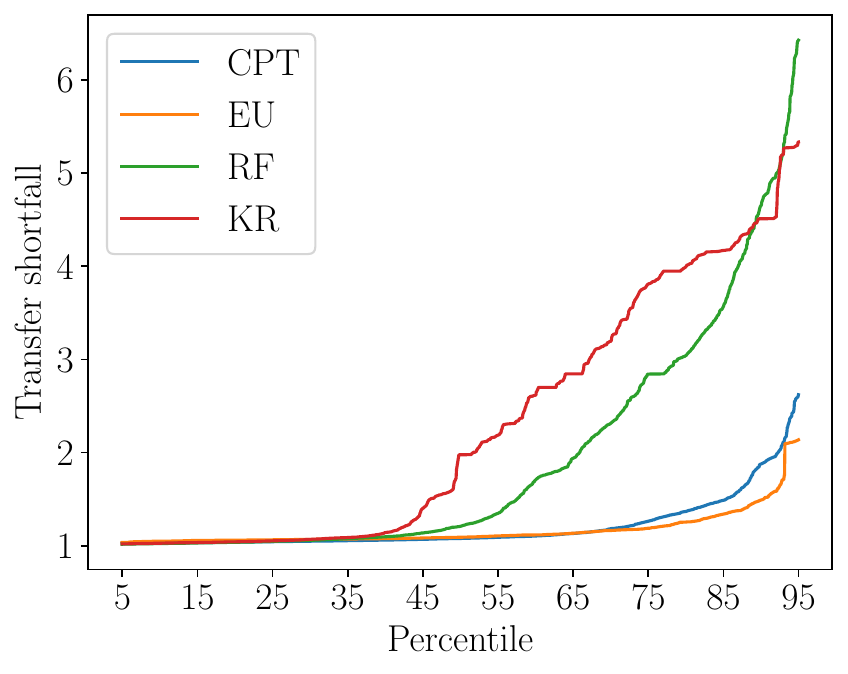}}\hfill
\subfloat[Transfer deterioration]{\label{fig:Full_c}\includegraphics[width=.45\linewidth]{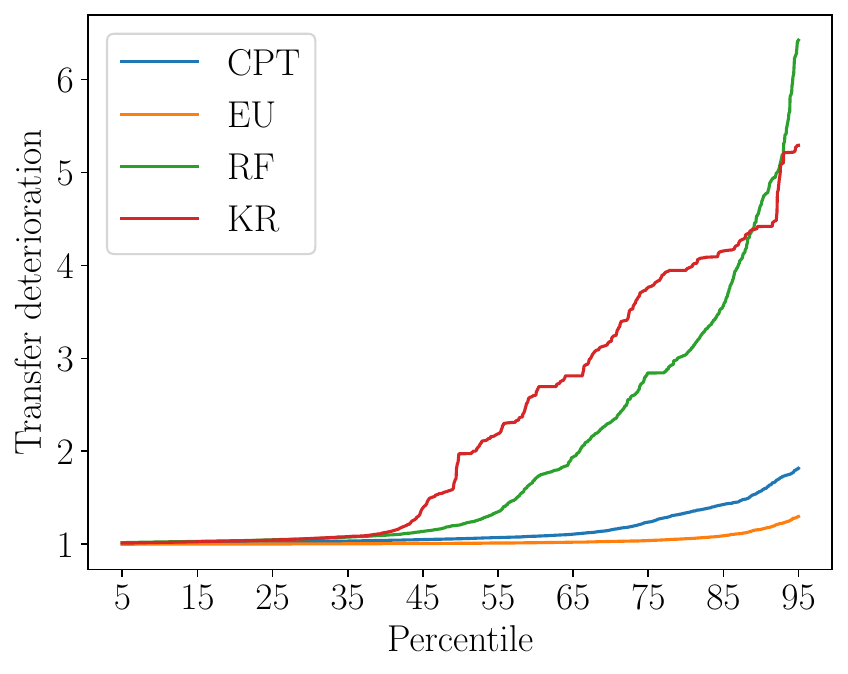}}\par 
\caption{Error percentiles from  5 to 95 (truncated for readability).}
\label{fig:Full}
\end{figure}

\subsection{Forecast intervals for the ratio of raw CPT and RF transfer errors} \label{sec:Compare}

Let $e_{\T,d}$
be the ratio of the raw random forest transfer error to the raw CPT transfer error (i.e., using the specification in (\ref{eq:TransferError})), henceforth the \emph{transfer error ratio}.

\begin{figure}[h]\centering
\subfloat[86\% (n=44, $\tau=0.95$) Forecast intervals]{\includegraphics[width=.45\linewidth]{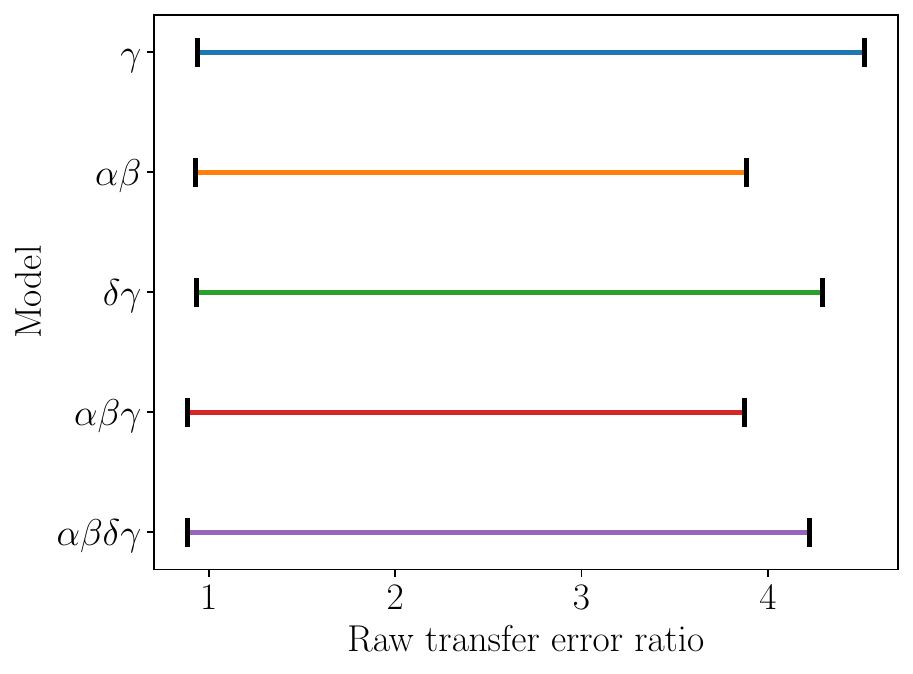}}\\
\subfloat[Density]{\includegraphics[width=.45\linewidth]{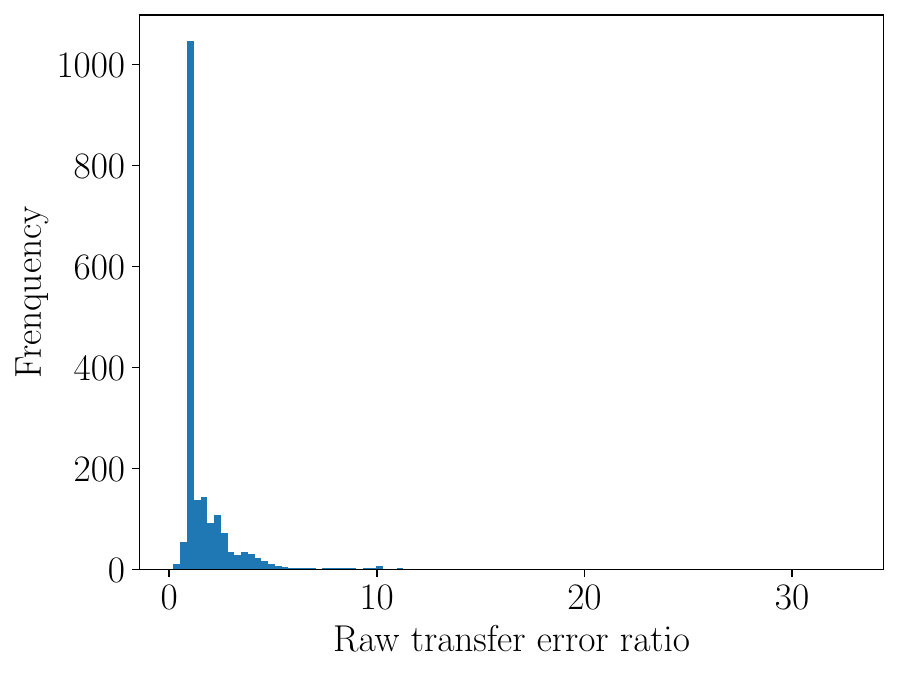}}\hfill
\subfloat[CDF]{\includegraphics[width=.45\linewidth]{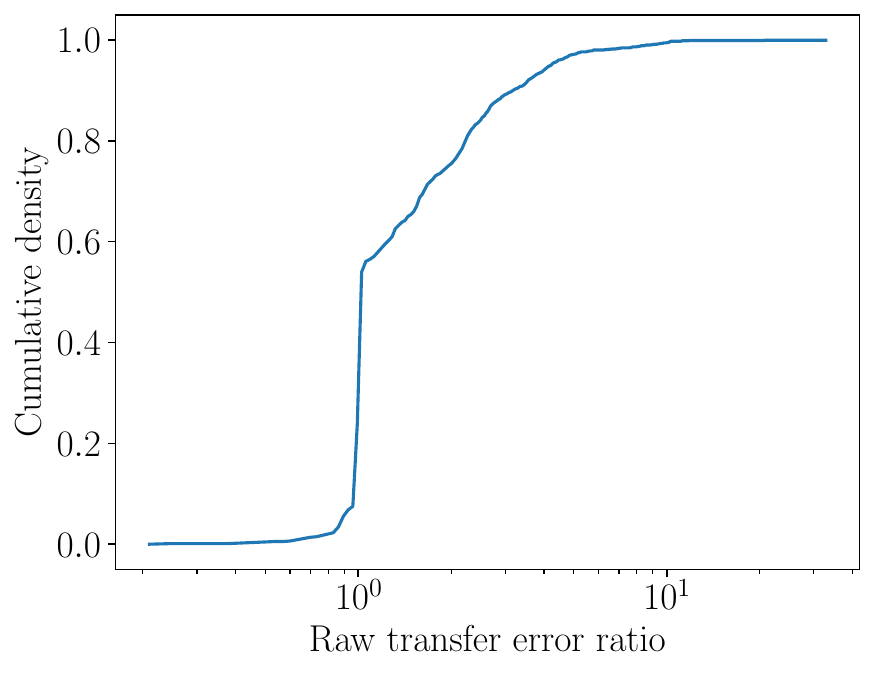}}\par 
\caption{Forecast intervals, density, and cdf for the ratio of the raw random forest transfer error to the raw CPT transfer error.}
\label{fig:Ratio}
\end{figure}

 Panel (a) of Figure \ref{fig:Ratio} reports 86\% two-sided forecast intervals for the raw transfer error ratio for each CPT specification.
  The lower bound for each CPT model is approximately $0.9$, while the upper bound is as large as 4.5.  Panel (b) of the figure is a histogram of raw transfer error ratios for the 4-parameter CPT model when the training domains $\T$ and the target domains $d$ are drawn uniformly at random from the set of domains in the meta-data.
This distribution has a large cluster of ratios around 1 (i.e., raw CPT transfer errors are similar to the raw random forest errors) and a long right tail of ratios achieving a max value of 32.8 (i.e., the random forest error can be up to 32 times as large as the CPT error). 
The cumulative distribution function of $e_{\T,d}$, reported in Panel (c) of Figure \ref{fig:Ratio}, shows that  the random forest algorithm outperforms CPT in approximately $35\%$ of $(\T,d)$ pairs, although CPT rarely has a much worse 
raw transfer error than the random forest and is sometimes much better. 

\subsection{Alternative Choice of $r$} \label{app:Altk}

Here we consider an alternative choice for the number of training domains, setting $r=3$ instead of $r=1$.

This corresponds to randomly choosing 3 of the 44 domains to be the training domains, finding the best prediction rule for this pooled data, and using the estimated prediction rule to predict the remaining 41 samples. For this analysis we use domain cross-validation to select tuning parameters for the black box algorithms, as described in Example \ref{ex:DomainCV}.

\begin{figure}[h]\centering
\subfloat[]{\label{fig:CI_76_3_domain_a}\includegraphics[width=.45\linewidth]{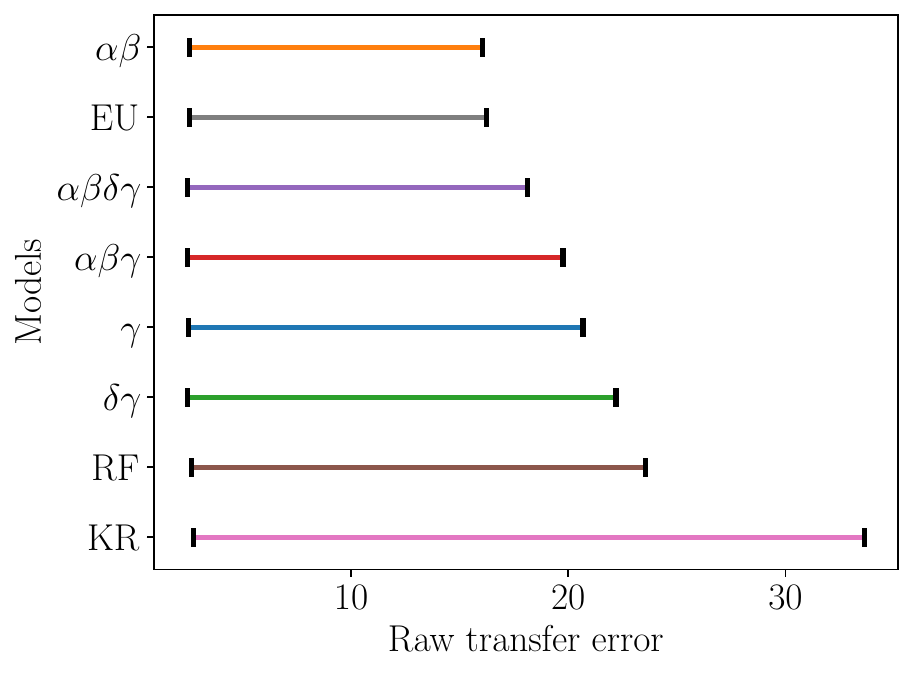}}\\
\subfloat[]{\label{fig:CI_76_3_domain_b}\includegraphics[width=.45\linewidth]{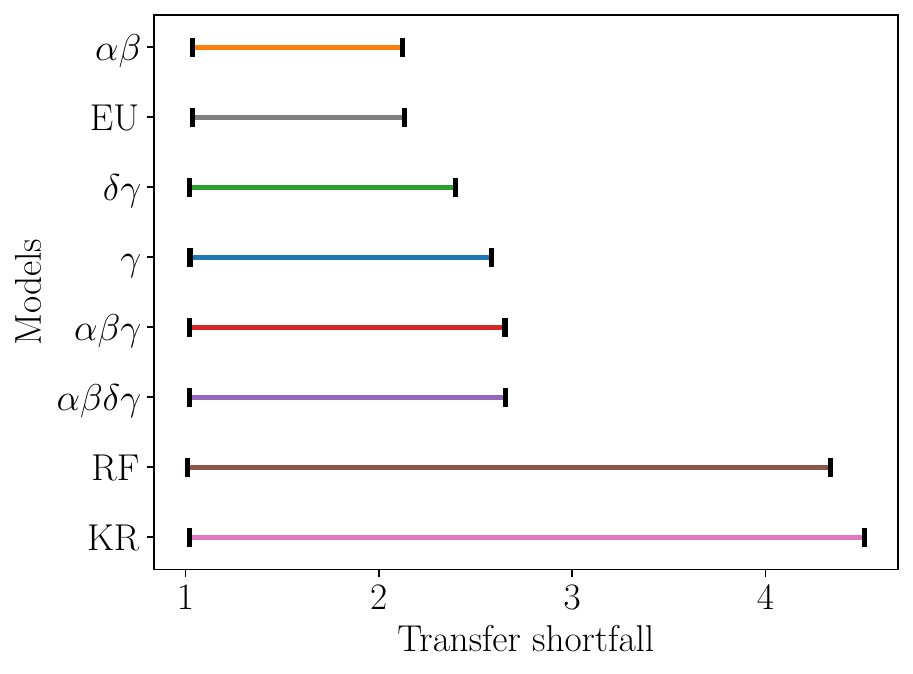}}\hfill
\subfloat[]{\label{fig:CI_76_3_domain_c}\includegraphics[width=.45\linewidth]{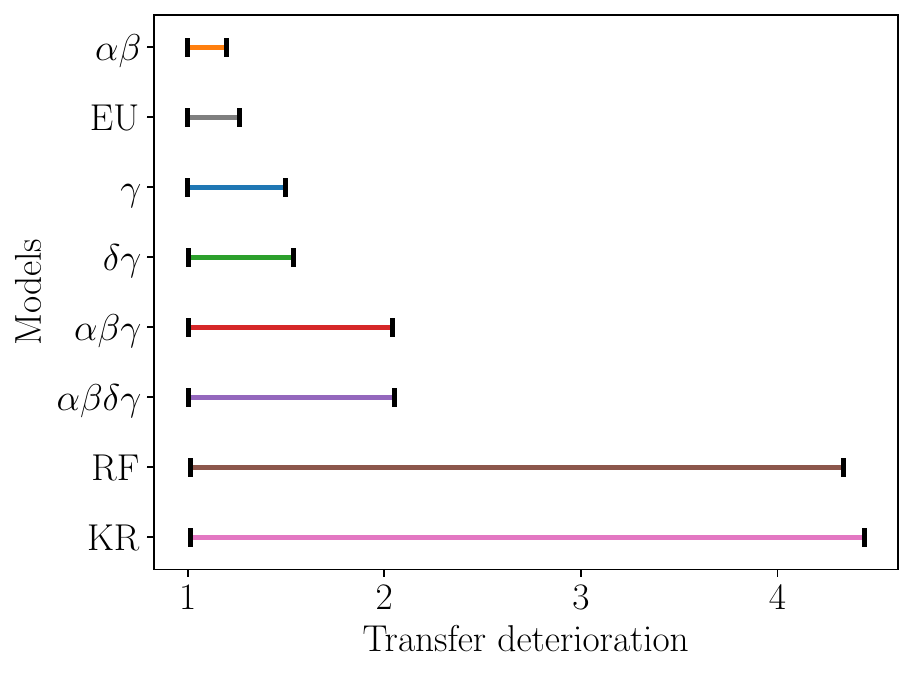}}\par 
\caption{82\% (n=44, $\tau=0.95$) forecast intervals for (a) raw transfer error, (b) transfer shortfall, and (c) transfer deterioration, with the choice of $r=3$.}
\label{fig:CI_76_3_domain}

\end{figure}

Figure \ref{fig:CI_76_3_domain} is the analog of Figure \ref{fig:CI_77}. Again we choose $\tau=0.95$, thus constructing forecast intervals whose lower bounds are the 5\% percentile of pooled transfer errors, and whose upper bounds are the 95\% percentile of pooled transfer errors. Applying Proposition \ref{prop:pooled, percentiles}, these are 82\% forecast intervals. The most notable change is that the random forest forecast interval shrinks considerably, which suggests that the raw transfer error of the random forest algorithm becomes less variable when it is trained on more domains. Otherwise, all of the qualitative statements in the main text for $r=1$ continue to hold. In particular, as with $r=1$, we find that the forecast intervals for all three of our measures have higher lower and upper bounds for the black box algorithms than for the CPT specifications.

\subsection{Supplementary Material to Section \ref{sec:Overfit}} \label{app:r=3or5}
Here we consider an alternative choice of for the number of training samples, setting $r=3$ and $r=5$ instead $r=1$. Recalling that each sample includes the observations associated with a unique lottery, 
this corresponds to randomly choosing three (or five) of the 24 lotteries for training, finding the best prediction rule for this pooled data, and using the estimated prediction rule to predict certainty equivalents for the remaining 21 (or 19) lotteries. We use domain cross-validation to select tuning parameters for the black box algorithms, as described in Example \ref{ex:DomainCV}.

Figure \ref{fig:lottery_transfer_r3} and Figure \ref{fig:lottery_transfer_r5} are the analog of Figure \ref{fig:lottery_transfer_r1}, with $r=3$ and $r=5$ respectively. We again choose $\tau=0.95$, thus constructing forecast intervals whose lower bounds are the 5\% percentile of pooled transfer errors, and whose upper bounds are the 95\% percentile of pooed transfer errors. Applying Proposition \ref{prop:pooled, percentiles}, these are 76\% for $r=3$ and 68\% for $r=5$ forecast intervals. The most notable change is that the forecast intervals shrink for all of the prediction methods, which suggests that the raw transfer error becomes less variable when it is trained on more lotteries. Otherwise, all of the qualitative statements in the main text for $r=1$ continue to hold, and in particular the economic models continue to transfer better than the black box algorithms do.

\begin{figure}[h]\centering
\subfloat[]{\label{fig:lottery_transfer_r3_a}\includegraphics[width=.45\linewidth]{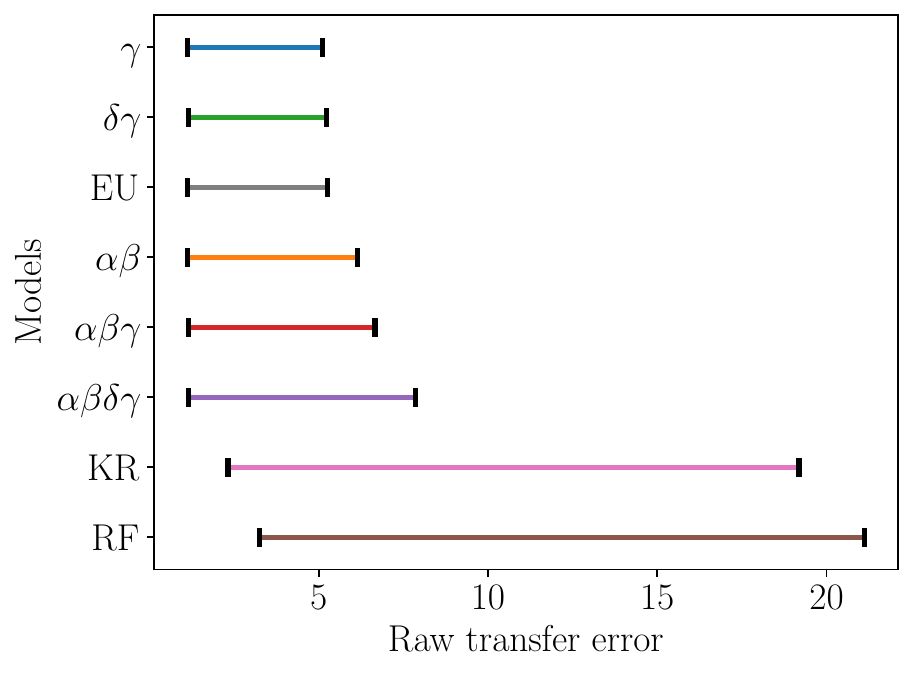}}\\
\subfloat[]{\label{lottery_transfer_r3_b}\includegraphics[width=.45\linewidth]{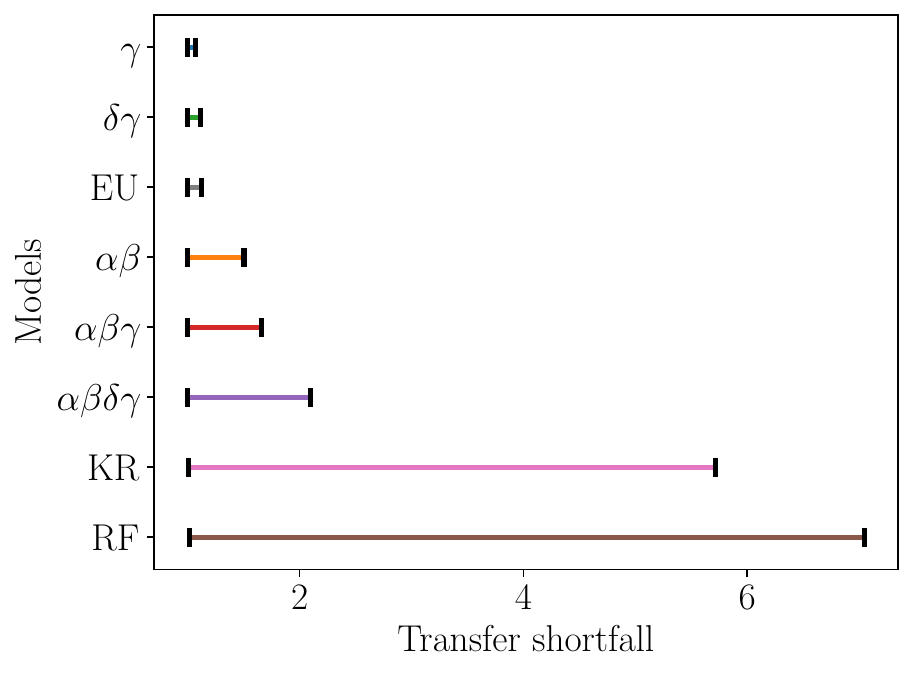}}\hfill
\subfloat[]{\label{fig:lottery_transfer_r3_c}\includegraphics[width=.45\linewidth]{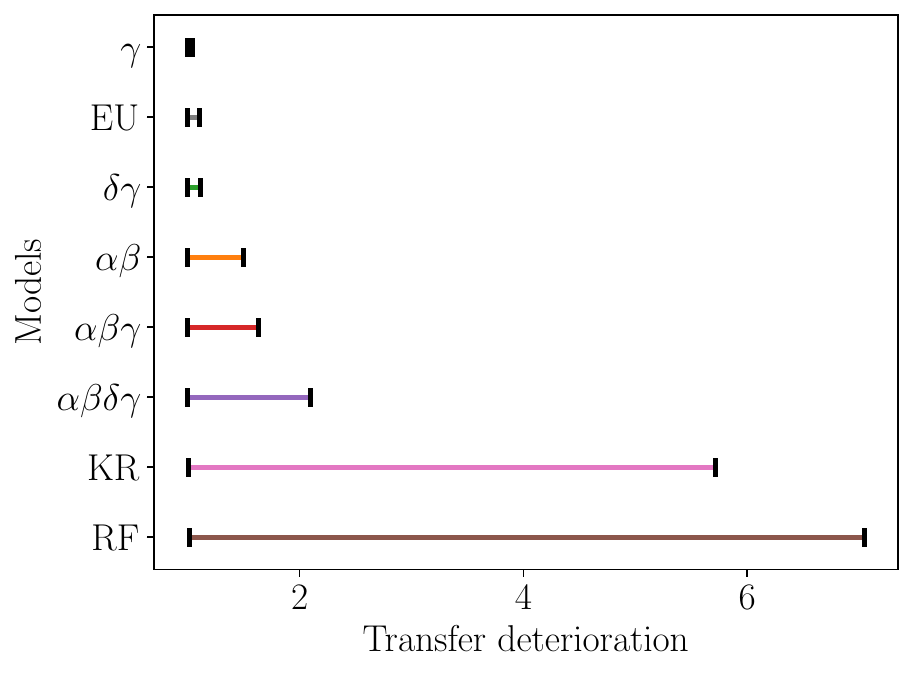}}\par 
\caption{76\% (n=24, $\tau$=0.95) forecast intervals using common lotteries in  \citet{l2019all}, with the choice of $r=3$.  }
\label{fig:lottery_transfer_r3}
\end{figure}

\begin{figure}[h]\centering
\subfloat[]{\label{fig:lottery_transfer_r5_a}\includegraphics[width=.45\linewidth]{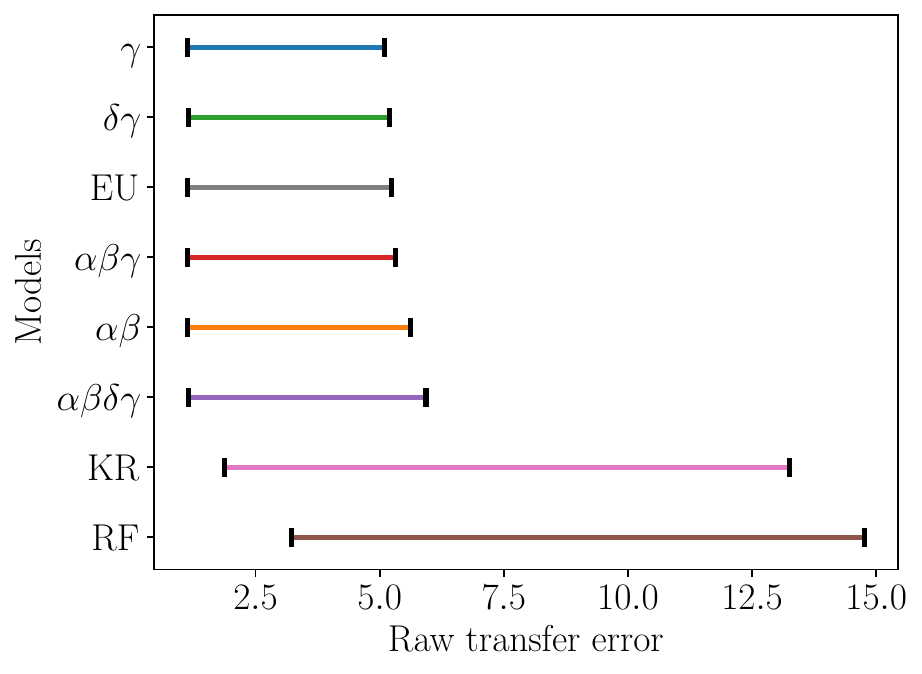}}\\
\subfloat[]{\label{fig:lottery_transfer_r5_b}\includegraphics[width=.45\linewidth]{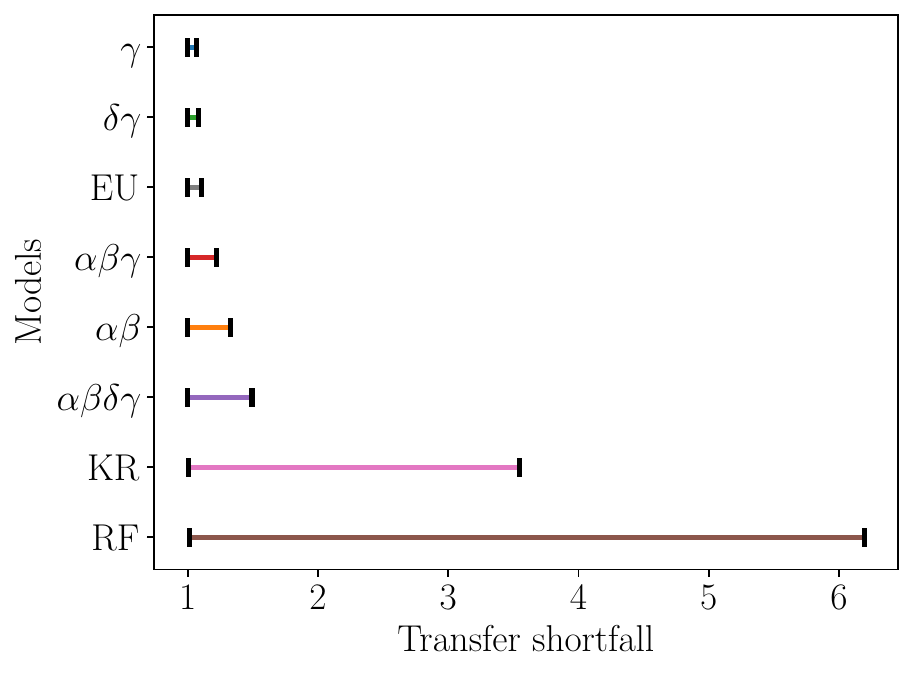}}\hfill
\subfloat[]{\label{fig:lottery_transfer_r5_c}\includegraphics[width=.45\linewidth]{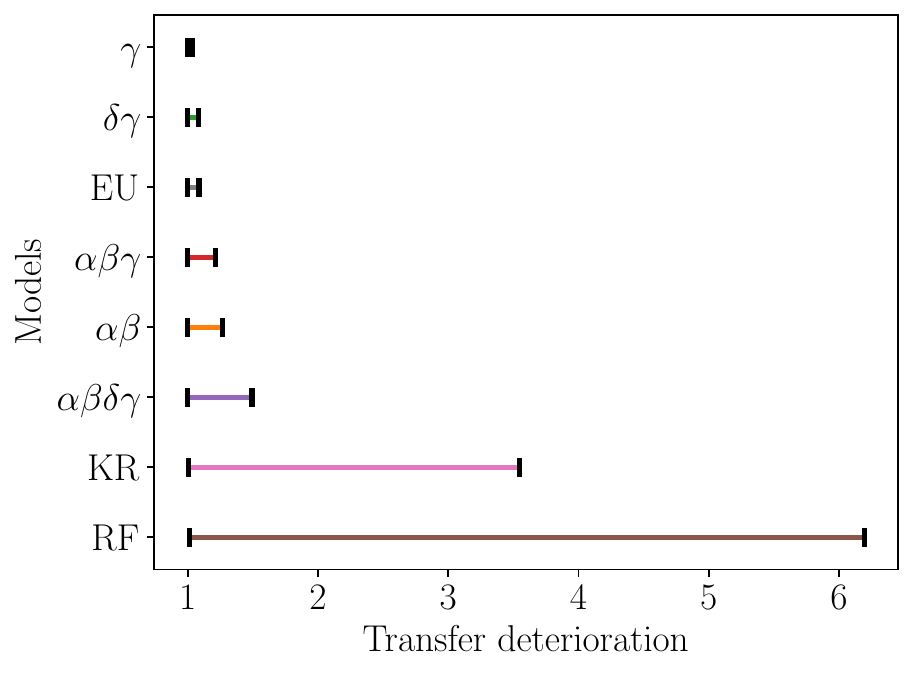}}\par 
\caption{68\% (n=24, $\tau$=0.95) forecast intervals using common lotteries in  \citet{l2019all}, with the choice of $r=5$.  }
\label{fig:lottery_transfer_r5}
\end{figure}

\subsection{More details on worst-case dominance}\label{app:worstcase_dominance}

Figures  \ref{fig:VaryGamma} and  \ref{fig:VaryTau} compare the worst case upper bound of the forecast intervals for CPT and RF for our three transfer measures as either $\gamma$ or $\tau$ varies. In each case the dominance relation is clear.

  \begin{figure}[h]
    \centering
    \subfloat{
    \includegraphics[scale=0.7]{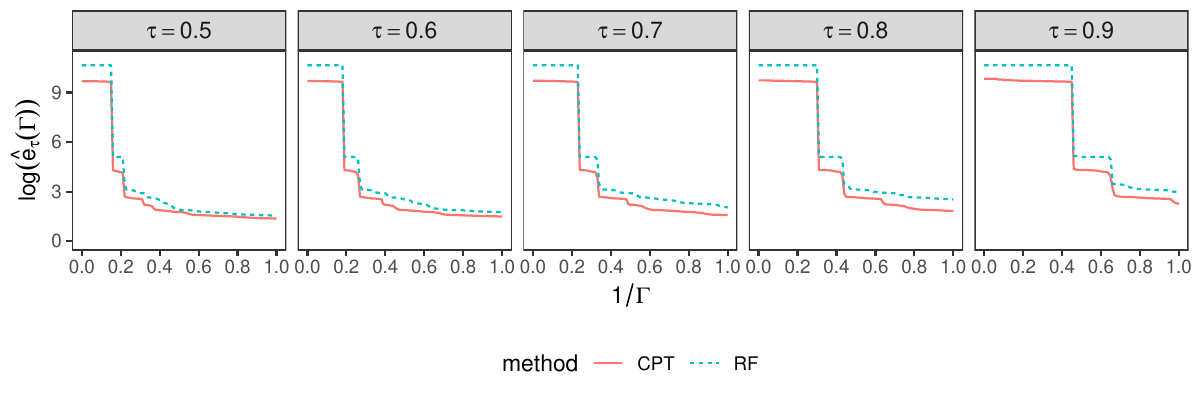}
    }\\
    \subfloat{
    \includegraphics[scale=0.7]{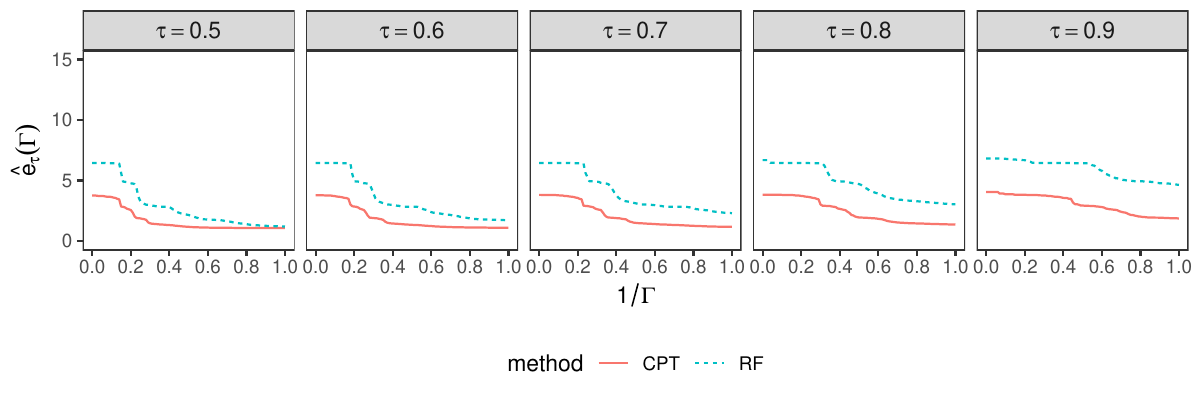}
    }\\
    \subfloat{
    \includegraphics[scale=0.7]{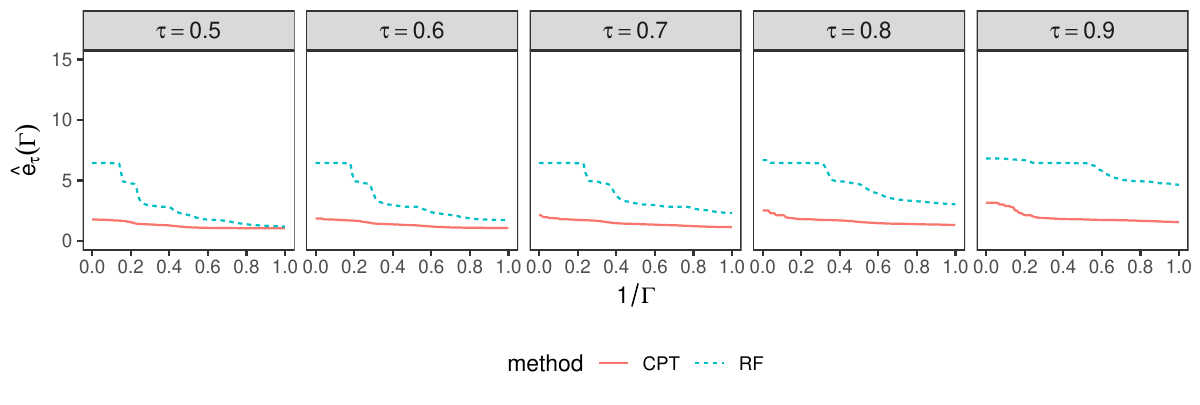}
    }
    \caption{The worst case upper prediction bound $\hat{e}_\tau(\Gamma)$ (as defined in (\ref{eq:WorstCase})) for (a) raw transfer error, (b) transfer shortfall, and (c) transfer deterioration of CPT and RF as a function of $\Gamma\in [1, \infty)$, discretized at $100/i (i = 0, 1, \ldots, 100)$, at different quantiles $\tau \in \{0.5, 0.6, 0.7, 0.8, 0.9\}$.} \label{fig:VaryGamma}
  \end{figure}

  \begin{figure}[h]
    \centering
     \subfloat{
    \includegraphics[scale=0.7]{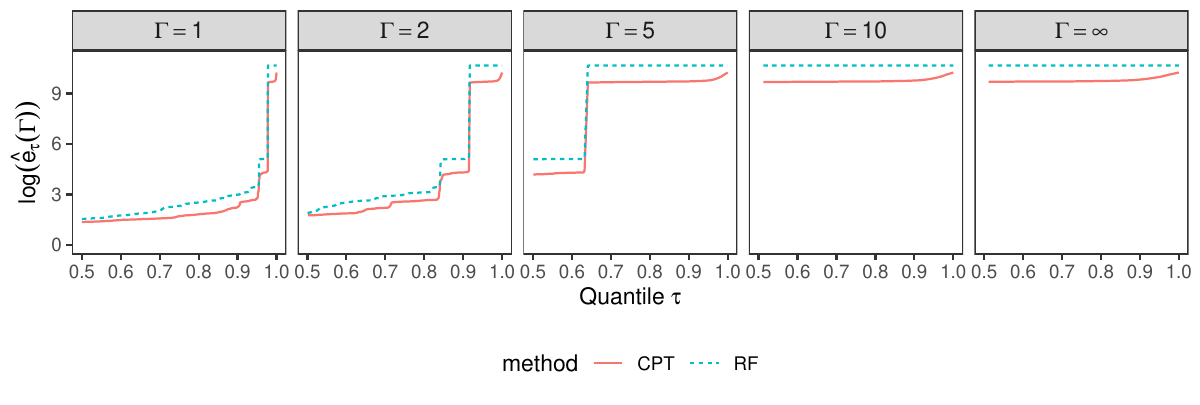}
    }\\
    \subfloat{
    \includegraphics[scale=0.7]{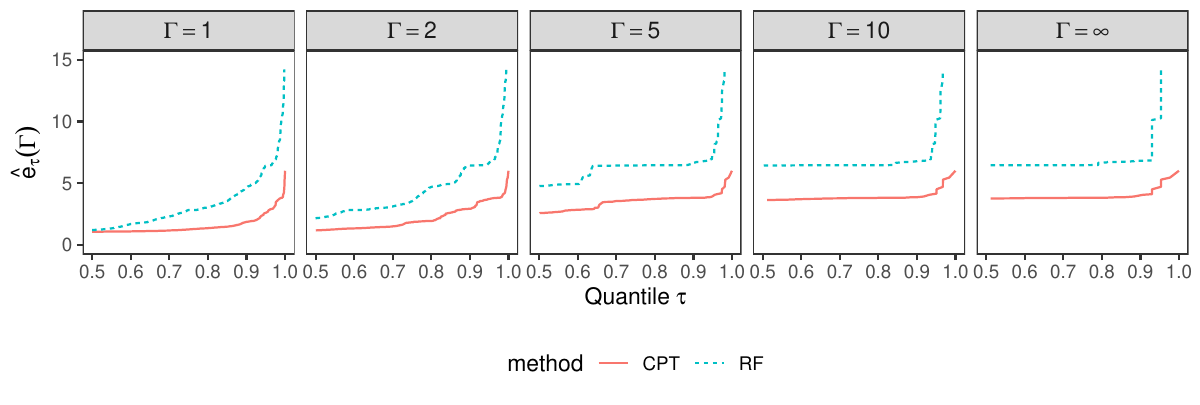}
    }\\
    \subfloat{
    \includegraphics[scale=0.7]{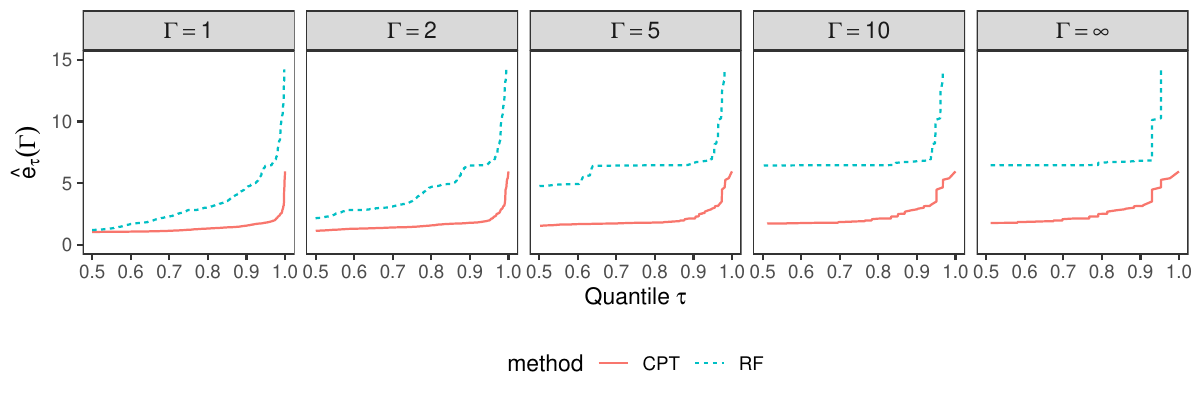}
    }
    \caption{The worst case upper prediction bound $\hat{e}_\tau(\Gamma)$ (as defined in (\ref{eq:WorstCase})) for (a) raw transfer error, (b) transfer shortfall, and (c) transfer deterioration of CPT and RF as a function of $\tau\in [0.5, 1]$ without discretization for $\Gamma \in \{1, 2, 5, 10, \infty\}$.}
    \label{fig:VaryTau}
  \end{figure}

\end{document}